\documentclass[onecolumn]{IEEEtran}

\addtolength{\topmargin}{9mm}

\usepackage[utf8]{inputenc} 
\usepackage[T1]{fontenc}
\usepackage{url}
\usepackage{ifthen}
\usepackage{cite}
\usepackage[cmex10]{amsmath} 
\usepackage{upgreek}
\usepackage{comment}
\usepackage{mathrsfs, mathtools, float}
\usepackage{amsfonts,dsfont,amssymb,amsthm,stmaryrd,bbm,color}
\usepackage{hyperref,xcolor}
\usepackage{graphicx}
\usepackage{mathabx}

\newtheorem{conj}{Conjecture}[section]
\newtheorem{thm}{Theorem}[section]

\newtheorem{lem}[conj]{Lemma}
\newtheorem{prop}[conj]{Proposition}

\newtheorem{defn}[conj]{Definition}
\newtheorem{coro}[conj]{Corollary}

\newcommand{\bens}{\begin{equation*}}
\newcommand{\eens}{\end{equation*}}

\usepackage{authblk}

\title{The Differential Entropy of Mixtures: New Bounds and Applications} 
\author[1]{James Melbourne
}
\author[1]{Saurav Talukdar
}
\author[1]{Shreyas Bhaban
}
\author[2]{Mokshay Madiman
}
\author[1]{Murti Salapaka
}
\affil[1]{Electrical and Computer Engineering, University of Minnesota}
\affil[2]{Department of Mathematical Sciences, University of Delaware}

\begin{document}
\date{December 17, 2019}

\maketitle

\begin{abstract}
 Mixture distributions are extensively used as a modeling tool in diverse areas from machine learning to communications engineering to physics, and
 obtaining bounds on the entropy of probability distributions is of fundamental importance in many of these applications. 
 This article provides sharp bounds on the entropy concavity deficit, which is the difference between the entropy of the mixture and 
 the weighted sum of entropies of constituent components. Toward establishing lower and upper bounds on the concavity deficit, 
 results that are of importance in their own right are obtained. In order to obtain nontrivial upper bounds, properties of the 
 skew-divergence are developed and notions of ``skew'' $f$-divergences are introduced;
 a reverse Pinsker inequality and a bound on  Jensen-Shannon divergence  are obtained along the way.  
Complementary lower bounds are derived with special attention paid to the case that corresponds to independent summation of a continuous and a discrete random variable.  
Several applications of the bounds are delineated, including to  mutual information of additive noise channels, 
thermodynamics of computation, and functional inequalities.
\end{abstract}

\section{Introduction}
Mixture models are extensively employed in diverse disciplines including genetics, biology, medicine, economics, speech recognition, as the distribution of a signal at the 
receiver of a communication channel when the transmitter sends a random element of a codebook, or in models of clustering or classification in machine learning 
(see, e.g.,  \cite{schlattmann2009medical, grandvalet2005semi}). A mixture model is described by a density of the form
$f=\sum_i p_i f_i(x)$,  where each $f_i$ is a probability density function and each $p_i$ is a nonnegative weight with $\sum_i p_i= 1$. 
Such mixture densities have a natural probabilistic meaning as outcomes of a two stage random process with the first stage being a random draw, $i$, 
from the probability mass function $p$, followed by choosing a real-valued vector $x$ following a distribution $f_i(\cdot)$; equivalently,  it is the density of $X+Z$, 
where $X$ is a discrete random variable taking values $x_i$ with probabilities $p_i$, and $Z$ is a dependent variable such that $\mathbb{P}(Z \in A  | X= x_i) = \int_A f_i(z-x_i) dz$.  
The differential entropy of this mixture is of significant interest. 

Our original motivation for this paper came from the fundamental study of thermodynamics of computation, in which memory models are 
well approximated by mixture models, while the erasure of a bit of information  is akin to the state described by a single unimodal density. 
Of fundamental importance here is the entropy of the mixture model which is used to estimate the thermodynamic change in entropy 
in an erasure process and for other computations \cite{landauer1961irreversibility,talukdar2017memory}. It is not possible, analytically, to determine 
the differential entropy of the mixture model $\sum p_if_i$, even in the simplest case where $f_i$ are normal distributions, and hence one is interested in refined 
bounds on the same. While this was our original motivation, the results of this paper are more broadly applicable and we strive to give general statements
so as not to limit the applicability.

For a random vector $Z$ taking values in $\mathbb{R}^d$ with probability density density $f$, 
the {\it differential entropy} is defined as $h(Z) = h(f)= -\int_{\mathbb{R}^d} f(z) \ln f(z) dz$, where
the integral is taken with respect to Lebesgye measure. We will frequently
omit the qualifier ``differential'' when this is obvious from context and simply call it the entropy.
It is to be noted that, unlike $h(\sum p_if_i)$, 
the quantity $ \sum_i p_i h(f_i)$ is more readily determinable and thus the {\it concavity deficit} $h(\sum p_if_i)-\sum p_i h(f_i)$ is of interest. 
This quantity can also be interpreted as a generalization of the Jensen-Shannon divergence \cite{BH09}, and its quantum analog 
(with density functions replaced by density matrices and Shannon entropy replaced by von Neumann entropy) is the Holevo information, 
which plays a key role in Holevo's theorem bounding the amount of accessible (classical) information in a quantum state \cite{holevo2012quantum}. 

It is a classical fact going back to the origins of information theory that the entropy $h$ is a concave function, which means that the concavity deficit
is always nonnegative:  
\begin{align} \label{eq:conc-basic}
h(f)-\sum_i p_i h(f_i) \geq 0. 
\end{align}
Let  $X$ be a random variable that takes values in a countable set where $X=x_i$ with probability $p_i$. 
Intimately related to the entropy $h$ of a mixture distribution $f=\sum p_i f_i$ and the concavity deficit are the quantities $H(p):=-\sum p_i \log p_i$, and the conditional entropies $h(Z|X)$ and $H(X|Z)$. 
Indeed, it is easy to show  an upper bound on the concavity deficit (see, e.g., \cite{WM2014beyond}) in the form
\begin{align} \label{eq: discrete-continuous entropy inequality}
     h\left( f \right)-  \sum_i p_i h(f_i) \leq H(p) ,
\end{align} 
which relates the entropy of continuous variable $Z$ with density $f=\sum_i p_i f_i$ to the entropy of a random variable that lies in a countable set.  
A main thrust  of this article will be to provide refined upper and lower bounds on the concavity deficit that improve upon the basic bounds \eqref{eq:conc-basic} 
and \eqref{eq: discrete-continuous entropy inequality}. 

The main upper bound we establish is inspired by bounds in the quantum setting developed by Audenaert \cite{audenaert2014quantum} 
and utilizes the total variation distance. 
Given two two probability densities $f_1$ and $f_2$  with respect to a common measure $\mu$,
the total variation distance between them is defined  as $\|f_1-f_2\|_{TV}=\frac{1}{2} \int |f_1-f_2|d\mu.$

We will state the following theorem in terms of the usual differential entropy, on Euclidean space with respect to the Lebesgue measure. Within the article, the statements and proofs will be given for a general Polish measure space $(E, \gamma)$ from which the result below can be recovered as a special case.

\begin{thm} \label{thm: discrete continuous entropy inequality sharpening}
    Suppose $f = \sum_i p_i f_i$, where $f_i$ are probability density functions on $\mathbb{R}^d,$ $p_i\geq 0$, $\sum_ip_i=1.$ 
    Define the  mixture complement of $f_j$ by $\tilde{f}_j(z) = \sum_{i \neq j} \frac{p_i}{1-p_j} f_i$. Then
    \bens
       h(f)- \sum_i p_i h(f_i) \leq \mathcal{T}_f H(p) 
    \eens
    where 
    \bens
        \mathcal{T}_f \coloneqq \sup_i  \|f_i - \tilde{f}_i\|_{TV}.
    \eens
\end{thm}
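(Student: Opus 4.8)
The plan is to reduce the concavity deficit to a weighted sum of relative entropies, recognize each summand as a skew divergence, and then bound that skew divergence by total variation times a logarithmic factor. Writing $D(g \,\|\, h) = \int g \ln(g/h)\,dz$ for the relative entropy, I would first record the identity
\begin{align*}
h(f) - \sum_i p_i h(f_i) = \sum_i p_i D(f_i \,\|\, f),
\end{align*}
which follows at once by expanding $h(f) = -\int \bigl(\sum_i p_i f_i\bigr)\ln f\,dz$ and $\sum_i p_i h(f_i) = -\sum_i p_i \int f_i \ln f_i\,dz$ and combining the two. The crucial structural observation is then that $f = p_i f_i + (1-p_i)\tilde f_i$, immediate from the definition of the mixture complement, so that with $\lambda_i := 1 - p_i$ each term becomes a skew divergence, $D(f_i \,\|\, f) = D\bigl(f_i \,\|\, (1-\lambda_i) f_i + \lambda_i \tilde f_i\bigr)$.

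The heart of the argument is the skew-divergence estimate
\begin{align*}
D\bigl(f \,\|\, (1-\lambda) f + \lambda g\bigr) \leq -\ln(1-\lambda)\,\|f - g\|_{TV}, \qquad \lambda \in [0,1),
\end{align*}
valid for any densities $f, g$. To establish it I would set $m_\lambda := (1-\lambda)f + \lambda g$ and $\phi(\lambda) := D(f \,\|\, m_\lambda)$, note $\phi(0) = 0$, and differentiate under the integral sign to obtain
\begin{align*}
\phi'(\lambda) = \int \frac{f\,(f-g)}{m_\lambda}\,dz.
\end{align*}
Since $g \geq 0$ gives the pointwise bound $m_\lambda \geq (1-\lambda)f$, I would discard the nonpositive contribution from $\{f < g\}$ and use this bound on $\{f > g\}$ to get
\begin{align*}
\phi'(\lambda) \leq \frac{1}{1-\lambda}\int_{\{f > g\}} (f - g)\,dz = \frac{\|f-g\|_{TV}}{1-\lambda},
\end{align*}
the final equality being the standard identity $\int_{\{f>g\}}(f-g)\,dz = \|f-g\|_{TV}$ (which holds because $\int (f-g)\,dz = 0$). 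Integrating this differential inequality from $0$ to $\lambda$ produces exactly the claimed logarithmic factor.

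Finally I would apply the estimate termwise with $\lambda_i = 1 - p_i$, so that $-\ln(1-\lambda_i) = -\ln p_i$ and $g = \tilde f_i$, yielding $D(f_i \,\|\, f) \leq (-\ln p_i)\,\|f_i - \tilde f_i\|_{TV}$. Summing against the weights $p_i$ and bounding each $\|f_i - \tilde f_i\|_{TV}$ by the supremum $\mathcal{T}_f$ gives
\begin{align*}
\sum_i p_i D(f_i \,\|\, f) \leq \sum_i p_i (-\ln p_i)\,\|f_i - \tilde f_i\|_{TV} \leq \mathcal{T}_f \sum_i p_i(-\ln p_i) = \mathcal{T}_f\, H(p),
\end{align*}
which is the asserted inequality. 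The main obstacle is the skew-divergence estimate itself: beyond the measure-theoretic care needed to justify differentiating under the integral and to handle the behaviour of $m_\lambda$ near the boundary of the support of $f$, the delicate point is the \emph{asymmetric} treatment of $\{f > g\}$ and $\{f < g\}$, since it is precisely dropping the $\{f<g\}$ part and invoking $m_\lambda \ge (1-\lambda)f$ that simultaneously produces the factor $1/(1-\lambda)$ and makes the total-variation distance appear. Everything downstream of this lemma, including passing to a general Polish space $(E,\gamma)$ in place of $\mathbb{R}^d$, I expect to be routine.
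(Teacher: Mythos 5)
Your proposal is correct and follows essentially the same route as the paper: the paper likewise writes the concavity deficit as $\sum_i p_i D(f_i \,\|\, f)$ via its compensation identity, recognizes each term as the skew divergence $S_{p_i}(f_i \,\|\, \tilde f_i)$, and proves exactly your key estimate $S_t(\mu\,\|\,\nu) \leq (-\log t)\,\|\mu-\nu\|_{TV}$ (Theorem \ref{thm: skews bounded by total variation}) by differentiating in the skew parameter, bounding the derivative by $\|\mu-\nu\|_{TV}/(1-s)$ through the same asymmetric trick of discarding the $\{f<g\}$ part and using $m_\lambda \geq (1-\lambda)f$, and integrating. The only cosmetic differences are that the paper names the derivative as a skew $\chi^2$ divergence and integrates after the substitution $t = e^{-\lambda}$, whereas you integrate $1/(1-s)$ directly; the final weighted summation against $H(p)$ is identical.
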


Theorem~\ref{thm: discrete continuous entropy inequality sharpening} shows that as distributions cluster in total variation distance, the concavity deficit vanishes.   
The above result thus considerably reduces the conservativeness of the upper bound on the concavity deficit given by \eqref{eq: discrete-continuous entropy inequality}. 
Indeed consider the following example with  $f_1(z) = e^{-(z-a)^2/2}/\sqrt{2\pi}$ and $f_2(z) = e^{-(z+a)^2/2}/{\sqrt{2\pi}}$.  By \eqref{eq: discrete-continuous entropy inequality}, for $p \in (0,1)$,
\bens
     h(p f_1 + (1-p)f_2) \leq H(p) + \frac 1 2 \log 2 \pi e.
\eens
 However, noting that,  $\tilde{f}_1 = f_2$ and $\tilde{f}_2 = f_1$ implies,
 \bens\begin{split}
     \mathcal{T}_f 
        &= 
            \|f_1 - f_2\|_{TV}
                \\
        &=
            \int_0^\infty \left(e^{-(z-a)^2/2}/\sqrt{2 \pi} - e^{-(z+a)^2/2}/\sqrt{2 \pi} \right)dz
                \\
        &=
            \Phi(a) - \Phi(-a),
 \end{split}\eens
 where $\Phi$ is the standard normal distribution function
$
     \Phi(t) \coloneqq \int_{-\infty}^t e^{x^2/2}/\sqrt{2 \pi} dx.
$
Since $\Phi(a) - \Phi(-a) \leq a \sqrt{2/\pi}$, Theorem \ref{thm: discrete continuous entropy inequality sharpening} gives
\begin{align}
    h(p f_1 + (1-p)f_2) \leq a \sqrt{\frac{2}{\pi}}  H(p) + \frac 1 2 \log 2 \pi e.
\end{align}
 Another interpretation of Theorem \ref{thm: discrete continuous entropy inequality sharpening}, is as a generalization of the classical bounds on the Jensen-Shannon divergence by the total variation distance \cite{lin1991divergence, topsoe2003jenson}, which is recovered by taking $p_1 = p_2 = \frac 1 2$, see Corollary \ref{cor: Topsoe and Lin}.  Let us point out that $\Phi(a) - \Phi(-a) = 1 - \mathbb{P}(|\mathcal{Z}| > a)$ where $\mathcal{Z}$ is a standard normal variable, thus an alternative representation of these bounds in this special case is as mutual information bounds
 \begin{align} \label{eq: first mutual information bounds}
      I(X; \mathcal{Z}) \leq H(X) - \mathbb{P}(|\mathcal{Z}| > a ) H(X)
 \end{align}
 where $X$ denotes an independent Bernoulli taking the values $\pm a$ with probability $p$ and $1-p$.

 
 The methods and technical development  toward  establishing Theorem \ref{thm: discrete continuous entropy inequality sharpening} are of independent interest. 
We develop a notion of skew $f$-divergence for general $f$-divergences generalizing the skew divergence (or skew relative entropy) introduced by Lee \cite{lee1999measures}, 
 and in Theorem \ref{thm: skewing preserves f divergence} show that the class of $f$-divergences is stable under the skew operation.  
 After proving elementary properties of the skew relative entropy in Proposition \ref{prop: divergence skew properties} and 
 an introduced skew chi-squared divergence in Proposition \ref{prop: chi skew properties}, we adapt arguments due to Audenaert \cite{audenaert2014quantum} 
 from the quantum setting to prove the two $f$-divergences to be intertwined through a differential equality and that the 
 classical upper bound of the relative entropy by the chi-square divergence can be generalized to the skew setting 
 (see Theorem \ref{thm: skew chi squared related to skew relative entropy}).  After further bounding the skew chi-square divergence 
 by the total variation distance, we integrate the differential equality and obtain a bound of the skew divergence by the total variation 
 in Theorem \ref{thm: skews bounded by total variation}.  As a corollary we obtain a reverse Pinsker inequality due to Verdu \cite{verdu2014total}.  
 With these tools in hand, Theorem \ref{thm: discrete continuous entropy inequality sharpening} is proven and we demonstrate that the 
 bound of the Jensen-Shannon divergence by total variation distance \cite{lin1991divergence, topsoe2003jenson} is an immediate special case.

In the converse direction that provides lower bounds on the concavity deficit, our main result applies to the case where 
all the component densities come from perturbations of a random vector $W$ in $\mathbb{R}^d$ that has a log-concave and spherically symmetric distribution. 
 We say a random vector $W$ has a  {\it log-concave} distribution when it possesses a density $\varphi$ satisfying $\varphi((1-t)z+ty) \geq \varphi^{1-t}(z) \varphi^t(y)$.
 We say $W$ has a  {\it spherically symmetric} distribution when there exists $\psi:\mathbb{R}\rightarrow [0,\infty)$ such that
 the density $\varphi(z)=\psi(|z|)$ for every $z\in \mathbb{R}^d$, where $|z| \coloneqq \sqrt{z_1^2+z_2^2+\ldots z_d^2}$. 
 We employ the notation $B_\lambda = \{x\in \mathbb{R}^d : |x| \leq \lambda \}$ for the centered closed ball of radius $\lambda$ in $\mathbb{R}^d$, 
$\mathscr{T}(t) \coloneqq \mathscr{T}_W(t) \coloneqq \mathbb{P}(|W| > t)$ for the tail probability of $W$, and 
$\|A\|:=\sup_{\|w\|_2=1}\|Aw\|_2$ for the operator norm of a matrix $A:\mathbb{R}^d\rightarrow \mathbb{R}^d$.
\begin{thm} \label{thm: Fano Sharpening simple}
    Suppose that there exists $\tau \geq 1$ such that for each $x \in \mathcal{X}$, $Z|X=x$ has distribution given by $T_x(W)$ 
    where $W$ has density $\varphi$, spherically symmetric and log-concave and $T_x$ is a $\sqrt{\tau}$ bi-Lipschitz function.  
    For $i,k \in \mathcal{X}$, take $T_{ij} \coloneqq T_i^{-1} \circ T_j$ and further assume there exists $\lambda >0$ such that for 
    any $k \neq i$ $ T_{ij}(B_\lambda) \cap  T_{kj}(B_\lambda) = \emptyset$.  Then 
    \begin{align}
        h(Z) -  h(Z|X) \geq H(X) - \tilde{C}(W),
    \end{align}
    where $\tilde{C}$ is the following function dependent heavily on the tail behavior of $|W|$,
    \begin{align}
        \tilde{C}(W) =  \mathscr{T}(\lambda)(1 + h(W))  +  \mathscr{T}^{\frac 1 2}(\lambda) (\sqrt{d} + K(\varphi))
    \end{align}

with
\begin{align} \label{eq: definition of K(varphi)}
    K(\varphi) \coloneqq 
        \log \left[ \tau^d  \left( \|\varphi\|_\infty + \left(\frac{3}{\lambda} \right) \omega_d^{-1} \right) \right] \mathbb{P}^{\frac 1 2}(|W| > \lambda) + d \left( \int_{B_\lambda^c} \varphi(w)\log^{2}\left[ 1 + \tau + \frac{\tau^2 |w|}{ \lambda}\right] dw \right)^{\frac 1 2}
\end{align}
where $\omega_d$ denoting the volume of the $d$-dimensional unit ball, $B_\lambda^c$ denotes the complement of $B_\lambda \in \mathbb{R}^d$.  
\end{thm}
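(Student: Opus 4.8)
The plan is to recognize the left-hand side as a mutual information and reduce the theorem to a Fano-type estimate. Since $X$ is discrete and $Z$ continuous, $h(Z)-h(Z|X)=I(X;Z)=H(X)-H(X|Z)$, so the claimed inequality is equivalent to the posterior-uncertainty bound $H(X\mid Z)\le \tilde C(W)$. The whole difficulty is that $\mathcal X$ may be countably infinite, so the classical Fano penalty $\log(|\mathcal X|-1)$ is useless; the point of the theorem is that the separation hypothesis together with the tail decay of the log-concave density $\varphi$ permit replacing this penalty by the finite geometric/tail quantity $\tilde C(W)$.

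First I would build a decoder. Having observed $Z=z$, the hypothesis $X=x_i$ is consistent with the reference point $T_i^{-1}(z)$, so I decode $\hat X=x_i$ when $T_i^{-1}(z)\in B_\lambda$, that is $z\in T_i(B_\lambda)$; the disjointness assumption on the sets $T_{ij}(B_\lambda)$ makes these regions mutually exclusive, so $\hat X$ is well defined and, conditioned on $X=x_j$, one decodes correctly whenever $W\in B_\lambda$. Hence the error probability obeys $P_e\le \mathscr{T}(\lambda)$. Next, at each $z$ I would apply the grouping identity to the posterior: writing $P_{\max}(z)=\max_i \P(X=x_i\mid Z=z)$ and letting $\tilde H(z)$ be the entropy of the posterior restricted to the non-modal indices, \[ H(X\mid Z=z)=h_b(P_{\max}(z))+(1-P_{\max}(z))\,\tilde H(z), \] where $h_b$ is the binary entropy. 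Taking expectations over $Z$ and using $\E[1-P_{\max}]\le P_e\le \mathscr{T}(\lambda)$ together with concavity of $h_b$ controls the modal term, while the residual term I would split off by Cauchy--Schwarz, \[ \E\big[(1-P_{\max})\tilde H\big]\le \mathscr{T}^{1/2}(\lambda)\,\E\big[(1-P_{\max})\tilde H^2\big]^{1/2}, \] which is the origin of the factor $\mathscr{T}^{1/2}(\lambda)$ and of the squared logarithms inside $K(\varphi)$.

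The heart of the proof, and the step I expect to be hardest, is bounding the residual second moment $\E[(1-P_{\max})\tilde H^2]$, i.e. controlling the posterior entropy over a potentially infinite alphabet. Here I would use the explicit posterior $\P(X=x_i\mid Z=z)\propto p_i\,\varphi(T_i^{-1}(z))\,|\det DT_i^{-1}(z)|$ and the $\sqrt\tau$ bi-Lipschitz property, which pins the Jacobian into $[\tau^{-d/2},\tau^{d/2}]$ and, via the disjointness of the $T_i(B_\lambda)$ and spherical symmetry of $\varphi$, forces the reference points of all hypotheses competing at a given $z=T_j(w)$ to be packed into a ball of radius comparable to $\tau^2|w|$. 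A volume count then shows the number of such competitors, hence $\tilde H$, grows only like $\log[1+\tau+\tau^2|w|/\lambda]$, the constants $\tau^d$, $\|\varphi\|_\infty$, $\omega_d^{-1}$ and $3/\lambda$ recording the packing and density bounds exactly as they appear in $K(\varphi)$. Changing variables $Z=T_X(W)$ back to $W$ converts the expectation into $\int_{B_\lambda^c}\varphi(w)\log^2[1+\tau+\tau^2|w|/\lambda]\,dw$, the tail term of $K(\varphi)$, while a complementary quantization estimate, bounding the discrete entropy of the region into which $Z$ falls by the differential entropy of its bi-Lipschitz preimage, produces the $1+h(W)$ and $\sqrt d$ contributions.

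Two uses of the log-concave, spherically symmetric structure make the bookkeeping close. First, these hypotheses give monotone, exponentially decaying tails $\mathscr{T}$, a finite $\|\varphi\|_\infty$, and validity of the differential-entropy comparison, and they guarantee that the $\log^2$-moment against the tail of $\varphi$ converges so that $K(\varphi)$ is finite. Second, and this is the genuinely delicate point, geometric separation must be leveraged to keep the number of confusable hypotheses at any $z$ growing only polynomially in $|w|$, so that the residual entropy is merely logarithmic and is integrable against the exponentially small tail; without the separation hypothesis this term would be uncontrolled. Assembling the modal term $h_b(\mathscr{T}(\lambda))$, the Cauchy--Schwarz term carrying $\mathscr{T}^{1/2}(\lambda)(\sqrt d+K(\varphi))$, and the quantization term carrying $\mathscr{T}(\lambda)(1+h(W))$, and collecting constants, yields $H(X\mid Z)\le \tilde C(W)$, which is equivalent to the stated bound.
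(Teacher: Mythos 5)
Your reduction to the posterior bound $H(X|Z)\leq \tilde{C}(W)$ is exactly the paper's starting point, and your ball decoder with error $P_e \leq \mathscr{T}(\lambda)$ is sound; but the heart of your argument has a gap that, as written, is fatal. You bound the residual posterior entropy $\tilde{H}(z)$ by the logarithm of the number of ``competitors'' obtained from a volume count. When $\mathcal{X}$ is countably infinite, every hypothesis carries strictly positive posterior mass at every $z$ (a log-concave $\varphi$ need not vanish anywhere), so $\tilde{H}(z)$ is the entropy of an infinite-support distribution, and no finite packing count of nearby reference points bounds it: the far tail of the posterior could in principle contribute unboundedly. This is precisely the infinite-alphabet failure of Fano's inequality that you yourself flag in your opening paragraph. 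The paper never bounds a discrete entropy by a cardinality. Instead it expands $H(X|Z)$ exactly as $\sum_i p_i \int \varphi(w)\log\bigl(1+\sum_{j\neq i}p_j\varphi(T_{ji}(w))\det T_{ji}'(w)/(p_i\varphi(w))\bigr)dw$, merges the outer average into the logarithm by Jensen, and splits \emph{spatially} over $B_\lambda$ versus $B_\lambda^c$: on $B_\lambda$, the bound $\log(1+x)\leq x$ plus disjointness of the sets $T_{ji}(B_\lambda)$ yields the contribution $\mathscr{T}(\lambda)$ (the ``$1$'' in $1+h(W)$), while on $B_\lambda^c$ the full \emph{infinite sum} $\sum_i \varphi(T_{ij}(w))\det T_{ij}'(w)$ is bounded pointwise by Lemma \ref{lem: sums bounded at well spaced points}, Lemma \ref{lem: counting points in balls}, and Corollary \ref{cor: bounds for the sums} --- a Riemann-sum comparison of well-spaced samples against the radially decreasing profile of $\varphi$, whose convergence is what replaces any cardinality count. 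Your packing intuition is the correct geometric input, but it must be applied to control the sum, not a count.

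A second, smaller gap: you attribute the $\mathscr{T}(\lambda)h(W)$ and $\mathscr{T}^{1/2}(\lambda)\sqrt{d}$ terms to ``a complementary quantization estimate,'' which is not a mechanism that produces them. In the paper, on $B_\lambda^c$ one writes $\log(1+\cdot)=\log\bigl(\sum_j p_j\sum_i \varphi(T_{ji}(w))\det T_{ji}'(w)\bigr)-\log\varphi(w)$ (using $T_{jj}(w)=w$), and the term $-\int_{B_\lambda^c}\varphi\log\varphi$ is bounded by
\begin{align}
  -\int_{B_\lambda^c}\varphi(w)\log\varphi(w)\,dw \leq h(W)\,\mathscr{T}(\lambda)+\sqrt{d}\;\mathscr{T}^{\frac 1 2}(\lambda)
\end{align}
via Cauchy--Schwarz together with the varentropy concentration bound $\int\bigl(\log\frac{1}{\varphi}-h(\varphi)\bigr)^2\varphi \leq d$ for log-concave densities (Theorem \ref{thm: varentropy bound}, from Fradelizi--Madiman--Wang). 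That concentration-of-information input is indispensable for the dimensional constant $\sqrt{d}$ and is absent from your sketch. Incidentally, the decoder and the grouping identity are scaffolding the paper does not need --- once you work with the exact integral representation of $H(X|Z)$ and split over $B_\lambda$ versus $B_\lambda^c$, the factor $\mathscr{T}^{1/2}(\lambda)$ arises from Cauchy--Schwarz on the tail region directly, just as in your proposal.
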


We note the quantity $H(X|Z)$ connotes the uncertainty in the discrete variable $X$ conditioned on the continuous variable $Z;$ such a quantity needs to be defined/determined from the knowledge of probabilities, $p_i,$  that the discrete variable $X=x_i$ and the description of the conditional probability density function $p(z|x_i)=f_i(z).$ These notions are made precise in Section~\ref{sec: Notation}. Here, it is also established that   \eqref{eq: discrete-continuous entropy inequality} can be equivalently formulated as $H(X|Z) \leq H(X)$ for a particular coupling of a discrete variable $X$ taking values with probabilities $\{p_i\}$, and a variable $Z$ with density $f_i$ when conditioned on  $X = x_i$.  From this perspective the super-concavity bound of Theorem \ref{thm: Fano Sharpening} gives $H(X|Z) \leq \tilde{C}(W)$.  One should also note that when $\{p_i\}_{i=1}^n$ is a finite sequence, the classical bounds on $H(X|Z)$ are provided by Fano's inequality:  for a Markov triple of random variables $X \to Z \to \hat{X}$, and $e =\{ X \neq \hat{X} \}$,
\begin{align}\label{eq: The FAno's inequality}
    H(X|Z) \leq H( e) + \mathbb{P}(e) \log( \#\mathcal{X} - 1),
\end{align}
where $\#\mathcal{X}$ denotes the cardinality of the set $\mathcal{X},$
gives a strengthening of concavity in many situations, where we have employed the notation for a measurable set $A$, $H(A) =  - \mathbb{P}(A) \log \mathbb{P}(A) - (1-\mathbb{P}(A)) \log (1- \mathbb{P}(A))$.  It yields
\begin{align}
    h(f) \geq  \sum_{i=1}^n p_i h(f_i) +  H(p) - \left(H( e) + \mathbb{P}(e) \log( \#\mathcal{X} - 1) \right).
\end{align}
To compare the strength of the bounds derived in Theorem \ref{thm: Fano Sharpening} to Fano's we compare $H( e) + \mathbb{P}(e) \log( \#\mathcal{X} - 1)$ and $\tilde{C}(W)$; as is established in Section~\ref{sec: Upper bounds}, even in simple cases, $\tilde{C}(W)$ can be arbitrarily small even while $\min_{\hat{X}} H(e) + \mathbb{P}(e) \log( \#\mathcal{X}-1)$ is arbitrarily large.  \\

The study of entropy of mixtures has a long history and is scattered in a variety of papers that often have
other primary emphases. Consequently it is difficult to exhaustively review all related work. Nonetheless, the references that we were able
to find that attempt to obtain refined bounds under various circumstances are \cite{HBDH08, AB11:isit, BM11:it, KHP11, MK16, NN17}.
In all of these papers, however, either the bounds deal with specialized situations, or with a general setup but employing different
(and typically far more) information than we require. We emphasize that our bounds deal with general multidimensional situations
(including in particular multivariate Gaussian mixtures, which are historically and practically of high interest) 
and in that sense go beyond the previous literature.

The article is organized as follows.    In Section \ref{sec: Notation} we will give notation and preliminaries, where we delineate definitions and relationship for  entropies, conditional entropies emphasizing  a mix of continuous and discrete variables.   Section \ref{sec: Lower Bounds} is devoted to the proof of Theorem \ref{thm: discrete continuous entropy inequality sharpening} (a preliminary verison has  appeared in the conference paper \cite{melbourne2019relationships}).  In Section \ref{sec: Upper bounds} we prove Theorem \ref{thm: Fano Sharpening}.  These results give a considerable generalization of earlier work of authors in \cite{melbourne2018isitLandauer, talukdar2018analysis}. The result will hinge on a Lemma \ref{lem: sums bounded at well spaced points} bounding the sum 
$\sum_i \varphi(x_i)$ for $x_i$ well spaced and $\varphi$ log-concave and spherically symmetric, and a concentration result from convex geometry \cite{FMW16}.  
We close discussing bounds of $\mathbb{P}(|W| \geq t)$ in the case that $W$ is log-concave and strongly log-concave, see Corollary \ref{cor: tail bounds for log-concave and strongly}.
Section \ref{sec: applications} we demonstrate applications of the theorems to a diverse group of problems; hypothesis testing, capacity estimation, nanoscale energetics, and functional inequalities.

\section{Notation and Preliminaries} \label{sec: Notation}
In this part of the article, we will elucidate definitions and results for conditional entropy and mutual information when a mix of discrete valued and continuous random variables are involved.  We will assume that 
\begin{enumerate}
    \item 
    $X$ takes values in a discrete countable set $\mathcal{X}$, and that $\mathbb{P}(X = x) = p_x >0$.
    \item
    $Z$ is a random variable which takes values in a Polish space $E$.  The conditional distribution is described by $\mathbb{P}(Z \in A| X=x) = \int_A f_x(z) d\gamma(x)$ where $f_x(z)$ is a density function with respect to a Radon reference measure $\gamma$.
\end{enumerate}
We will denote by $m$ the counting measure on $\mathcal{X}$, so that for $A \subseteq \mathcal{X}$, the measure of $A$ is  its cardinality, 
\begin{align}
    m(A) = \#(A).
\end{align}  
Integration with respect to the counting measure, corresponding to summation; for $g: \mathcal{X} \to \mathbb{R}$ such that $\sum_{x \in \mathcal{X}} |g(x)| < \infty$, 
\begin{align}
    \int_\mathcal{X} g(x) dm  := \sum_{x \in \mathcal{X}} g(x).
\end{align}

We will denote by $dm \hspace{1mm} d\gamma$ the product measure on $\mathcal{X} \times E$ where,  for $A \subseteq \mathcal{X}$ and measurable $B \subseteq \mathbb{R}^d$,
\begin{align}
    \int_{\mathcal{X} \times E} \mathbbm{1}_{A \times B}(x,z) dm(x) \hspace{1mm} d\gamma(z)
        \coloneqq m(A) \gamma(B),
\end{align}
when $\gamma$ denotes the $d$-dimensional Lebesgue measure, we will use $|B|_d$ or $|B|$ when there is no risk of confusion to denote the Lebesgue volume of a measureable set $B$.  For measures $\mathbb{P}$ and $\mathbb{Q}$ on a shared measure space such that $\mathbb{P}$ has a density $\varphi$, with respect to $\mathbb{Q}$,  when any measurable set $A$ satisfies,
\begin{align}
    \mathbb{P}(A) = \int_A \varphi \hspace{1mm} d \mathbb{Q}.
\end{align}
Such a $\varphi$ will also be written as $\frac{d \mathbb{P}}{d\mathbb{Q}}$.

For random variables $U$ and $V$ whose induced probability measures admit densities with respect to a reference measure $\gamma$, in the sense that $\mu(A) \coloneqq \mathbb{P}(U \in A) = \int_A u d\gamma$ and $\nu(A) \coloneqq \mathbb{P}(V \in A) = \int_A v d\gamma$ where $u$ and $v$ are density functions with respect to $\gamma$, the relative entropy (or KL divergence) is defined as
\begin{align}
    D(U || V ) \coloneqq D(\mu || \nu) \coloneqq D(u || v) \coloneqq \int u \log \frac {u}{v} d \gamma.
\end{align}
\noindent
When $U$ has an $E$ valued random variable density $u$ with respect to a reference measure $\gamma$, denote the entropy
\begin{align}
    h_\gamma(U) \coloneqq  h_\gamma(u) = - \int u(z) \log u(z) d \gamma(z),
\end{align}
whenever the above integral is well defined.
When $\gamma$ is the Lebesgue measure we denote the usual differential entropy,
\begin{align}
    h(U) \coloneqq  h(u) \coloneqq- \int u(x) \log u(x) dx,
\end{align}

When $U$ is discrete, taking values $x \subseteq{X}$ with probability $p_x$, define
\begin{align}
    H(U) \coloneqq H(p) \coloneqq - \sum_{x \in \mathcal{X}} p_x \log p_x.
\end{align}
When $t \in [0,1]$, we define $H(t)$ to the the entropy of a Bernoulli random variable with parameter $t$, $H(t) \coloneqq -(1-t) \log (1-t) - t \log t$.  When $A$ is an event, $H(A) \coloneqq H(\mathbb{P}(A))$.

The following proposition elucidates  notions of conditional entropy and joint entropy of a mix of discrete and continuous random variables.
\begin{prop} \label{prop: joint distribution of X and Z}
Suppose $X$ is a discrete random variable with values in a countable set $\mathcal{X}$ and $Z$ is a Borel measurable random variable taking values in $E$. Suppose, for all $x \in \mathcal{X}$, $P(Z\in A|X=x)=\int_A f_x (z) d\gamma(z)$ for density $f_x(z)$ with respect to a common reference measure $\gamma$. Then the following hold.
\begin{itemize}
    \item The joint distribution of $(X,Z)$ on $\mathcal{X} \times E$, has a density 
\begin{align} \label{eq: Joint density function of XZ}
    F(x,z) = p_x f_x(z)
\end{align}
with respect to $dm \hspace{1mm} d\gamma$. 

\item  $Z$ has a density 
\begin{align} \label{eq: density of Z}
    f(z) = \sum_{x \in \mathcal{X}} p_x f_x(z)
\end{align} with respect to $\gamma$ on $E$. 

\item The conditional density of $X$ with respect to $Z=z$ defined as  \begin{align}
    p(x|z) = \begin{cases} \frac{p_x f_x(z)}{f(z)} &\mbox{ for } f(z) >0 \\ 
    0 &\mbox{ otherwise},
    \end{cases}
\end{align} satisfies $\mathbb{P}(X=x) = \int_{E} p(x|z) f(z) dz = p_x$.
\end{itemize}

\end{prop}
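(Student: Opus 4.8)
The plan is to verify the three claims directly from the hypotheses $\mathbb{P}(Z \in A \mid X = x) = \int_A f_x(z)\, d\gamma(z)$ and $\mathbb{P}(X = x) = p_x$, with Tonelli's theorem doing all the work of interchanging summation (integration against the counting measure $m$) with integration against $\gamma$. For the joint density I would first check the defining identity on measurable rectangles $A \times B$, $A \subseteq \mathcal{X}$, $B \subseteq E$. By the law of total probability and the assumed conditional law,
\begin{align*}
    \mathbb{P}(X \in A, Z \in B) = \sum_{x \in A} p_x \int_B f_x(z)\, d\gamma(z),
\end{align*}
while Tonelli applied to the nonnegative integrand $p_x f_x(z)$ gives $\int_{A \times B} p_x f_x(z)\, dm(x)\, d\gamma(z) = \sum_{x \in A} p_x \int_B f_x\, d\gamma$. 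These coincide, so $F(x,z) = p_x f_x(z)$ integrates correctly over every rectangle. Since the rectangles form a $\pi$-system generating the product $\sigma$-algebra on $\mathcal{X} \times E$ and both sides define finite measures, a monotone class (Dynkin) argument extends the identity to all measurable subsets, establishing \eqref{eq: Joint density function of XZ}.

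For the marginal density of $Z$ I would marginalize the joint density over $\mathcal{X}$. Taking $A = \mathcal{X}$ above and applying Tonelli once more,
\begin{align*}
    \mathbb{P}(Z \in B) = \int_{\mathcal{X} \times B} F\, dm\, d\gamma = \int_B \Big( \sum_{x \in \mathcal{X}} p_x f_x(z) \Big)\, d\gamma(z),
\end{align*}
so $f(z) = \sum_{x} p_x f_x(z)$ is the $\gamma$-density of $Z$, which is \eqref{eq: density of Z}. One notes in passing that $f$ is nonnegative and integrates to $\sum_x p_x = 1$, confirming it is a genuine density.

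For the conditional density I would compute $\int_E p(x \mid z) f(z)\, d\gamma(z)$ directly. On $\{f > 0\}$ the definition gives $p(x\mid z) f(z) = p_x f_x(z)$, whereas on $\{f = 0\}$ both $p(x \mid z)$ and $f(z)$ vanish. The one point I would treat with care is that the set $\{f = 0\}$ contributes nothing to $\int_E p_x f_x\, d\gamma$: because $f = \sum_x p_x f_x$ is a sum of nonnegative terms, $f(z) = 0$ forces $p_x f_x(z) = 0$ for every $x$, so $f_x = 0$ holds $\gamma$-almost everywhere on $\{f=0\}$. Consequently
\begin{align*}
    \int_E p(x\mid z) f(z)\, d\gamma(z) = \int_{\{f>0\}} p_x f_x(z)\, d\gamma(z) = p_x \int_E f_x\, d\gamma = p_x,
\end{align*}
using that $f_x$ is a probability density. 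This gives the final assertion (with $dz$ read as $d\gamma(z)$ in the stated generality).

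The whole argument is routine measure theory, so there is no serious obstacle; the only steps that genuinely require attention are the $\pi$-system extension from rectangles to the full product $\sigma$-algebra in the first part, and the null-set bookkeeping on $\{f=0\}$ in the third, both of which I have isolated above.
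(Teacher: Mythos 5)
Your proof is correct, and all three computations land where they should; the only substantive difference from the paper is in the first claim. Where you verify the density identity on measurable rectangles $A \times B$ and then invoke a $\pi$-system (Dynkin) argument to extend to the full product $\sigma$-algebra, the paper exploits the countability of $\mathcal{X}$ more directly: every measurable $A \subseteq \mathcal{X} \times E$ is \emph{exactly} a countable disjoint union of slices $\{x\} \times A_x$ with $A_x = \{z : (x,z) \in A\}$, so it suffices to check the identity $\mathbb{P}_{XZ}(\{x\} \times A) = \int_{\{x\} \times A} F \, dm \, d\gamma$ on single slices by direct computation, with countable additivity doing the rest and no uniqueness-of-measures machinery needed. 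Your route is slightly heavier but more robust --- it would survive replacing the discrete factor $\mathcal{X}$ by a general measurable space, where the exact sectional decomposition fails --- while the paper's is the more elementary argument tailored to the discrete setting. On the third claim you are actually more careful than the paper: the paper simply cancels $f(z)$ in $\int_E \frac{p_x f_x(z)}{f(z)} f(z)\, d\gamma(z)$, silently ignoring $\{f = 0\}$, whereas you correctly observe that $f(z) = 0$ forces $p_x f_x(z) = 0$ termwise (indeed, since the paper's standing assumption has $p_x > 0$, one even gets $f_x \equiv 0$ pointwise on $\{f=0\}$, not merely $\gamma$-a.e.), so that set contributes nothing and $\int_{\{f>0\}} p_x f_x \, d\gamma = p_x$. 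Your null-set bookkeeping fills a genuine (if minor) gap in the published computation.
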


\begin{proof}
    Note that since a set $A \subseteq \mathcal{X} \times E$ can be decomposed into a countable union of disjoint sets $\{x\} \times A_x$ where $A_x = \{ z \in E : (x,z) \in A\}$, to prove $F(x,z)$ is the joint density function of $(X,Z)$, it suffices to prove 
    \begin{align} \label{eq: equate the summands}
        \mathbb{P}_{XZ}(\{x\}\times A) = \int_{ \{x\} \times A} F(x,z) dm(x) \hspace{1mm} d \gamma(z).
    \end{align} Indeed,
    \begin{align}
        \mathbb{P}_{XZ}(A) 
            &=
                \mathbb{P}_{XZ}(\cup_x \{x\}\times A_x )
                    \\
            &=
                \sum_x \mathbb{P}_{XZ} (\{x\} \times A_x ), \label{eq: sum in terms of measures}
    \end{align}
    while
    \begin{align}
        \int_A F(x,z) dm(x) \hspace{1mm} d\gamma(z)
            &=
                \int_{ \cup_x \{x\} \times A_x } F(x,z) dm(x) \hspace{1mm} d\gamma(z)
                    \\
            &=
                \sum_x \int_{ \{x \} \times A_x } F(x,z) dm(x) \hspace{1mm} d \gamma(z).\label{eq: sum in terms of integrals}
    \end{align}
    Since \eqref{eq: equate the summands} would give equality of the summands of \eqref{eq: sum in terms of measures} and \eqref{eq: sum in terms of integrals} the result would follow.
    We compute directly.
    \begin{align}
        \mathbb{P}_{XZ}(\{x\} \times A) 
            &= 
                \mathbb{P}(X=x , Z \in A)
                    \\
            &=
                \mathbb{P}(X=x) \mathbb{P}(Z \in A | X =x)
                    \\
            &=
                p_x \int_A f_x(z) d \gamma(z)
                    \\
            &=
                \int_{\{x\}} p_x \int_A  f_x(z) d\gamma(z) dm(x)
                    \\
            &=
                \int_{\{x\} \times A} F(x,z) dm(x) \hspace{1mm} d\gamma(z) .
    \end{align}
    This gives the first claim.  For the second,
    \begin{align}
        \mathbb{P}(Z \in A)
            &=
                \sum_{x \in \mathcal{X}} \mathbb{P}( X = x, Z \in A)
                    \\
            &=
                \sum_{x \in \mathcal{X}} p_x \int_A f_x(z) d \gamma(z)
                    \\
            &=
                \int_A f(z) d \gamma(z).
    \end{align}
    The last assertion is immediate,
    \begin{align}
        \int_{E} p(z|x) f(z) d \gamma(z)
            =
                \int_{E} \frac{p_x f_x(z)}{f(z)} f(z) d \gamma(z) = p_x.
    \end{align}
\end{proof}

\noindent Proposition \ref{prop: joint distribution of X and Z} allows the following definitions of conditional entropies.
\begin{itemize}
\item $h_\gamma(Z|X=x)=-\int_E f(Z|X=x)\log f(Z|X=x)=-\int f_x \log f_x d \gamma(z)$ and thus

\begin{equation} \label{eq: conditional of continuous with respect to discrete entropy}
    h(Z|X) \coloneqq E_x[h(Z|X=x)]=-\sum_{x\in \mathcal{X}}p_x\int_{z\in E} f_x(z) \log f_x(z) d \gamma(z)=  \sum_{x \in \mathcal{X}} p_x h_\gamma(f_x).
\end{equation}

\item  $H(X|Z=z)=-\sum_{x \in \mathcal{X}} p(X=x|Z=z)\log p(X=x|Z=z) =-\sum_{x \in \mathcal{X}} p(x|z)\log p(x|z)=H(p(\cdot|z)) $ and thus 
\begin{equation} 
    H(X|Z)=E_Z[H(X|Z=z)] \coloneqq -\int_E \left(\sum_{x \in \mathcal{X}} p(x|z)\log p(x|z)\right)  f(z) d\gamma(z).
\end{equation}
\end{itemize}
Let us note how the entropy of a mixture can be related to its relative entropy with respect to a dominating distribution $g$.  The entropy concavity deficit of a convex combination of densities $f_i$, is the convexity deficit of the relative entropy with respect to a reference measure in the following sense.

\begin{prop}\label{prop: compensation identity}
    For a density $g$ such that $\sum_x p_x D(f_x||g) < \infty$,
    \begin{align} \label{eq: entropyRelEntropy}
        h_\gamma(f) - \sum_x p_x h_\gamma(f_x) = \sum_x p_x D(f_x || g) - D( f || g).
    \end{align}
\end{prop}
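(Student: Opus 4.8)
The plan is to expand the right-hand side of \eqref{eq: entropyRelEntropy} directly and observe that all dependence on the auxiliary density $g$ cancels, leaving precisely the concavity deficit. Writing $D(f_x\|g) = \int f_x \log f_x \, d\gamma - \int f_x \log g \, d\gamma$ and $D(f\|g) = \int f \log f \, d\gamma - \int f \log g \, d\gamma$, the combination $\sum_x p_x D(f_x\|g) - D(f\|g)$ contains exactly two $g$-dependent contributions, namely $-\sum_x p_x \int f_x \log g \, d\gamma$ and $+\int f \log g \, d\gamma$. Since Proposition \ref{prop: joint distribution of X and Z} furnishes $f = \sum_x p_x f_x$, these two contributions are equal and cancel; what remains is $\sum_x p_x \int f_x \log f_x \, d\gamma - \int f \log f \, d\gamma = h_\gamma(f) - \sum_x p_x h_\gamma(f_x)$, which is the claim.

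A more robust route, which I would actually carry out to sidestep separating integrals that may individually diverge, is to first establish the $g$-free identity $\sum_x p_x D(f_x\|g) - D(f\|g) = \sum_x p_x D(f_x\|f)$. For each $x$ one writes $\log \tfrac{f_x}{g} = \log \tfrac{f_x}{f} + \log \tfrac{f}{g}$, so that $D(f_x\|g) = D(f_x\|f) + \int f_x \log \tfrac{f}{g}\, d\gamma$; summing against $p_x$ and invoking $\sum_x p_x f_x = f$ collapses the last term to $\int f \log\tfrac{f}{g}\,d\gamma = D(f\|g)$. The claim then reduces to $\sum_x p_x D(f_x\|f) = h_\gamma(f) - \sum_x p_x h_\gamma(f_x)$, which follows from the splitting of $\log(f_x/f)$ together with $\sum_x p_x f_x = f$. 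The advantage is that the only quantity whose value one manipulates piecewise is the always-nonnegative $D(f_x\|f)$.

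The only genuine work is justifying the interchange of summation and integration and the splitting of the logarithms, since the individual integrals $\int f_x \log g\,d\gamma$ and $\int f \log f\,d\gamma$ need not be finite a priori. Here I would lean on the hypothesis $\sum_x p_x D(f_x\|g) < \infty$: by convexity of $D(\cdot\|g)$ in its first argument (the log-sum inequality), $D(f\|g) = D\bigl(\sum_x p_x f_x \,\big\|\, g\bigr) \le \sum_x p_x D(f_x\|g) < \infty$, so the right-hand side is finite and well defined. To move the sum inside the integral in $\sum_x p_x \int f_x \log\tfrac{f}{g}\,d\gamma = \int \bigl(\sum_x p_x f_x\bigr)\log\tfrac{f}{g}\,d\gamma$ I would decompose $\log(f/g)$ into its positive and negative parts and apply Tonelli to each nonnegative piece, the finiteness of $D(f\|g)$ ensuring the resulting difference is not of the indeterminate form $\infty - \infty$.

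The main obstacle is thus purely measure-theoretic bookkeeping rather than any conceptual difficulty: keeping each rearrangement inside the regime where the relevant integrals converge (or are unambiguously $+\infty$), and confirming that the differential entropies $h_\gamma(f_x)$ and $h_\gamma(f)$ are finite, or else reading the identity in the extended reals, so that the final rewriting in terms of $h_\gamma$ is legitimate. Once $f = \sum_x p_x f_x$ is invoked, the cancellation of the $g$-terms itself is immediate.
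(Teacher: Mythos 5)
Your first route is exactly the paper's proof: the paper likewise uses $f=\sum_x p_x f_x$ to write $D(f\|g)=\sum_x p_x \int f_x \log\frac{f}{g}\,d\gamma$, cancels the $g$-dependent terms inside the combined integrand, and reads off $h_\gamma(f)-\sum_x p_x h_\gamma(f_x)$. Your additional measure-theoretic care (deriving $D(f\|g)<\infty$ via convexity and handling the interchange of sum and integral by splitting $\log(f/g)$ into signed parts) goes beyond what the paper writes down but does not change the argument, so the proposal is correct and takes essentially the same approach.
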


\begin{proof}
    \begin{align*}
        \sum_x p_x D(f_x || g) - D( f || g) 
            &=
                \sum_x p_x \left(\int f_x \log \frac{f_x}{g} - f_x \log \frac f g d\gamma \right) 
                    \\
            &=    
                \sum_x p_x \left(\int f_x \log f_x - f_x \log f d\gamma \right)
                    \\
            &=
                h_\gamma(f) - \sum_x p_x h_\gamma(f_x).
    \end{align*}
\end{proof}
Note that the left hand side of Proposition \ref{prop: compensation identity} is invariant with respect to $g$.  Thus for $g_1, g_2$ such that  $\sum_x p_x D(f_x||g_j)  < \infty$,
\begin{align} 
    \sum_x p_x D(f_x || g_1) - D( f || g_1)  = \sum_x p_x D(f_x || g_2) - D( f || g_2).
\end{align}
Taking $g_1 = g$ and $g_2 = f = \sum_x p_x f_x$ yields the compensation identity,
\begin{align}
    \sum_x p_x D(f_x||g)  = \sum_x p_x D(f_x ||f) + D(f||g),
\end{align}
which is often used to obtain its immediate corollary
\begin{align}
    \min_g \sum_x p_x D(f_x||g) = \sum_x p_x D(f_x || f).
\end{align}

We define the mutual information between probability measures $\mathbb{P}_U$ and $\mathbb{P}_V$ with joint distribution $\mathbb{P}_{U V}$ and their product distribution $\mathbb{P}_U \mathbb{P}_V$, as the relative entropy of the product distribution from the joint distribution,
\begin{align*}
    I(\mathbb{P}_{U}; \mathbb{P}_V) = D( \mathbb{P}_{U V} || \mathbb{P}_U \mathbb{P}_V).
\end{align*}
For the random variables $U$ and $V$ inducing probability measures $\mathbb{P}_U$ and $\mathbb{P}_V$, we will write $I(U;V) = I(\mathbb{P}_U; \mathbb{P}_V)$.

\begin{prop} \label{prop: mutual information equality}
For $X$ discrete with $\mathbb{P}(X=x) = p_x$ and $Z$ satisfying $\mathbb{P}(Z \in B | X = x) =  \int_B f_x(z) dz$,
    \begin{align}
         I(X;Z) 
            &= 
                h_\gamma(Z) - h_\gamma(Z|X) \label{eq: continuous mutual information expression}
                    \\
            &=
                H(X) - H(X|Z) \label{eq: discrete mutual information expression}
                    \\
            &=
                \sum_{x \in \mathcal{X}} p_x D(f_x || f). \label{eq: sum of KL divergence expression}
    \end{align}
\end{prop}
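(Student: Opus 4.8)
The plan is to start from the definition of mutual information as a relative entropy and reduce all three claimed expressions to a single Radon--Nikodym computation on the product space $\mathcal{X} \times E$. By Proposition~\ref{prop: joint distribution of X and Z}, the joint law $\mathbb{P}_{XZ}$ has density $F(x,z) = p_x f_x(z)$ with respect to $dm \, d\gamma$, while the product law $\mathbb{P}_X \mathbb{P}_Z$ has density $(x,z) \mapsto p_x f(z)$ with respect to the same reference measure. Hence $I(X;Z) = D(\mathbb{P}_{XZ} || \mathbb{P}_X \mathbb{P}_Z) = \int_{\mathcal{X} \times E} p_x f_x(z) \log \frac{p_x f_x(z)}{p_x f(z)} \, dm(x)\, d\gamma(z)$, and everything will follow from rewriting the integrand's logarithmic factor in two complementary ways.

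First I would cancel $p_x$ inside the logarithm to obtain $\frac{f_x(z)}{f(z)}$; applying Tonelli to convert the $dm$-integral into a sum yields $I(X;Z) = \sum_x p_x \int_E f_x(z) \log \frac{f_x(z)}{f(z)} \, d\gamma(z) = \sum_x p_x D(f_x || f)$, which is \eqref{eq: sum of KL divergence expression}. For \eqref{eq: continuous mutual information expression} I would then invoke the compensation identity of Proposition~\ref{prop: compensation identity} with the choice $g = f$: since $D(f||f) = 0$, the right-hand side collapses to $\sum_x p_x D(f_x || f)$, while the left-hand side is precisely $h_\gamma(f) - \sum_x p_x h_\gamma(f_x) = h_\gamma(Z) - h_\gamma(Z|X)$ by the definition \eqref{eq: conditional of continuous with respect to discrete entropy}.

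For the discrete expansion \eqref{eq: discrete mutual information expression}, I would instead use the identity $F(x,z) = p(x|z) f(z)$, valid $d\gamma$-a.e.\ on $\{f > 0\}$, so that the logarithmic factor becomes $\frac{p(x|z)}{p_x}$. Splitting the logarithm gives two pieces: the term $\int_E f(z) \sum_x p(x|z) \log p(x|z) \, d\gamma(z) = -H(X|Z)$ directly from the definition of conditional entropy, and the term $-\int_E f(z) \sum_x p(x|z) \log p_x \, d\gamma(z)$, which after interchanging sum and integral equals $-\sum_x (\log p_x) \int_E p(x|z) f(z)\, d\gamma(z) = -\sum_x p_x \log p_x = H(X)$, using the marginal-consistency identity $\int_E p(x|z) f(z)\, d\gamma(z) = p_x$ from the last bullet of Proposition~\ref{prop: joint distribution of X and Z}.

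The computation itself is routine; the only real care is measure-theoretic bookkeeping. The main obstacle I anticipate is justifying the interchange of summation and integration and the splitting of the logarithm into differences of potentially non-integrable pieces: one must verify that the relevant integrals are absolutely convergent (which the finiteness of $\sum_x p_x D(f_x || f)$, implicit in the mutual information being finite, should supply) and handle the null set $\{f = 0\}$, where $p(x|z)$ is defined to be zero and each $f_x$ vanishes $\gamma$-a.e.\ as well, so that this set contributes nothing to any of the integrals.
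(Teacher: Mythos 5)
Your proposal is correct and follows essentially the same route as the paper: both start from $I(X;Z) = D(\mathbb{P}_{XZ} \| \mathbb{P}_X \mathbb{P}_Z)$ computed via the densities from Proposition \ref{prop: joint distribution of X and Z}, split the logarithm two ways (cancelling $p_x$ to get \eqref{eq: sum of KL divergence expression}, and rewriting via $p(x|z)/p_x$ with the marginal-consistency identity to get \eqref{eq: discrete mutual information expression}), and obtain \eqref{eq: continuous mutual information expression} from Proposition \ref{prop: compensation identity} with $g = f$. Your closing remark about absolute convergence and the null set $\{f=0\}$ is a point of care the paper leaves implicit, but it does not change the argument.
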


\begin{proof}
By Proposition \ref{prop: joint distribution of X and Z}, $\mathbb{P}_{XZ}$ has density $F(x,z) = p_x f_x(z)$ with respect to $dm(x) \hspace{1mm} d\gamma(z)$ the product of the counting measure $m$ and $\gamma$.  The product measure $\mathbb{P}_X \mathbb{P}_Z$, has density $G(x,z) = p_x f(z)$ with respect to $dm \hspace{1mm} d\gamma$ and it follows that
\begin{align}
    \frac{ d \mathbb{P}_{XZ} }{ d \mathbb{P}_X \mathbb{P}_Z}(x,z)
        =
            \frac{ \frac{ d \mathbb{P}_{XZ} }{ dm \hspace{1mm} d\gamma}(x,z)}{\frac{ d \mathbb{P}_{X} \mathbb{P}_{Z} }{ dm \hspace{1mm} dz}(x,z)}
        =
            \frac{ F(x,z)}{G(x,z)} = \frac{f_x(z)}{f(z)} \label{eq: formula for radon nikodym derivative}
\end{align}
By equation \eqref{eq: formula for radon nikodym derivative},
\begin{align}
    D(\mathbb{P}_{XZ} || \mathbb{P}_X \mathbb{P}_Z)
        &=
            \int_{\mathcal{X} \times \mathbb{R}^d} F(x,z) \log \frac{f_x(z)}{f(z) } dm \hspace{1mm} d\gamma
                \\
        &=
            \int_{E} \sum_{x \in \mathcal{X}} p_x f_x(z) \log \frac{f_x(z)}{f(z) } d\gamma(z)
\end{align}
Recalling $p(z|x)$ from Proposition \ref{prop: joint distribution of X and Z}, using the algebra of logarithms and Fubini-Tonelli,
\begin{align}
    \int_{E} \sum_{x \in \mathcal{X}}& p_x f_x(z) \log \frac{f_x(z)}{f(z) } d \gamma(z)
        \\
        &=
            \int_{E} \sum_{x \in \mathcal{X}} p_x f_x(z) \log \frac{p(x|z)}{p_x } d \gamma(z)
                \\
        &=
            -\sum_{x \in \mathcal{X}} p_x \log p_x \int_{E} f_x(z)d \gamma(z) + \int_{E} f(z) \sum_{x \in \mathcal{X}} p(x|z) \log p(x|z) d\gamma(z)
                \\
        &=
            H(p) - \int_{\mathbb{R}^d} f(z) H(p(x|z))d \gamma(z)
                \\
        &=
            H(X) - H(X|Z),
\end{align}
giving \eqref{eq: discrete mutual information expression}.
By Fubini-Tonelli, 
\begin{align}
    \int_{E} \sum_{x \in \mathcal{X}} p_x f_x(z) \log \frac{f_x(z)}{f(z) } d\gamma(z)
        &=
            \sum_{x \in \mathcal{X}} p_x \int_{E} f_x(z) \log \frac{f_x(z)}{f(z)} d \gamma(z)
                \\
        &= 
            \sum_{x \in \mathcal{X}} p_x D(f_x||f),
\end{align}
we have expression \eqref{eq: sum of KL divergence expression}.  By Proposition \ref{prop: compensation identity},
\begin{align}
    \sum_{x \in \mathcal{X}} p_x D(f_x||f)
        &=
            h_\gamma(f) - \sum_{x \in \mathcal{X}} p_x h(f_x)
                \\
        &=
            h(Z) - h(Z|X),
\end{align}
\eqref{eq: continuous mutual information expression} follows.
\end{proof}

Using Proposition \ref{prop: mutual information equality}, we can give a simple information theoretic proof of a result proved analytically in \cite{WM2014beyond, bobkovmarsiglietti2019}.
\begin{coro}\label{cor:wm-mix}
    When $\mathcal{X} \subseteq E$, and $\gamma$ is a Haar measure, then $X$ and $Z$ satisfy,
    \begin{equation} \label{eq: discrete continuous entropy inequality}
        h_\gamma(X+Z) \leq H(X) + h_\gamma(Z|X)
    \end{equation}
    which reduces to 
    \begin{equation} \label{eq: discrete continuous entropy inequality independent}
        h_\gamma(X+Z) \leq H(X) + h_\gamma(Z)
    \end{equation}
    in the case that $X$ and $Z$ are independent.
\end{coro}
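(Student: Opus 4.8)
The plan is to reduce the statement to the mutual-information identity already recorded in Proposition \ref{prop: mutual information equality}, applied not to $(X,Z)$ but to the shifted pair $(X,W)$ where $W \coloneqq X+Z$. The only role of the hypotheses ``$\mathcal{X}\subseteq E$'' and ``$\gamma$ Haar'' is to make $W$ a well-defined $E$-valued random variable and to let me transport densities and entropies under translation.

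First I would identify the conditional law of $W$ given $X$. For $x\in\mathcal{X}$ and measurable $B\subseteq E$, translation invariance of $\gamma$ gives
\begin{align*}
\mathbb{P}(W\in B\mid X=x) = \mathbb{P}(Z\in B-x\mid X=x) = \int_{B-x} f_x(z)\,d\gamma(z) = \int_B f_x(w-x)\,d\gamma(w),
\end{align*}
so $W\mid X=x$ has density $g_x(w)\coloneqq f_x(w-x)$ with respect to $\gamma$. Thus $(X,W)$ satisfies the hypotheses of Proposition \ref{prop: mutual information equality} with $f_x$ replaced by $g_x$, and in particular
\begin{align*}
h_\gamma(W) - h_\gamma(W\mid X) = I(X;W) = H(X) - H(X\mid W).
\end{align*}

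Next I would compute $h_\gamma(W\mid X)$. Since $\gamma$ is translation invariant, the substitution $z=w-x$ yields $h_\gamma(g_x) = -\int f_x(w-x)\log f_x(w-x)\,d\gamma(w) = -\int f_x(z)\log f_x(z)\,d\gamma(z) = h_\gamma(f_x)$; that is, translation does not change differential entropy. Averaging over $x$ with weights $p_x$ and using \eqref{eq: conditional of continuous with respect to discrete entropy} gives $h_\gamma(W\mid X) = \sum_x p_x h_\gamma(g_x) = \sum_x p_x h_\gamma(f_x) = h_\gamma(Z\mid X)$. Substituting this together with the bound $H(X\mid W)\geq 0$ (nonnegativity of the conditional entropy of a discrete variable) into the displayed identity yields $h_\gamma(X+Z)\leq H(X)+h_\gamma(Z\mid X)$, which is \eqref{eq: discrete continuous entropy inequality}.

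Finally, for \eqref{eq: discrete continuous entropy inequality independent}, independence of $X$ and $Z$ forces $f_x=f$ for every $x$ (the conditional law of $Z$ does not depend on $X$), so $h_\gamma(Z\mid X) = \sum_x p_x h_\gamma(f_x) = \sum_x p_x h_\gamma(Z) = h_\gamma(Z)$, and the first inequality specializes accordingly. There is no genuine obstacle here: the entire content is the recognition that Haar invariance simultaneously furnishes the translated conditional densities $g_x$ and the equality $h_\gamma(W\mid X) = h_\gamma(Z\mid X)$, after which the inequality is merely the nonnegativity of $H(X\mid W)$ packaged through Proposition \ref{prop: mutual information equality}.
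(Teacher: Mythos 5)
Your proof is correct and follows essentially the same route as the paper: both apply Proposition~\ref{prop: mutual information equality} to the pair $(X, X+Z)$, use Haar invariance of $\gamma$ to identify $h_\gamma(X+Z\mid X)=h_\gamma(Z\mid X)$, and conclude by dropping the nonnegative term $H(X\mid X+Z)$. The only difference is that you spell out the translated conditional densities $g_x(w)=f_x(w-x)$ and the translation invariance of $h_\gamma$, details the paper compresses into the single remark that the second equality ``follows from the assumption that $\gamma$ is a Haar measure.''
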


\begin{proof}
    Applying Proposition \ref{prop: mutual information equality} to $X$ and $\tilde{Z} = X+Z$ we have
    \begin{align}
        h_\gamma(X+Z) 
            &= 
                H(X) + h_\gamma(X+Z| X) - H(X|X+Z)
                    \\
            &=
                H(X) + h_\gamma(Z|X) - H(X|X+Z),
    \end{align}
    where the second equality follows from the assumption that $\gamma$ is a Haar measure.
    Since $H(X|X+Z) \geq 0$, \eqref{eq: discrete continuous entropy inequality} follows, while \eqref{eq: discrete continuous entropy inequality independent} follows from $h_\gamma(X+Z|X) = h_\gamma(Z)$ under the assumption of independence.
\end{proof}

Incidentally, the main use of Corollary~\ref{cor:wm-mix} in \cite{WM2014beyond} is to give a rearrangement-based proof of the entropy power inequality
(see \cite{MMX17:0} for much more in this vein).

\section{Lower bounds} \label{sec: Lower Bounds}
In this section we will provide lower bounds to the concavity deficit and provide  a proof of Theorem~\ref{thm: discrete continuous entropy inequality sharpening}.
We will first introduce the notion of $f$-divergences.


\begin{defn} \label{def: f-divergence}
    For a convex function $f$ satisfying $f(1) = 0$, and probability measures $\mu$ and $ \nu$, with densities $u = \frac{d\mu}{d\gamma}$ and $v = \frac{d \nu}{d\gamma}$ with respect to a common reference measure $\gamma$,
    the $f$ divergence from $\mu$ to $\nu$ is
    \begin{align} \label{eq: f -divergence definition}
        D_f( \mu || \nu) \coloneqq D_f(u||v) \coloneqq \int f\left( \frac{u}{v} \right) d v d\gamma.
    \end{align}
\end{defn}
Note that a common reference measure for measures $
\mu$ and $\nu$  always exists, take $\frac 1 2 (\mu + \nu)$ for instance, and the value of $D_f( \mu || \nu)$ is independent of the choice of reference measure as can be seen by comparing a reference measures to one it has a density with respect to.
When the value of a functional of a pair of probability distributions $\mu,\nu$ is given by \eqref{eq: f -divergence definition} we will call the functional an $f$-divergence.  An $f$-divergence satisfies the following.
(i) 
    Non-negativity, $D_f( \mu || \nu ) \geq 0$ and 
(ii) 
    The map $(\mu,\nu) \mapsto D_f( \mu || \nu)$ is convex.
We direct the reader to \cite{liese2006divergences,sason2018f,sason2016f} for further background on $f$-divergences and their properties. When $f(x) = x \log x$, the divergence induced is the relative entropy.   

This section is organized as follows. We  first  introduce the concept of skewing which is the  $f$-divergence from a convex combination $(1-t) \mu + t \nu$ to $\nu$. Skewing provides a more regular version of the original divergence measure, for example the Radon-Nikodym derivative  of $\mu$ with respect to $(1-t) \mu + t\nu$ always exists even if Radon-Nikodym derivative of $\mu$ with respect to $\nu$ may not, whereby skew divergence is well defined unlike divergence.  Skew divergence as we will develop preserves important features of the original divergence.  We first state elementary properties of the skew relative information, corresponding to skewing the relative entropy with proofs given in an appendix, and then introduce a skew $\chi^2$-divergence which interpolates between the well known Neyman $\chi^2$ divergence and the Pearson $\chi^2$ divergence.  

We will pause to demonstrate that the class of $f$-divergences is stable under skewing and recover as a special case; a recent result of Nielsen \cite{nielsen2019generalization}, that the generalized Jensen-Shannon divergence is an $f$-divergence. Then we establish several inequalities between the skew relative information and the introduced skew $\chi^2$ divergence.  We will show in Theorem \ref{thm: skew chi squared related to skew relative entropy} that the skew relative information can be controlled by the skew $\chi$-square divergence extending the classical bound of relative entropy by Pearson $\chi^2$ divergence, and using an argument due to Audenart in the quantum setting \cite{audenaert2014quantum}, we show that the rate of decrease of the skew relative information with respect to the skewing parameter can be described exactly as a multiple of the skew $\chi^2$ divergence.

Theorem \ref{thm: skews bounded by total variation} also appropriates a quantum argument \cite{audenaert2014quantum} to show that though neither the Neyman or Pearson divergences can be controlled by total variation, their skewed counterparts can be.  We harness this bound along side the  differential relationship between the two skew divergences to bound the skew relative entropy by the total variation as well.  As a brief aside we demonstrate that the this bound is equivalent to a reverse Pinsker type inequality derived by Verdu \cite{verdu2014total}, before using Theorem \ref{thm: skews bounded by total variation} to give our proof of Theorem \ref{thm: discrete continuous entropy inequality sharpening}. Finally to conclude the section, we demonstrate that one may obtain the classical result of Lin \cite{lin1991divergence} bounding the Jensen-Shannon divergence by total variation as a special case of Theorem \ref{thm: discrete continuous entropy inequality sharpening}.

\subsection{Skew Relative Information} \label{sec: Skew}
We will consider the following generalization of the relative entropy due to Lee.
\normalfont{
\begin{defn}\cite{lee1999measures}
For probability measures $\mu$ and $\nu$ on a common set $\mathcal{Y}$ and $t \in [0,1]$ define their Skew relative information 
\[
    S_t(\mu || \nu) = \int \log \frac{d \mu}{d(t \mu + (1-t) \nu)} d \mu
\]
In the case that $d \mu = u d\gamma$, and $d \nu = v \gamma$ we will also write
\[
    S_t(u||v) = S_t(\mu || \nu).
\]
\end{defn}
}

We state some important properties of Skew relative information with the proofs provided in the Appendix.
\begin{prop}\label{prop: divergence skew properties}
For probability measures $\mu$ and $\nu$ on a common set and $t \in [0,1]$ the Skew Relative information satisfies the following properties.
\begin{enumerate}
    \item \label{item: skew divergence to KL divergence}
    $S_t(\mu || \nu) = D( \mu || t \mu + (1-t) \nu)$.  In particular, $S_0(\mu || \nu) = D(\mu || \nu)$.
    \item \label{item: D one is zero} 
    $S_t(\mu || \nu) = 0$ iff $t=1$ or $\mu = \nu$. 
    \item \label{item: skew divergence bounded by log t}
    For $0<t <1$ the Radon-Nikodym derivative of $\mu$ with respect to $t\mu+(1-t)\nu$ does exist, and $S_t(\mu || \nu) \leq - \log t$.
    \item \label{item: function of t}
    $S_t(\mu || \nu)$ is convex, non-negative, and decreasing in $t$.
    \item \label{item: skew relative entropy is an f-divergence}
    $S_t$ is an $f$-divergence with $f(x) = x \log (x/(tx + (1-t)).$
\end{enumerate}
\end{prop}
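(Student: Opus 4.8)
The plan is to reduce every item to the single identity in part~\ref{item: skew divergence to KL divergence}, namely $S_t(\mu\|\nu) = D(\mu\|\sigma_t)$ with $\sigma_t := t\mu + (1-t)\nu$, after which the remaining assertions become standard facts about relative entropy. Throughout I would write $u = d\mu/d\gamma$, $v = d\nu/d\gamma$, and $w_t = tu+(1-t)v$ for the density of $\sigma_t$ against a common reference measure $\gamma$. Part~\ref{item: skew divergence to KL divergence} is then immediate: by definition $S_t(\mu\|\nu) = \int \log\frac{d\mu}{d\sigma_t}\,d\mu$, which is exactly $D(\mu\|\sigma_t)$, and $t=0$ gives $\sigma_0=\nu$, hence $S_0=D(\mu\|\nu)$. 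For part~\ref{item: D one is zero} I would invoke the equality case of Gibbs' inequality: $D(\mu\|\sigma_t)=0$ iff $\mu=\sigma_t$, and $\mu = t\mu+(1-t)\nu$ rearranges to $(1-t)(\mu-\nu)=0$, so for $t<1$ this forces $\mu=\nu$, while at $t=1$ one has $\sigma_1=\mu$ and $S_1=0$ unconditionally.

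For part~\ref{item: skew divergence bounded by log t} the key observation is the measure domination $\sigma_t \ge t\mu$, valid for every $t>0$ since $(1-t)\nu\ge 0$. This yields both absolute continuity $\mu\ll\sigma_t$ (if $\sigma_t(A)=0$ then $t\mu(A)=0$, so $\mu(A)=0$, using $t>0$) and the pointwise bound $\frac{d\mu}{d\sigma_t}=\frac{u}{tu+(1-t)v}\le \frac1t$ holding $\sigma_t$-a.e. Integrating the logarithm of this bound against the probability measure $\mu$ gives $S_t=\int\log\frac{d\mu}{d\sigma_t}\,d\mu \le \log(1/t) = -\log t$.

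Part~\ref{item: function of t} I would organize so that convexity does the heavy lifting. Non-negativity is just $D(\mu\|\sigma_t)\ge 0$ (the non-negativity of $f$-divergences recorded after Definition~\ref{def: f-divergence}). Convexity in $t$ follows from part~\ref{item: skew divergence to KL divergence} together with the joint convexity of relative entropy (the convexity of $f$-divergences, also recorded after Definition~\ref{def: f-divergence}): the map $t\mapsto(\mu,\sigma_t)$ is affine, so $t\mapsto D(\mu\|\sigma_t)$ is a convex function precomposed with an affine map, hence convex. Monotonicity then comes for free: since $S_t\ge 0$ with $S_1=0$, the convex function $t\mapsto S_t$ attains its minimum on $[0,1]$ at the right endpoint, and a convex function whose minimizer is the right endpoint is non-increasing on the whole interval. (Alternatively, differentiating under the integral and using $u-v=(u-w_t)/(1-t)$ gives $\frac{d}{dt}S_t = -\frac{1}{1-t}\big(\int u^2/w_t\,d\gamma - 1\big) = -\frac{1}{1-t}\,\chi^2(\mu\|\sigma_t)\le 0$ by Cauchy--Schwarz; I prefer the convexity route since it sidesteps differentiability.) Finally, for part~\ref{item: skew relative entropy is an f-divergence} I would verify the representation directly: with $f(x)=x\log\frac{x}{tx+(1-t)}$ one has $\int f(u/v)\,v\,d\gamma = \int u\log\frac{u}{tu+(1-t)v}\,d\gamma = S_t$ and $f(1)=\log 1=0$, while a routine second-derivative computation collapses to $f''(x)=\frac{(1-t)^2}{x\,(tx+1-t)^2}\ge 0$, confirming $f$ is a legitimate convex generator.

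There is no deep obstacle here; the only genuine care is in part~\ref{item: function of t}, where one wants an argument for monotonicity that avoids justifying differentiation under the integral sign and avoids the possibly infinite constant $\int u\log u\,d\gamma$ appearing at $t=0$. Routing monotonicity through convexity plus the boundary value $S_1=0$ is precisely what keeps the argument clean, and everything else is bookkeeping around the identity $S_t = D(\mu\|\sigma_t)$.
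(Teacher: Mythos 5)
Your proposal is correct and takes essentially the same route as the paper's appendix proof: the reduction of everything to the identity $S_t(\mu \| \nu) = D(\mu \| t\mu + (1-t)\nu)$, the domination $t\mu \leq t\mu + (1-t)\nu$ giving absolute continuity and the pointwise bound $d\mu/d(t\mu+(1-t)\nu) \leq 1/t$ for item (3), joint convexity of relative entropy precomposed with the affine map $t \mapsto t\mu + (1-t)\nu$ together with non-negativity and the endpoint value $S_1 = 0$ for item (4), and the same second-derivative verification of $f(x) = x\log\bigl(x/(tx+(1-t))\bigr)$ for item (5). There are no gaps; even your preference for the convexity route over differentiating in $t$ matches the paper's choice.
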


Motivated by the fact that the act of {\it skewing} the relative entropy preserves its status as an $f$-divergence we introduce the act of skewing of an $f-divergence$

\begin{defn}
    Given a convex function $f: [0,\infty) \to \mathbb{R}$ with $f(1) = 0$ and its associated divergence $D_f(\cdot || \cdot )$, define the $r,t$-skew of $D_f$ by
    \begin{align} \label{eq: skew divergence defintion}
        S_{f,r,t}(\mu || \nu) \coloneqq D_f( r \mu + (1-r) \nu || t \mu + (1-t) \nu).
    \end{align}
    
        It can be shown that for $t\in (0,1)$,  $S_{f,r,t}(\mu || \nu) <\infty.$
\end{defn}

\begin{thm}  \label{thm: skewing preserves f divergence}
    The class of $f$-divergences is stable under skewing.  That is, if $f$ is convex, satisfying $f(1) =0$, then
    \begin{align}
        \hat{f}(x) \coloneqq (tx + (1-t))f\left(\frac {rx +(1-r)} {tx + (1-t)} \right)
    \end{align}
    is convex with $\hat{f}(1)=0$ as well, so that the $r,t$ skew of $D_f$ defined in \eqref{eq: skew divergence defintion} is an $f$-divergence as well. 
\end{thm}

\begin{proof}
    If $\mu$ and $\nu$ have respective densities $u$ and $v$ with respect to a reference measure $\gamma$, then $r \mu + (1-r) \nu$ and $t \mu + 1-t \nu$ have densities $r u + (1-r) v$ and $t u + (1-t)v$
    \begin{align}
        S_{f,r,t}( \mu || \nu) 
            &=
                \int f\left(\frac{ r u +(1-r)v}{t u + (1-t) v} \right) (t u + (1-t) v) d\gamma
                    \\
            &=
                \int f \left( \frac{r \frac u v + (1-r) }{t \frac u v + (1-t)} \right) (t \frac u v + (1-t)) v d \gamma
                    \\
            &=
                \int \hat{f}\left( \frac u v \right) v d\gamma.
    \end{align}
    Since $\hat{f}(1) = f(1) = 0$, we need only prove $\hat{f}$ convex.  For this, recall that the conic transform $g$ of a convex function $f$ defined by $g(x,y) = yf(x/y)$ for $y >0$ is convex, since 
    \begin{align}
        \frac{y_1+y_2}{2} f\left(\frac{x_1 + x_2 }{2}/ \frac{y_1 + y_2}{2} \right) 
            &= 
                \frac{y_1+y_2}{2} f\left(\frac{y_1}{y_1+y_2} \frac{x_1}{y_1} + \frac{y_2}{y_1+y_2} \frac{x_2}{y_2} \right) 
                    \\
            &\leq \frac{y_1}{2} f(x_1/y_1) + \frac{y_2}{2} f(x_2/y_2).
    \end{align}
   Our result follows since $\hat{f}$ is the composition of the affine function $A(x) = (rx + (1-r), tx + (1-t))$ with the conic transform of $f$,
    \begin{align}
        \hat{f}(x) = g(A(x)).
    \end{align}
\end{proof}

  Let us note that in the special case that $D_f$ corresponds to relative entropy, Theorem \ref{thm: skewing preserves f divergence} demonstrates that the ``Generalized Jensen-Shannon divergence'' developed recently by  Nielsen see \cite[Definition 1]{nielsen2019generalization} is in fact an $f$-divergence, as it is defined as the weighted sum of $r_i,t$-skew divergences associated to the relative entropy.

\begin{coro}
For a vector $\alpha \in [0, 1]^k$ and $w_i>0$ such that $\sum_i w_i = 1$, the $(\alpha, w)$-Jensen-Shannon divergence between two densities $p,q$
defined by:
\begin{align}
    JS^{\alpha,w}(p : q) \coloneqq \sum_{i=1}^k w_i D( (1-\alpha_i) p + \alpha_i q|| (1-\bar{\alpha}) p + \bar{\alpha} q)
\end{align}
with $\bar{\alpha} = \sum_i w_i \alpha_i$, is an $f$-divergence.
\end{coro}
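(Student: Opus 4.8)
The plan is to recognize each summand of $JS^{\alpha,w}(p:q)$ as a single skew of the relative entropy, invoke Theorem~\ref{thm: skewing preserves f divergence} to conclude each summand is an $f$-divergence, and then use the elementary fact that a nonnegative weighted sum of $f$-divergences of a \emph{common ordered pair} is again an $f$-divergence.

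First I would fix the ordered pair $(\mu,\nu)=(p,q)$ and match parameters against the definition \eqref{eq: skew divergence defintion}. Since $r\mu+(1-r)\nu = rp+(1-r)q$, choosing $r = 1-\alpha_i$ produces $(1-\alpha_i)p+\alpha_i q$, and choosing $t = 1-\bar\alpha$ produces $(1-\bar\alpha)p+\bar\alpha q$. Hence, with $f(x)=x\log x$ generating the relative entropy $D$, the $i$-th summand is exactly the skew divergence
\[
    D\big((1-\alpha_i)p + \alpha_i q \,\|\, (1-\bar\alpha)p + \bar\alpha q\big) = S_{f,\,1-\alpha_i,\,1-\bar\alpha}(p \| q).
\]
By Theorem~\ref{thm: skewing preserves f divergence} this equals $D_{\hat f_i}(p\|q)$, where
\[
    \hat f_i(x) = \big((1-\bar\alpha)x + \bar\alpha\big)\, f\!\left(\frac{(1-\alpha_i)x + \alpha_i}{(1-\bar\alpha)x + \bar\alpha}\right)
\]
is convex and satisfies $\hat f_i(1) = f(1) = 0$.

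The final step combines the summands. Because every term is an $f$-divergence for the \emph{same} ordered pair $(p,q)$, each is built from the same likelihood ratio $u/v$ (with $u,v$ the densities of $p,q$) via the defining integral \eqref{eq: f -divergence definition}, so linearity of that integral gives
\[
    JS^{\alpha,w}(p:q) = \sum_{i=1}^k w_i\, D_{\hat f_i}(p\|q) = D_{\bar f}(p\|q), \qquad \bar f \coloneqq \sum_{i=1}^k w_i \hat f_i.
\]
The function $\bar f$ is a nonnegative combination of convex functions, hence convex, and $\bar f(1)=\sum_i w_i\hat f_i(1)=0$, so $D_{\bar f}$ is a genuine $f$-divergence, proving the claim.

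There is essentially no difficult step here; the only point meriting care is that the summands must be written as $f$-divergences of the same ordered pair \emph{before} summing, so that their generating functions add to a single convex $\bar f$. This is automatic in the present statement since every term uses $(p,q)$ in the same order—had the orientation varied between terms, one could not simply add the generating functions, and the identification with a single $f$-divergence would fail.
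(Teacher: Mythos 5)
Your proposal is correct and follows the same route as the paper: identify each summand as a skew of the relative entropy, apply Theorem~\ref{thm: skewing preserves f divergence}, and use closure of $f$-divergences under nonnegative linear combinations. You merely make explicit what the paper leaves implicit (the parameter matching $r=1-\alpha_i$, $t=1-\bar\alpha$, the generators $\hat f_i$, and the convexity of $\bar f=\sum_i w_i \hat f_i$ with $\bar f(1)=0$), which is a faithful elaboration rather than a different argument.
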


\begin{proof}
    By Theorem \ref{thm: skewing preserves f divergence} the mapping $(p,q) \mapsto D( (1-\alpha_i) p + \alpha_i q|| (1-\bar{\alpha}) p + \bar{\alpha} q)$ is an $f$-divergence, and the result follows since the class of $f$-divergences is stable under non-negative linear combinations.
\end{proof}

We will only further pursue the case that $r = 1$, and write $S_{f,t}(\mu || \nu) \coloneqq S_{f,1,t}(\mu|| \nu)$.

We now skew,  Pearson's $\upchi^2$ divergence which we recall below.
\begin{defn}\cite{pearson1900x}
    For measures $\mu$ and $\nu$ absolutely continuous with respect to a common reference measure $d\gamma$ so that $d\mu = u d\gamma$ and $d \nu = v d\gamma$,  define
    \[
        \upchi^2( \mu; \nu) = \int \left(1- \frac{d\mu}{d \nu} \right)^2 d \nu = \int \frac{(u-v)^2}{v} d\gamma,
    \]
    and $\upchi^2(\mu; \nu) = \infty$ when $\frac{d \mu}{d \nu}$ does not exist.
\end{defn}

\begin{defn}
 For $t \in [0,1]$ and measures $\mu$ and $\nu$, define the skew $\upchi^2_t$  via:
    \[
        \upchi_t^2( \mu; \nu) = \int \frac{ \left(1- \frac{d \mu}{d \nu} \right)^2}{ t \frac{d \mu}{d \nu} + (1-t)} d \nu.
    \]
\end{defn}

Formally, the $\upchi^2$ divergence of Neyman \cite{neyman1949contribution} differs only by a notational convention $\upchi_N^2( \nu; \mu) = \upchi^2(\mu;\nu)$, see \cite{liese2006divergences} for more modern treatment and \cite{cressie1984multinomial} for background on the distances significance in statistics.  Now let us present a skew $\upchi^2$ divergence, which interpolates the Pearson and Neyman $\upchi^2$ divergences.
\begin{prop} \label{prop: chi skew properties}
    The skew $\upchi^2_t$ divergence satisfies the following,
    \begin{enumerate}
         \item \label{item: skew chi square divergence with respect to reference measure}
        When $d \mu = u d\gamma$ and $d \nu = v d\gamma$ with respect to some reference measure $\gamma$, then
        \[
            \upchi_t^2 ( \mu; \nu) = \int \frac{(u-v)^2}{tu + (1-t)v} d\gamma.
        \]
        \item \label{item: skew chi to regular chi}
        For $t=0$, $(1-t)^2 \upchi^2_t(\mu; \nu) = \upchi^2(\mu; t \mu + (1-t) \nu)$.
        \item \label{item: symmetry of chi square}
        $\upchi^2_t( \mu; \nu) = \upchi^2_{1-t}( \nu; \mu)$.
        \item \label{item: skew chi square f divergence}
        $\upchi^2_t$ is an $f$-divergence with $f(x) = (x-1)^2/(1 + t(x-1))$.
        \item \label{item: skew chi interpolates Neyman and Pearson}
        The skew $\chi_t^2$ interpolates the divergences of Neyman and Pearson, $\upchi_0^2(\mu; \nu) = \upchi^2(\mu; \nu)$ and $\upchi_1^2(\mu; \nu) = \upchi_N^2(\mu; \nu)$.
    \end{enumerate}
\end{prop}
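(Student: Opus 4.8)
The plan is to prove the first claim (the density representation) directly from the definition and then harvest the remaining four claims from it with essentially only elementary algebra. First I would unfold the definition of $\upchi^2_t(\mu;\nu)$ under the substitution $\frac{d\mu}{d\nu} = \frac u v$ and $d\nu = v\, d\gamma$. The numerator becomes $(1 - u/v)^2 = (u-v)^2/v^2$ and the denominator $t\frac u v + (1-t) = (tu + (1-t)v)/v$, so after multiplying by the density $v$ coming from $d\nu$ and cancelling, the integrand collapses to $(u-v)^2/(tu + (1-t)v)$. This yields the symmetric representation $\upchi^2_t(\mu;\nu) = \int (u-v)^2/(tu+(1-t)v)\, d\gamma$, which is the workhorse for everything else.

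The interpolation claim is then immediate: setting $t=0$ gives $\int (u-v)^2/v\, d\gamma = \upchi^2(\mu;\nu)$, while $t=1$ gives $\int (u-v)^2/u\, d\gamma$, which equals $\upchi^2(\nu;\mu)$ and hence $\upchi^2_N(\mu;\nu)$ by the Neyman convention $\upchi^2_N(\nu;\mu) = \upchi^2(\mu;\nu)$ recorded above. The symmetry claim follows in the same spirit: relabelling $u \leftrightarrow v$ and $t \mapsto 1-t$ in the density representation sends $tu + (1-t)v$ to $(1-t)v + tu$ and leaves $(u-v)^2$ invariant, so $\upchi^2_{1-t}(\nu;\mu) = \upchi^2_t(\mu;\nu)$.

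For the scaling identity I would compute $\upchi^2(\mu; t\mu + (1-t)\nu)$ straight from Pearson's definition, using that $t\mu + (1-t)\nu$ has density $tu + (1-t)v$. The key simplification is $u - (tu + (1-t)v) = (1-t)(u-v)$, so the numerator picks up a factor $(1-t)^2$ and the integral becomes $(1-t)^2 \int (u-v)^2/(tu+(1-t)v)\, d\gamma = (1-t)^2\,\upchi^2_t(\mu;\nu)$ by the first claim. (I would note in passing that this identity holds for every $t$, not only at $t=0$ as literally stated.)

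For the $f$-divergence claim I would take $f(x) = (x-1)^2/(1 + t(x-1))$, check $f(1)=0$, and verify via the first claim that $D_f(\mu\|\nu) = \int f(u/v)\, v\, d\gamma$ reproduces $\int (u-v)^2/(tu + (1-t)v)\, d\gamma$; this is exactly the substitution of the first step run backwards. The only genuinely non-formal step is convexity of $f$, and I expect it to be the main (indeed the only) obstacle. The cleanest route is a direct second-derivative computation: with $y = x-1$ and $g(y) = y^2/(1+ty)$ one finds $g''(y) = 2/(1+ty)^3$, i.e. $f''(x) = 2/(tx + (1-t))^3 > 0$ on $[0,\infty)$, since $tx + (1-t) > 0$ there for $t \in [0,1)$ (the case $t=1$ gives the Neyman generator $(x-1)^2/x$, handled by continuity or directly). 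Alternatively convexity can be obtained with no computation by combining the scaling identity with Theorem~\ref{thm: skewing preserves f divergence}: that theorem gives convexity of the $1,t$-skew generator $\hat f(x) = (tx + (1-t))\,(x/(tx+(1-t)) - 1)^2 = (1-t)^2 f(x)$ of Pearson's $\upchi^2$, and dividing by the positive constant $(1-t)^2$ preserves convexity. The remaining parts are pure bookkeeping once the density representation is in hand.
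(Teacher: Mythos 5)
Your proposal is correct and follows essentially the same route as the paper: it derives the density representation $\int (u-v)^2/(tu+(1-t)v)\,d\gamma$ first, obtains the scaling identity via the same cancellation $u-(tu+(1-t)v)=(1-t)(u-v)$, and establishes the $f$-divergence property from that identity together with Theorem~\ref{thm: skewing preserves f divergence}, which is exactly the paper's argument (your direct verification $f''(x)=2/(tx+(1-t))^3>0$ is a correct, self-contained substitute that the paper omits). Your parenthetical remark that the scaling identity holds for every $t$, not only at $t=0$, correctly flags what is evidently a typo in the statement, since the paper's own proof establishes it for general $t$.
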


\begin{proof}
For \eqref{item: skew chi square divergence with respect to reference measure}, the formula follows in the case $\mu \ll \nu$, from the fact that on the support of $\nu$,
\[
    \frac{u}{v} = \frac{ d \mu} {d \nu},
\]
so that
\begin{align*}
    \upchi_t^2( \mu; \nu)
        &=
            \int \frac{ \left( 1 - \frac u v \right)^2}{t \frac u v + (1-t)} v d\gamma
                \\
        &=
            \int \frac{(u-v)^2} {t u + (1-t) v} d\gamma.
\end{align*}
To prove \eqref{item: skew chi to regular chi}, we use \eqref{item: skew chi square divergence with respect to reference measure}.  Note that $d \mu = u d\gamma$ and $d \nu = v d\gamma$ implies that $d(t \mu + (1-t) \nu) = (t u + (1-t)v) d\gamma$ so that
\begin{align*}
    \upchi^2( \mu; t \mu + (1-t) \nu) 
        &=
            \int \frac{(u - (t u + (1-t) v))^2}{ t u + (1-t) v } d\gamma
                \\
        &=
            (1-t)^2 \int \frac{(u -  v)^2}{ t u + (1-t) v } d\gamma
                \\
        &=
            (1-t)^2 \upchi_t^2(\mu; \nu).
\end{align*}
It is immediate from \eqref{item: skew chi square divergence with respect to reference measure} that \eqref{item: symmetry of chi square} holds.  That $\chi_t^2$ is an $f$-divergence follows from \eqref{item: skew chi to regular chi} and Theorem \ref{thm: skewing preserves f divergence}, so that \eqref{item: skew chi square f divergence} follows.  To prove \eqref{item: skew chi interpolates Neyman and Pearson}, note that $\chi_0^2(\mu;\nu) = \chi^2(\mu;\nu)$ is immediate from the definition.  Applying this and symmetry from \eqref{item: symmetry of chi square} we have $\chi_1^2(\mu;\nu) = \chi_0^2(\nu;\mu) = \chi^2(\nu;\mu) = \chi^2_N(\mu;\nu)$.
\end{proof}

The skew divergence and skew $\upchi^2$ inherit bounds from $t=0$ case, and enjoy an interrelation unique to the skew setting as described below.

\begin{thm} \label{thm: skew chi squared related to skew relative entropy}
    For probability measures $\mu$ and $\nu$ and $t \in (0,1)$
    \begin{align} \label{eq: skew divergence versus skew chi square}
        S_t( \mu || \nu) \leq (1-t)^2 \upchi^2_t( \mu; \nu)
    \end{align}
    and
    \begin{align} \label{eq: derivative of skew divergence is skew chi}
        \frac d {dt} S_t( \mu || \nu) = (t-1) \upchi^2_t ( \mu ; \nu).
    \end{align}
\end{thm}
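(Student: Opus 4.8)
The plan is to fix a common reference measure $\gamma$, write $u = d\mu/d\gamma$ and $v = d\nu/d\gamma$, and abbreviate the mixture density $m_t \coloneqq tu + (1-t)v$, which for $t \in (0,1)$ is the density of $t\mu + (1-t)\nu$ and always dominates $\mu$. With this notation, Proposition \ref{prop: divergence skew properties}\eqref{item: skew divergence to KL divergence} and Proposition \ref{prop: chi skew properties}\eqref{item: skew chi square divergence with respect to reference measure} give the two working expressions $S_t(\mu || \nu) = \int u \log(u/m_t)\, d\gamma$ and $\upchi^2_t(\mu; \nu) = \int (u-v)^2/m_t\, d\gamma$, and the whole argument reduces to manipulating these integrals.

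For the inequality \eqref{eq: skew divergence versus skew chi square}, I would first record the classical estimate $D(P || Q) \leq \upchi^2(P; Q)$ for probability measures with $P \ll Q$: writing $p, q$ for the densities, the bound $\log x \leq x - 1$ yields $\int p \log(p/q)\, d\gamma \leq \int p(p/q - 1)\, d\gamma = \int p^2/q\, d\gamma - 1 = \upchi^2(P; Q)$. Applying this with $P = \mu$ and $Q = t\mu + (1-t)\nu$ (legitimate precisely because skewing guarantees $\mu \ll t\mu + (1-t)\nu$ for $t > 0$) and invoking Proposition \ref{prop: divergence skew properties}\eqref{item: skew divergence to KL divergence} gives $S_t(\mu || \nu) = D(\mu || t\mu + (1-t)\nu) \leq \upchi^2(\mu; t\mu + (1-t)\nu)$; Proposition \ref{prop: chi skew properties}\eqref{item: skew chi to regular chi} then rewrites the right-hand side as $(1-t)^2 \upchi^2_t(\mu; \nu)$, which is exactly \eqref{eq: skew divergence versus skew chi square}.

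For the differential identity \eqref{eq: derivative of skew divergence is skew chi}, I would split $S_t(\mu || \nu) = \int u \log u\, d\gamma - \int u \log m_t\, d\gamma$, note the first term is constant in $t$, and differentiate the second. Since $\partial_t m_t = u - v$, differentiating under the integral sign gives $\frac{d}{dt} S_t(\mu || \nu) = -\int u(u-v)/m_t\, d\gamma$. The key algebraic observation is the identity $u = m_t + (1-t)(u-v)$, which is immediate from $u - m_t = (1-t)(u-v)$. Substituting it and using that $\int (u-v)\, d\gamma = 1 - 1 = 0$ collapses the expression to $-\int (u-v)\, d\gamma - (1-t)\int (u-v)^2/m_t\, d\gamma = (t-1)\upchi^2_t(\mu; \nu)$, as desired.

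The only point that requires genuine care, and which I expect to be the main (though mild) obstacle, is justifying the interchange of differentiation and integration. Here I would fix a compact subinterval $[a,b] \subset (0,1)$; then $m_t \geq tu \geq au$ yields the $t$-uniform and integrable dominating function $|u(u-v)/m_t| \leq |u-v|/a \leq (u+v)/a$, so dominated convergence validates the interchange on $[a,b]$, and since $[a,b]$ was arbitrary the identity holds throughout $(0,1)$. Everything else is routine algebra, and no deeper input than the $\log x \leq x-1$ bound and the mixture identity for $u$ is needed.
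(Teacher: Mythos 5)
Your proposal is correct and takes essentially the same route as the paper: the inequality comes from $\log x \leq x-1$ applied to $D(\mu \,\|\, t\mu + (1-t)\nu)$ together with Proposition \ref{prop: chi skew properties}\eqref{item: skew chi to regular chi}, and the derivative identity comes from direct differentiation combined with the mixture identity $u - m_t = (1-t)(u-v)$, which is exactly the paper's identity $(1-t)(y-1) = y - (ty + (1-t))$ in density form. The only difference is that you explicitly justify the interchange of differentiation and integration via dominated convergence on compact subintervals of $(0,1)$ --- a step the paper performs without comment --- which is a welcome (if mild) tightening rather than a different argument.
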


\begin{proof}
Recall that when $t=0$, the concavity of logarithm bounds $\log x$ by its tangent line $x-1$ so that,
\begin{align}
    \int \log\left( \frac{ d\mu}{d\nu} \right) d \mu 
        &\leq 
            \int \left( \frac{d\mu}{d \nu} - 1 \right) d \mu
                \\
        &=
            \int \left( \frac{d \mu}{d \nu} - 1 \right)^2 d \nu,
\end{align}
giving the classical bound,
\begin{align} \label{eq: classical divergence versus chi}
    D(\mu || \nu ) \leq \upchi^2( \mu; \nu).
\end{align}
Applying \eqref{eq: classical divergence versus chi} to the identities Proposition \ref{prop: divergence skew properties}, \eqref{item: skew divergence to KL divergence} and Proposition \ref{prop: chi skew properties},\eqref{item: skew chi to regular chi} gives
\begin{align}
    S_t(\mu || \nu) 
        &=
            D(\mu || t \mu + (1-t) \nu)
                \\
        &\leq
            \upchi^2( \mu ; t \mu + (1-t) \nu)
                \\
        &=
            (1-t)^2 \upchi^2_t(\mu; \nu).
\end{align}
Applying the identity 
$
(1-t)(y -1) = y - (t y+(1-t))
$
we have
\begin{align} \label{skew chi identity (to compare with total variation)}
    (1-t) \upchi_t^2 ( \mu ; \nu) 
        &= 
            \int \frac{ (\frac{d \mu}{d \nu} - 1) (\frac{d \mu}{d \nu} - (t \frac{d \mu}{d \nu}+(1-t)))}{ t \frac{d \mu}{d \nu} + (1-t)} d \nu
                \\
        &=
            \int \frac{ \frac{d \mu}{d \nu} - 1}{ t \frac{d \mu}{d \nu} + (1-t)} d \mu - \int ( \frac{d \mu}{d\nu} - 1) d \nu
                \\
        &=
            \int \frac{ \frac{d \mu}{d \nu} - 1}{ t \frac{d \mu}{d \nu} + (1-t)} d \mu.
\end{align}
Observing the expression
\[
    S_t( \mu || \nu) = \int \log \frac{d\mu}{d \nu} - \log \left(t \frac{d \mu}{d \nu} + (1-t)\right) d \mu,
\]
we compute directly,
\begin{align}
    \frac{d}{dt} S_t( \mu || \nu ) 
        &=
            - \int \frac{ \frac{d\mu}{d \nu} -1}{ t \frac{d \mu}{d \nu} + (1-t)} d \mu.
\end{align}
\end{proof}

Recall the total variation norm for a signed measure $\gamma$ to be $\sup_A \|\gamma(A)\|_{TV}$, and adopting the notation $x_+ = \max \{ x , 0 \}$ then
\[
    \| \mu - \nu \|_{TV} = \int \left( \frac{d\mu}{d \nu} - 1\right)_+ d \nu.
\]

\begin{thm} \label{thm: skews bounded by total variation}
For $\mu$ and $\nu$, and $t \in (0,1)$,
\begin{align} \label{eq: skew chi by total variation}
    \upchi^2_t( \mu ; \nu) \leq \frac{ \| \mu - \nu \|_{TV}}{t (1-t)}
\end{align}
\begin{align} \label{eq: skew divergence by total variation}
    S_t( \mu || \nu ) \leq -\log t \| \mu - \nu \|_{TV}.
\end{align}
\end{thm}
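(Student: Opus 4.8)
The plan is to establish the skew $\upchi^2$ bound \eqref{eq: skew chi by total variation} first by a direct region-wise estimate, and then to deduce the skew divergence bound \eqref{eq: skew divergence by total variation} by integrating the differential identity \eqref{eq: derivative of skew divergence is skew chi} of Theorem \ref{thm: skew chi squared related to skew relative entropy} and substituting \eqref{eq: skew chi by total variation}.

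For \eqref{eq: skew chi by total variation}, I would fix densities $u = \frac{d\mu}{d\gamma}$ and $v = \frac{d\nu}{d\gamma}$ with respect to a common reference $\gamma$ and use the representation $\upchi^2_t(\mu;\nu) = \int \frac{(u-v)^2}{tu+(1-t)v}\,d\gamma$ from Proposition \ref{prop: chi skew properties}\eqref{item: skew chi square divergence with respect to reference measure}. The key observation is the algebraic splitting $tu+(1-t)v = v + t(u-v) = u - (1-t)(u-v)$, which I would exploit on the two regions $A = \{u > v\}$ and $A^c = \{u \le v\}$ separately. On $A$ one has $tu + (1-t)v \ge t(u-v) > 0$, hence $\frac{(u-v)^2}{tu+(1-t)v} \le \frac{u-v}{t}$; on $A^c$ one has $tu+(1-t)v \ge (1-t)(v-u) \ge 0$, hence $\frac{(u-v)^2}{tu+(1-t)v} \le \frac{v-u}{1-t}$. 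Integrating and using that $\int_A (u-v)\,d\gamma = \int_{A^c}(v-u)\,d\gamma = \|\mu-\nu\|_{TV}$ (the positive and negative parts of $u-v$ carry equal mass since both densities integrate to $1$) yields $\upchi^2_t(\mu;\nu) \le \bigl(\frac 1 t + \frac 1 {1-t}\bigr)\|\mu-\nu\|_{TV} = \frac{\|\mu-\nu\|_{TV}}{t(1-t)}$.

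For \eqref{eq: skew divergence by total variation}, I would integrate the identity $\frac{d}{ds}S_s(\mu||\nu) = (s-1)\upchi^2_s(\mu;\nu)$ from $s = t$ to $s = 1$. Since $S_1(\mu||\nu) = 0$ by Proposition \ref{prop: divergence skew properties}\eqref{item: D one is zero}, this gives $S_t(\mu||\nu) = \int_t^1 (1-s)\,\upchi^2_s(\mu;\nu)\,ds$. Substituting the bound just proved, the factors $(1-s)$ cancel and I obtain $S_t(\mu||\nu) \le \|\mu-\nu\|_{TV}\int_t^1 \frac{ds}{s} = -\log t\,\|\mu-\nu\|_{TV}$, as desired.

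I expect the main conceptual step to be the region-wise estimate in the first part: a single uniform pointwise bound of $\frac{(u-v)^2}{tu+(1-t)v}$ by a multiple of $|u-v|$ does not hold, so the decomposition by the sign of $u-v$, together with the complementary lower bounds $tu+(1-t)v \ge t(u-v)$ and $tu+(1-t)v \ge (1-t)(v-u)$, is what makes the argument work. The only other point needing a little care is justifying the fundamental theorem of calculus on $[t,1]$ in the second part; this is unproblematic because the first part shows the integrand satisfies $(1-s)\upchi^2_s(\mu;\nu) \le \|\mu-\nu\|_{TV}/s$, which is integrable on $[t,1]$ for $t>0$.
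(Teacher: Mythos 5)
Your proof is correct, and its overall architecture matches the paper's: establish the $\upchi^2_t$ bound first, then integrate the differential identity $\frac{d}{ds}S_s(\mu||\nu) = (s-1)\upchi^2_s(\mu;\nu)$ using $S_1(\mu||\nu)=0$. Your second step is in fact the paper's argument verbatim up to a change of variables: the paper sets $\varphi(\lambda) = S_{e^{-\lambda}}(\mu||\nu)$ and integrates $\int_0^\lambda e^{-s}(1-e^{-s})\upchi^2_{e^{-s}}(\mu;\nu)\,ds \leq \lambda \|\mu-\nu\|_{TV}$, which under $s \mapsto e^{-s}$ is exactly your $S_t = \int_t^1 (1-s)\upchi^2_s\,ds \leq \|\mu-\nu\|_{TV}\int_t^1 \frac{ds}{s}$. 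Where you genuinely diverge is the first step. The paper reuses the identity $(1-t)\upchi^2_t(\mu;\nu) = \int \frac{\frac{d\mu}{d\nu}\left(\frac{d\mu}{d\nu}-1\right)}{t\frac{d\mu}{d\nu}+(1-t)}\,d\nu$ already derived in the proof of Theorem \ref{thm: skew chi squared related to skew relative entropy}, bounds $\frac{d\mu}{d\nu}\big/\bigl(t\frac{d\mu}{d\nu}+(1-t)\bigr)$ by $\frac{1}{t}$ on the region $\{\frac{d\mu}{d\nu}\geq 1\}$, and simply discards the region where the integrand is nonpositive, landing on $\frac{1}{t(1-t)}\int\left(\frac{d\mu}{d\nu}-1\right)_+ d\nu$. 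You instead prove the inequality from scratch by a two-region pointwise estimate on $\int\frac{(u-v)^2}{tu+(1-t)v}\,d\gamma$, paying $\frac1t$ on $\{u>v\}$ and $\frac{1}{1-t}$ on $\{u\le v\}$ (both denominator bounds check out: $tu+(1-t)v - t(u-v) = v \geq 0$ and $tu+(1-t)v-(1-t)(v-u) = u \geq 0$), and recombining via $\frac1t + \frac1{1-t} = \frac{1}{t(1-t)}$ and the equality of positive and negative part masses. Your version buys two things: it is self-contained (no appeal to an identity buried in an earlier proof), and by working with densities relative to a common reference measure $\gamma$ it transparently covers $\mu \not\ll \nu$, where $\frac{d\mu}{d\nu}$ is undefined but $\upchi^2_t$ remains finite for $t\in(0,1)$; the paper's route is shorter given the identity already in hand and makes visible that the negative-sign region contributes nothing, not merely something controllable. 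Your closing meta-remark is also accurate: the constant $\frac{1}{2t(1-t)}$ needed for a single uniform pointwise bound by a multiple of $|u-v|$ fails off $t=\frac12$ (test $v=0$ and $u=0$ respectively), so the asymmetric weighting by sign region is essential. One shared elision, not a gap relative to the paper: the fundamental theorem of calculus on $[t,1]$ needs absolute continuity of $s \mapsto S_s(\mu||\nu)$, which follows from its convexity (Proposition \ref{prop: divergence skew properties}) together with monotonicity forcing continuity at $s=1$; the paper passes over this in exactly the same way.
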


\begin{proof}
From the identity in \eqref{skew chi identity (to compare with total variation)} we have
\begin{align*}
    \upchi_t^2(\mu; \nu) 
        &=
            \frac{1}{1-t} \int \frac{ \frac{d\mu}{d \nu}( \frac{d\mu}{d \nu} - 1) }{ t  \frac{d\mu}{d \nu} + (1-t)} d \nu
                \\
        &\leq
            \frac{1}{t(1-t)} \int \left( \frac{d\mu}{d \nu} - 1 \right)_+ d \nu
                \\
        &=
            \frac{ \| \mu - \nu \|_{TV}}{t (1-t)}.
\end{align*}
    Define the function 
    \[
        \varphi(\lambda) \coloneqq S_{e^{-\lambda}}( \mu || \nu ),
    \]
    for $\lambda \in [0, \infty)$ and note that $\varphi(0) = D(\mu||\mu) = 0$.  Thus we can write
    \begin{align*}
        \varphi(\lambda) 
            &= 
                \int_0^\lambda \frac{d}{ds} S_{e^{-s}}(\mu || \nu) d s
                    \\
            &=
                \int_0^\lambda e^{-s} (1-e^{-s}) \upchi^2_{e^{-s}} ( \mu ; \nu) ds.
    \end{align*}
    Applying \eqref{eq: skew chi by total variation} gives
    \[
        S_{e^{-\lambda}}(\mu || \nu) = \varphi( \lambda) \leq \int_0^\lambda  \| \mu - \nu \|_{TV} ds = \lambda \|\mu - \nu \|_{TV}.
    \]
    The substitution $t = e^{-\lambda}$ gives \eqref{eq: skew divergence by total variation}.
\end{proof}

Observe that \eqref{eq: skew divergence by total variation} of Theorem \ref{thm: skews bounded by total variation} recovers a reverse Pinsker inequality due to Verdu \cite{verdu2014total}.

\begin{coro}[\cite{verdu2014total} Theorem 7] \label{cor: Verdu reverse Pinsker}
    For probability measures $\mu$ and $\gamma$ such that $\frac{d\mu}{d \gamma} \leq \frac 1 \beta$ with $\beta \in (0,1)$
    \[
        \| \mu - \gamma \|_{TV} \geq \frac{1 - \beta}{\log \frac 1 \beta} D( \mu || \gamma ).
    \] 
\end{coro}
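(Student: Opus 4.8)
The plan is to recognize this reverse Pinsker inequality as a direct consequence of the skew divergence bound \eqref{eq: skew divergence by total variation}, once one realizes $\gamma$ as the skewed mixture appearing in the definition of $S_t$. Concretely, since $S_t(\mu \| \nu) = D(\mu \| t\mu + (1-t)\nu)$ by Proposition \ref{prop: divergence skew properties}, part \eqref{item: skew divergence to KL divergence}, the target quantity $D(\mu \| \gamma)$ can be realized as $S_t(\mu \| \nu)$ provided one can solve $t\mu + (1-t)\nu = \gamma$ for an admissible pair $(t,\nu)$.

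First I would set $t = \beta$ and define the candidate measure
\[
    \nu \coloneqq \frac{\gamma - \beta \mu}{1 - \beta}.
\]
The hypothesis $\frac{d\mu}{d\gamma} \leq \frac{1}{\beta}$, equivalently $\beta \mu \leq \gamma$ as measures, guarantees $\gamma - \beta\mu \geq 0$, so $\nu$ is a nonnegative measure; and since $\mu$ and $\gamma$ are probability measures its total mass is $\frac{1-\beta}{1-\beta} = 1$, so $\nu$ is genuinely a probability measure. By construction $\beta\mu + (1-\beta)\nu = \gamma$, whence
\[
    S_\beta(\mu \| \nu) = D(\mu \| \beta\mu + (1-\beta)\nu) = D(\mu \| \gamma).
\]

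Next I would invoke \eqref{eq: skew divergence by total variation} with $t = \beta$ to obtain
\[
    D(\mu \| \gamma) = S_\beta(\mu \| \nu) \leq \log\tfrac{1}{\beta}\,\|\mu - \nu\|_{TV}.
\]
It then remains only to replace $\|\mu - \nu\|_{TV}$ by $\|\mu - \gamma\|_{TV}$. From $\gamma = \beta\mu + (1-\beta)\nu$ one reads off the signed-measure identity $\gamma - \mu = (1-\beta)(\nu - \mu)$, and positive homogeneity of the total variation norm gives $\|\mu - \gamma\|_{TV} = (1-\beta)\|\mu - \nu\|_{TV}$. Substituting this and rearranging yields
\[
    \|\mu - \gamma\|_{TV} \geq \frac{1-\beta}{\log\frac{1}{\beta}} D(\mu \| \gamma),
\]
as claimed.

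There is no genuine technical obstacle here; the only nonroutine step is spotting the substitution $t = \beta$ together with the reconstruction of $\nu$ from $\gamma$ and $\mu$. The density hypothesis is consumed in exactly one place, namely ensuring $\nu \geq 0$ so that $\nu$ qualifies as a probability measure and Theorem \ref{thm: skews bounded by total variation} is applicable; everything else is the homogeneity bookkeeping relating the two total variation distances.
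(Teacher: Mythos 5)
Your proof is correct and follows essentially the same route as the paper: the same construction $\nu = (\gamma - \beta\mu)/(1-\beta)$ so that $\gamma = \beta\mu + (1-\beta)\nu$, followed by the bound \eqref{eq: skew divergence by total variation} and the homogeneity identity $\|\mu - \gamma\|_{TV} = (1-\beta)\|\mu - \nu\|_{TV}$. If anything, your write-up is more careful than the paper's, which compresses the verification that $\nu$ is a probability measure and the identification $S_\beta(\mu\|\nu) = D(\mu\|\gamma)$ into a single line.
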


\begin{proof}
    The hypothesis implies that  $ \nu = \frac{\gamma - \beta \mu}{1 - \beta}$ is a probability measure satisfying $\gamma = \beta \mu + (1-\beta) \nu$.  Applying \eqref{eq: skew divergence by total variation}
    \[
        D( \mu || \nu ) = S_\beta ( \mu || \nu) \leq - \log \beta \| \mu - \nu \|_{TV} = \frac{- \log \beta }{ 1- \beta} \| \mu - \gamma \|_{TV}.
    \]
\end{proof}
It is easily seen that the two results, \eqref{eq: skew divergence by total variation} and Theorem 7 of \cite{verdu2014total} are actually equivalent.  In contrast the proof of \eqref{eq: skew divergence by total variation} hinges on foundational properties of the divergence metrics, while Verdu leverages the monotonicity of $x \ln x/(x-1)$ for $x>1$.

\begin{proof}[Proof of Theorem \ref{thm: discrete continuous entropy inequality sharpening}]
From Proposition \ref{prop: compensation identity} 
\begin{align}
h(\sum_i p_i f_i) 
    &= 
        \sum_i p_i h(f_i) + \sum_i p_i D(f_i ||f)
            \\
    &=
        \sum_i p_i h(f_i) + \sum_i p_i S_{p_i}(f_i|| \tilde{f}_i).
\end{align}
    By Theorem \ref{thm: skews bounded by total variation}, $S_{p_i}(f_i||\tilde{f}_i) \leq \log \frac 1 {p_i} \|f_i - \tilde{f_i}\|_{TV}$. Applying H\"older's inequality completes the proof,
    \begin{align}
        \sum_i p_i S_{p_i} (f_i || \tilde{f}_i)
            &\leq
                \sum_i p_i \log \frac 1 {p_i} \| f_i - \tilde{f}_i \|_{TV}
                    \\
            &\leq
                \mathcal{T} \sum_i p_i \log \frac 1 {p_i},
    \end{align}
    where we recall $\mathcal{T} \coloneqq \sup_i \|f_i - \tilde{f}_i\|_{TV}$.
\end{proof}

Since the total variation of any two measures is bounded above by $1$ this is indeed a sharpening of \eqref{eq: discrete-continuous entropy inequality}.  Expressed in random variables it is
\begin{align}
    h_\gamma(Z) \leq \mathcal{T} H(X) + h_\gamma(Z|X).
\end{align}
which when $\gamma$ is a Haar measure and we apply to $\tilde{Z} = X+Z$ gives
\begin{align} \label{eq: seeing it as a sharpening}
    h_\gamma(X+Z) \leq \mathcal{T} H(X) + h_\gamma(Z|X),
\end{align}
while the right hand side of \eqref{eq: seeing it as a sharpening} reduces further to
\begin{align}
    h_\gamma(X+Z) \leq \mathcal{T} H(X) + h_\gamma(Z)
\end{align}
in the case that $X$ and $Z$ are independent.

Note that the quantity $h_\gamma(\sum_i p_i f_i) - \sum_i p_i h_\gamma(f_i) = \sum_{i} p_i D(f_i || f)$ can be considered a generalized Jensen-Shannon divergence,
as the case that $n=2$ and $p_1 = p_2 = \frac 1 2$ this is exactly the Jensen-Shannon Divergence.
\begin{defn}
For probability measures $\mu$ and $\nu$ define the Jensen-Shannon divergence,
\begin{align}
    JSD(\mu || \nu) = \frac 1 2 \left( D(\mu || 2^{-1} (\mu + \nu)) + D(\nu || 2^{-1} (\mu + \nu) \right).
\end{align}
\end{defn}

Theorem \ref{thm: discrete continuous entropy inequality sharpening} recovers the classical bound of the Jensen-Shannon divergence by the total variation, due to Lin, see also \cite{topsoe2000some, topsoe2003jenson} for other proofs.

\begin{coro} \cite{lin1991divergence} \label{cor: Topsoe and Lin}
    For $\mu$ and $\nu$ probability measures,
    \[
        JSD (\mu || \nu) \leq \| \mu - \nu \|_{TV} \log 2.
    \]
\end{coro}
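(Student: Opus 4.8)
The plan is to recognize the Jensen--Shannon divergence as precisely the concavity deficit in the two-component, equal-weight case, and then invoke Theorem~\ref{thm: discrete continuous entropy inequality sharpening} directly. First I would specialize the setup of that theorem to $n=2$ components, taking $f_1 = \mu$ and $f_2 = \nu$ (viewed as densities with respect to any common reference measure, e.g.\ $\frac 1 2(\mu+\nu)$, so that the general Polish-space version of the theorem applies), and weights $p_1 = p_2 = \frac 1 2$. Then the mixture is $f = \frac 1 2(\mu + \nu)$, and by Proposition~\ref{prop: compensation identity} the left-hand side of the theorem equals
\begin{align*}
    \sum_i p_i D(f_i \| f) = \tfrac 1 2 D\!\left(\mu \,\Big\|\, \tfrac 1 2(\mu+\nu)\right) + \tfrac 1 2 D\!\left(\nu \,\Big\|\, \tfrac 1 2(\mu+\nu)\right) = JSD(\mu \| \nu),
\end{align*}
which matches the definition of $JSD$ verbatim.

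Next I would evaluate the two ingredients on the right-hand side of Theorem~\ref{thm: discrete continuous entropy inequality sharpening}. The entropy of the weight vector is immediate: $H(p) = -\frac 1 2 \log \frac 1 2 - \frac 1 2 \log \frac 1 2 = \log 2$. For the total-variation factor $\mathcal{T}_f$, I would compute the mixture complements: since $\tilde{f}_1 = \sum_{i \neq 1} \frac{p_i}{1-p_1} f_i = \frac{p_2}{1-p_1} f_2 = f_2 = \nu$ and symmetrically $\tilde{f}_2 = f_1 = \mu$, we obtain
\begin{align*}
    \mathcal{T}_f = \sup_i \|f_i - \tilde{f}_i\|_{TV} = \max\!\left( \|\mu - \nu\|_{TV},\, \|\nu - \mu\|_{TV} \right) = \|\mu - \nu\|_{TV},
\end{align*}
using the symmetry of the total-variation distance.

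Finally I would substitute these two computations into the conclusion $h(f) - \sum_i p_i h(f_i) \leq \mathcal{T}_f H(p)$ of Theorem~\ref{thm: discrete continuous entropy inequality sharpening} to arrive at $JSD(\mu \| \nu) \leq \|\mu - \nu\|_{TV} \log 2$, as claimed. There is essentially no analytic obstacle here: the entire content has been front-loaded into Theorem~\ref{thm: discrete continuous entropy inequality sharpening} (and ultimately into the skew-divergence bound of Theorem~\ref{thm: skews bounded by total variation}), so the corollary is a pure specialization. The only point requiring a moment's care is the identification of the concavity deficit with $JSD$ via Proposition~\ref{prop: compensation identity} and the verification that the mixture complements collapse to the opposite measures, both of which are routine once the weights are set to $\frac 1 2$.
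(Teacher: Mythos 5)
Your proposal is correct and follows exactly the paper's route: the paper's proof of Corollary~\ref{cor: Topsoe and Lin} is the one-line instruction to apply Theorem~\ref{thm: discrete continuous entropy inequality sharpening} to the Jensen--Shannon divergence and observe that $\mathcal{T} = \|\mu - \nu\|_{TV}$ for two summands, which is precisely your specialization with $p_1 = p_2 = \frac 1 2$, $\tilde{f}_1 = \nu$, $\tilde{f}_2 = \mu$, and $H(p) = \log 2$. You have merely spelled out the routine verifications (the identification of the concavity deficit with $JSD$ via Proposition~\ref{prop: compensation identity} and the collapse of the mixture complements) that the paper leaves implicit.
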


\begin{proof}
    Apply Theorem \ref{thm: discrete continuous entropy inequality sharpening} to the Jensen-Shannon divergence, and observe that $\mathcal{T} = \|\mu - \nu \|_{TV}$ in the case of two summands.
\end{proof}

%
%
%
%
%
%
%
%
%
%
%
%
%
%
%
%
%
%
%
%
%
%

\section{Upper bounds} \label{sec: Upper bounds}

Let us state our assumptions and notations for this section.
\begin{enumerate} \label{en: Upper bounds assumptions}
    \item \label{item: X assumptions} $X$ is a random variable taking values in countable space $\mathcal{X}$, such that for $i \in \mathcal{X}$, $\mathbb{P}(X = i) = p_i$.
    \item \label{item: Z assumptions} $Z$ is an $\mathbb{R}^d$ valued random variable, with conditional densities, $f_i$ satisfying,
    \begin{align}
        \mathbb{P}(Z \in A | X=i) = \int_A f_i(z) dz = \mathbb{P}( T_i(W) \in A).
        \end{align}
        for $T_i$ a $\sqrt{\tau}$ bi-Lipschitz function, and $W$ is a spherically symmetric log-concave random vector with density $\varphi$.
    \item \label{item: seperation assumption lambda M}
        There exists $\lambda, M >0$ such that for any $i,j$,
        \begin{align}
            \#\{ k: T_{kj}(B_\lambda) \cap T_{ij}(B_\lambda) \neq \emptyset \} \leq M,
        \end{align}
        with $\#$ denoting cardinality and $T_{ij} \coloneqq T_i^{-1} \circ T_j$.
\end{enumerate}

Our assumption that $W$ is log-concave and spherically symmetric is equivalent to $W$ possessing a density $\varphi$ that is spherically symmetric in the sense that $\varphi(x) = \varphi(y)$ for $|x| =|y|$ and log-concave in the sense that $\varphi((1-t)x + ty) \geq \varphi^{1-t}(x) \varphi^t(y)$ holds for $t \in [0,1]$ and $x,y \in \mathbb{R}^d$.  By the spherical symmetry of $\varphi$, there exists $\psi:[0,\infty) \to [0,\infty)$ such that $\varphi(x) = \psi(|x|)$.
Note that by Radamacher's theorem, Lipschitz continuous functions are almost everywhere differentiable, and since bi-Lipschitz functions are necessarily invertible.   Thus, using this and \ref{prop: joint distribution of X and Z}, it follows that $Z$ has a density given by the the following expression  
\begin{align}
    f(z) = \sum_i p_i f_i(z) = \sum_i p_i \varphi(T_i^{-1}(z)) det((T_i^{-1})'(z)).
\end{align} Note that $T_i$ being $\sqrt{\tau}$ bi-Lipschitz implies $T_i^{-1}$ is $\sqrt{\tau}$ bi-Lipschitz as well, thus $T_{ij}$ is $\tau$-bi-Lipschitz, thus after potentially adjusting $T_{ij}'$ on set of measure zero, we have $ \frac 1 \tau \leq \|T_{ij}'(z)\| \leq \tau$. 
Under these assumptions we will prove the following generalization of \ref{thm: Fano Sharpening simple}.

\begin{thm} \label{thm: Fano Sharpening}
    For $X$ and $Z$ satisfying the assumptions of Section \eqref{en: Upper bounds assumptions} 
    \begin{align}
        h(Z) -  h(Z|X) \geq H(X) - \tilde{C}(W),
    \end{align}
    where $\tilde{C}$ is the following function dependent heavily on the tail behavior of $|W|$,
    \begin{align}
        \tilde{C}(W) = (M-1)(1-\mathscr{T}(\lambda \tau)) +  \mathscr{T}(\lambda)(M + h(W))  +  \mathscr{T}^{\frac 1 2}(\lambda) (\sqrt{d} + K(\varphi))
    \end{align}

with
\begin{align} \label{eq: definition of K(varphi)}
    K(\varphi) \coloneqq 
        \log \left[ \tau^d M \left( \|\varphi\|_\infty + \left(\frac{3}{\varepsilon} \right) \omega_d^{-1} \right) \right] \mathbb{P}^{\frac 1 2}(|W| > \lambda) + d \left( \int_{B_\lambda^c} \varphi(w)\log^{2}\left[ 1 + \frac{\varepsilon \tau +  \tau^2 |w|}{ \lambda}\right] dw \right)^{\frac 1 2}
\end{align}
where $\omega_d$ denoting the volume of the $d$-dimensional unit ball, $B_\lambda^c$ denotes the complement of $B_\lambda \in \mathbb{R}^d$.  
\end{thm}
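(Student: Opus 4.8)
The plan is to prove the equivalent estimate $H(X|Z) \le \tilde{C}(W)$, from which the theorem follows via the identity $h(Z)-h(Z|X)=I(X;Z)=H(X)-H(X|Z)$ of Proposition~\ref{prop: mutual information equality}. The starting point is the ``true index'' form $H(X|Z)=\mathbb{E}[-\log p(X|Z)]=\sum_j p_j\int_{\mathbb{R}^d}\varphi(w)\log(1+R_j(w))\,dw$, where $R_j(w)=\big(\sum_{k\neq j}p_k f_k(T_j(w))\big)/\big(p_j f_j(T_j(w))\big)$. Writing $z=T_j(w)$ and using $T_k^{-1}(z)=T_{kj}(w)$ together with the bi-Lipschitz Jacobian bounds $\tau^{-d/2}\le|\det (T_k^{-1})'|\le\tau^{d/2}$, I would express $f_k(T_j(w))=\varphi(T_{kj}(w))\,|\det(T_k^{-1})'(z)|$, so that the whole problem reduces to controlling the values of $\varphi$ at the preimages $T_{kj}(w)=T_k^{-1}(z)$.

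Next I would split, for each pair $(j,w)$, the index set into the competitors $C_j(w)=\{k:\,|T_{kj}(w)|\le \lambda\}=\{k:\,z\in T_k(B_\lambda)\}$ and the remaining indices, and split the latent variable according to $w\in B_\lambda$ or $w\in B_\lambda^c$. The elementary inequality $\log(1+a+b)\le\log(1+a)+\log(1+b)$ then separates $\log(1+R_j)$ into a competitor contribution and a tail contribution. The key combinatorial input is that for $w\in B_\lambda$ one has $\#C_j(w)\le M$: since $w\in B_\lambda=T_{jj}(B_\lambda)$ and $T_{kj}(w)\in T_{kj}(B_\lambda)$, each competitor forces $T_{kj}(B_\lambda)\cap T_{jj}(B_\lambda)\neq\emptyset$, so assumption~\ref{item: seperation assumption lambda M} caps their number. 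Recognizing $\int f(z)\sum_{j}p(j|z)\log\frac{\sum_{k\in C_j}p_k f_k(z)}{p_j f_j(z)}\,dz$ as an averaged posterior entropy over at most $M$ atoms bounds the competitor contribution by $\log M\le M-1$ on the typical set, which produces the terms $(M-1)(1-\mathscr{T}(\lambda\tau))$ (the distortion $\tau$ entering the effective radius) and $\mathscr{T}(\lambda)M$; a crude bound on $-\log f_j$ over $B_\lambda^c$ yields the $\mathscr{T}(\lambda)h(W)$ term.

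The main obstacle is the tail contribution $\log(1+R_j^{\mathrm{tail}})$ with $R_j^{\mathrm{tail}}\le \tau^d\varphi(w)^{-1}\sum_{k\notin C_j(w)}\varphi(T_{kj}(w))$. Here the crucial geometric observation is that the points $x_k:=T_{kj}(w)=T_k^{-1}(z)$ each lie in their own set $T_{kj}(B_\lambda)$, and these sets overlap at most $M$-fold by assumption~\ref{item: seperation assumption lambda M}; hence the $x_k$ are well spaced in exactly the sense required by Lemma~\ref{lem: sums bounded at well spaced points}. Applying that lemma bounds $\sum_k\varphi(x_k)$ by the smoothed sup-density $\tau^d M(\|\varphi\|_\infty+\tfrac{3}{\varepsilon}\omega_d^{-1})$, which is precisely the quantity inside the logarithm defining $K(\varphi)$. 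This reduces the tail integrand to a multiple of $\log[\tau^d M(\|\varphi\|_\infty+\tfrac3\varepsilon\omega_d^{-1})]$ plus a term of order $d\log[1+(\varepsilon\tau+\tau^2|w|)/\lambda]$ coming from the Jacobian distortion and the displacement $|x_k|\lesssim \tau^2|w|/\lambda$.

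Finally I would integrate against $\varphi$, applying Cauchy--Schwarz to factor the tail integrand as (the indicator that a far competitor is present) times (the size of the logarithm): the first factor contributes $\mathbb{P}^{1/2}(|W|>\lambda)=\mathscr{T}^{1/2}(\lambda)$ and the second contributes the $L^2$-quantities $\sqrt{d}$ and $K(\varphi)$, the displayed $\log^2$-integral being exactly the $L^2$ norm of the logarithm. The concentration of $|W|$ about $\sqrt{d}$ supplied by the convex-geometry estimate of \cite{FMW16} is what controls the $\sqrt{d}$ contribution and the residual $\log^2$ integral. The delicate points I expect to fight with are verifying the well-spacing hypothesis of Lemma~\ref{lem: sums bounded at well spaced points} quantitatively, namely identifying the radius $\varepsilon$ of the balls packed inside each $T_{kj}(B_\lambda)$, and the Cauchy--Schwarz bookkeeping that cleanly isolates the square-root tail factor $\mathscr{T}^{1/2}(\lambda)$ while keeping the remaining bound free of the weights $p_i$.
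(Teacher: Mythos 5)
Your outline reproduces most of the paper's analytic machinery faithfully: the identity $H(X|Z)=H(X)-I(X;Z)$, the change of variables $z=T_j(w)$ pulling everything back to the latent vector $W$, the split of the integral over $B_\lambda$ versus $B_\lambda^c$, the use of a displacement-counting estimate (Lemma~\ref{lem: counting points in balls}, with the multiplicity inflating like $M\left(1+\frac{\varepsilon\tau+\tau^2|w|}{\lambda}\right)^{d}$ rather than staying at $M$ --- your phrase ``exactly the sense required'' is too optimistic, though you correctly anticipate the resulting $d\log\left[1+\frac{\varepsilon\tau+\tau^2|w|}{\lambda}\right]$ term) to feed Lemma~\ref{lem: sums bounded at well spaced points}, and the final Cauchy--Schwarz factorization against $\mathscr{T}^{1/2}(\lambda)$ together with the varentropy bound of Theorem~\ref{thm: varentropy bound} producing $\sqrt{d}+K(\varphi)$. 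One small correction: the estimate of \cite{FMW16} is concentration of the information content $-\log\varphi(W)$ about $h(W)$ with variance at most $d$, not concentration of $|W|$ about $\sqrt d$. Your verification that $\#C_j(w)\leq M$ for $w\in B_\lambda$ is also correct.

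The genuine gap is exactly the point you defer to the last sentence: the mixture weights. Your claimed tail bound $R_j^{\mathrm{tail}}\leq \tau^d\varphi(w)^{-1}\sum_{k\notin C_j}\varphi(T_{kj}(w))$ is false as written, because $R_j^{\mathrm{tail}}$ carries the factor $1/p_j$ from the denominator $p_j f_j(T_j(w))=p_j\varphi(w)\det\left((T_j^{-1})'(T_j(w))\right)$, and $p_j\leq 1$ goes the wrong way, so it cannot be discarded. Keeping it, the quantity $\sum_j p_j\int_{B_\lambda^c}\varphi(w)\log\left(1+R_j^{\mathrm{tail}}(w)\right)dw$ acquires an additive contribution of order $H(p)\,\mathscr{T}(\lambda)$; since $\tilde C(W)$ is independent of the weights, no Cauchy--Schwarz bookkeeping can recover the stated bound when $H(p)$ is large. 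The difficulty is created by your per-index split via $\log(1+a+b)\leq\log(1+a)+\log(1+b)$: the posterior-entropy device that rescues the competitor piece (whose numerator contains the $p_jf_j$ term, so the ratio is controlled by $\log\#C_j\leq\log M$) has no analogue for the far piece, where the $j$-th term is absent from the numerator. The paper's resolution is a single application of Jensen's inequality \emph{before} any splitting: concavity of the logarithm gives $\sum_i p_i\log\left(1+\frac{a_i}{p_i}\right)\leq\log\left(1+\sum_i a_i\right)$, which cancels the outer weights entirely and leaves $\int\varphi(w)\log\left(1+\frac{\sum_j p_j\sum_{i\neq j}\varphi(T_{ji}(w))\det T_{ji}'(w)}{\varphi(w)}\right)dw$; only then does it split the domain, using $\log(1+x)\leq x$ on $B_\lambda$ together with the inclusion $T_{ji}(B_\lambda)\subseteq T_{ji}(0)+B_{\lambda\tau}$ and the fact that $x\mapsto\int_{x+B_{\lambda\tau}}\varphi$ is maximized at $x=0$ --- which is where the precise terms $(M-1)(1-\mathscr{T}(\lambda\tau))+M\mathscr{T}(\lambda)$ come from; your ``$\log M\leq M-1$ on the typical set'' does not by itself produce the $\tau$-inflated radius --- and on $B_\lambda^c$ absorbing the $1$ via $T_{jj}(w)=w$ before invoking the spacing machinery. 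Inserting that Jensen step at the start repairs your argument and essentially collapses it onto the paper's proof.
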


Note that when $M=1$, Theorem \ref{thm: Fano Sharpening} reduces to Theorem \ref{thm: Fano Sharpening simple}.  
Additionally observe that $\log^2(x)$ is a concave function for $x \geq e$.  If one writes $m \coloneqq \max \{e, 1 +  \tau \}$ then by Jensen's inequality,
\begin{align}
    \left( \int_{B_\lambda^c} \varphi(w)\log^{2}\left[ 1 + \tau + \frac{ \tau^2 |w|}{ \lambda}\right] dw \right)^{\frac 1 2}
        &\leq
            \left( \int_{\mathbb{R}^d} \varphi(w)\log^{2}\left[ m + \frac{  \tau^2 |w|}{ \lambda}\right] dw \right)^{\frac 1 2}
                \\
        &\leq
            \log \left( m + \frac{ \tau^2 \int \varphi(w) |w| dw}{\lambda} \right).
\end{align}

Thus we can further bound
\begin{align}
    K(\varphi) \leq  \log \left[ \tau^d M \left( \|\varphi\|_\infty + \left(\frac{3}{\varepsilon} \right) \omega_d^{-1} \right) \right] \mathbb{P}^{\frac 1 2}(|W| > \lambda) + d \log \left( m + \frac{ \tau^2 \int \varphi(w) |w| dw}{\lambda} \right).
\end{align}
We now derive some implications of our assumptions on $T_{ji}$, a partitioning result on $T_{ji}(B_\lambda)$ based on the axiom of choice and for the reader's convenience we prove some elementary consequences of the boundedness of the derivatives of $T_{ij}$.
\begin{prop}\label{prop: bounds on T_ij derivative}
    For $T_{ij} = T_i^{-1} \circ T_j$ 
    \begin{align} \label{eq: mean value theorem gives ball bounds}
        T_{ij}(0) + B_{\lambda/ \tau} \subseteq T_{ij}(B_\lambda) \subseteq T_{ij}(0) + B_{\lambda \tau}
    \end{align}
    That 
    \begin{align}
       \# \{k : T_{ji}(B_\lambda) \cap T_{jk}(B_\lambda) \} \leq M 
    \end{align}
    implies that any collection  $\mathcal{X} \subseteq \mathbb{N}$ has a partition $\mathcal{X}_1, \dots, \mathcal{X}_n$, with $n \leq M$ such that $x_1, x_2 \in \mathcal{X}_k$ implies $T_{jx_1}(B_\lambda) \cap T_{jx_2}(B_\lambda) = \emptyset$.
\end{prop}

\begin{proof}
    To prove \eqref{eq: mean value theorem gives ball bounds}, observe that $T_{ij}$ $\tau$-bi-Lipschitz implies, that $T_{ij}^{-1}$ exists and is $\tau$-bi-Lipschitz as well.  Observing that $\tau$-Lipschitz implies,
    \begin{align}
        |T_{ij}(x) - T_{ij}(0)| \leq \tau |x|.
    \end{align}
    If we take $|x| < \lambda$, this inequality shows $T_{ij}(B_\lambda) \subseteq T_{ij}(0) + B_{\lambda \tau}$.  For the other inclusion, observe that since $T_{ij}^{-1}$ is $\tau$-Lipschitz as well,
    \begin{align}
        |T^{-1}_{ij}(T(0) + x)|
            &=
                |T^{-1}_{ij}(T_{ij}(0) + x) - T^{-1}_{ij}(T_{ij}(0))|
                    \\
            &\leq
                \tau |x|.
    \end{align}
    Taking $|x| < \lambda$, this shows that $T_{ij}^{-1}(T_{ij}(0) + B_{\lambda/\tau}) \subseteq B_\lambda$, which the desired inclusion follows from.\\
    
    Now we prove the existence of the partitioning.  If $M =1$, the result is obvious, and we proceed by induction.  Choose $\mathcal{X}_1$ to be a maximal subset of $\mathcal{X}$ such that $\{ T_{j x} \}_{x \in \mathcal{X}_1}$ are disjoint.  For $x_0 \in \mathcal{X} - \mathcal{X}_1$, 
    \begin{align}
        \# \{k \in \mathcal{X} - \mathcal{X}_1 : T_{ji}(B_\lambda) \cap T_{jk}(B_\lambda) \neq \emptyset \} \leq M -1.
    \end{align}
    Indeed for every $k \in \mathcal{X}$, $T_{jk}(B_\lambda)$ intersects at most $M$ others, and since $\mathcal{X}_1$ is maximal and $k \notin \mathcal{X}_1$ $T_{jk}(B_\lambda)$ must intersect one of the $T_{jx}(B_\lambda)$ for $x \in \mathcal{X}_1$ which leaves $T_{jk}(B_\lambda)$ to intersect at most $M-1$ of $\{T_{ji}(B_\lambda)\}_{i \in \mathcal{X} - \mathcal{X}_1}$.  By induction the result follows.
\end{proof}

We will need the following concentration result for the information content of a log-concave vector  \cite{Ngu13:phd, Wan14:phd, FMW16}.
\begin{thm}\label{thm: varentropy bound}
    For a log-concave density function $\varphi$ on $\mathbb{R}^d$,
    \begin{align}
        \int \left( \log \frac 1 {\varphi(x)} - h(\varphi) \right)^2 \varphi(x) dx \leq d.
    \end{align}
    where $h(\varphi)$ is the entropy of the density $\varphi)$.
\end{thm}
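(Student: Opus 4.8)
The plan is to recognize the left-hand side as the variance of the information content and to reduce, through the geometry of the sublevel sets of $-\log\varphi$, to a one-dimensional extremal problem whose optimizer is the Gamma law. Concretely, let $X$ have density $\varphi$ and set $W := -\log\varphi(X)$. Since $h(\varphi) = \E[-\log\varphi(X)] = \E W$, the integral to be bounded is exactly $\Var(W)$, so the assertion is the sharp \emph{varentropy} inequality $\Var(W)\le d$, with the product of one-sided exponentials (Gamma) as the expected extremal case.

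The argument rests on two structural inputs. First I would identify the law of $W$. Writing $V := -\log\varphi$, which is convex because $\varphi$ is log-concave, and $G(t) := |\{V\le t\}|$ for the Lebesgue volume of the sublevel set, the layer-cake/change-of-variables formula gives
\[ \mathbb{P}(W\le t) = \int_{\{V\le t\}} e^{-V}\,dx = \int_{-\infty}^{t} e^{-s}\,dG(s), \]
so that $W$ has density $t\mapsto e^{-t}G'(t)$, where $dG$ is the pushforward of Lebesgue measure under $V$. Second, the key geometric fact: the sublevel sets $\{V\le t\}$ are convex and nested, and for $s<t$ and $\lambda\in[0,1]$ convexity of $V$ gives $\{V\le \lambda s+(1-\lambda)t\}\supseteq \lambda\{V\le s\}+(1-\lambda)\{V\le t\}$; Brunn–Minkowski then yields $G(\lambda s+(1-\lambda)t)^{1/d}\ge \lambda G(s)^{1/d}+(1-\lambda)G(t)^{1/d}$, i.e. $\rho := G^{1/d}$ is concave (this is the Prékopa/Borell phenomenon).

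With these in hand the problem becomes one-dimensional. Translating $W$ by $\min V$ (which leaves the variance unchanged), I may assume $\rho$ is concave, nondecreasing, with $\rho\ge 0$, and that $t\mapsto e^{-t}(\rho^d)'(t)$ is, up to normalization, a probability density. It then remains to show that over this class $\Var(W)\le d$, with equality precisely for $\rho$ linear: the profile $\rho(t)=t$ gives $G(t)=t^d$ and density $\propto t^{d-1}e^{-t}$, the Gamma$(d,1)$ law of variance exactly $d$, corresponding to $\varphi$ being an affine image of a product of exponentials. I would close this step by a moment comparison: the Laplace transform is, up to normalization, $\E[e^{-pW}]\propto (1+p)\int_0^\infty e^{-(1+p)t}\rho(t)^d\,dt$, whose logarithm is the cumulant generating function of $-W$ and can be compared with the Gamma transform $(1+p)^{-d}$; alternatively one argues directly on the extreme rays of the cone of admissible profiles, the truncations $\rho(t)=\min(t,c)$, whose variance increases to $d$ as $c\to\infty$.

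The main obstacle is exactly this last one-dimensional extremal step: $\Var(W)$ is a ratio of moments and is not manifestly monotone or convex in the profile $\rho$, so establishing that the linear profile is the maximizer requires a genuine argument (a perturbation/variational computation about $\rho$, or a convex-ordering coupling of $W$ to the Gamma law). Secondary technical points are the regularity needed to justify the density formula and the integration by parts (almost-everywhere differentiability of $G$ and vanishing boundary terms), and the degenerate compactly supported case, where $\{V=\min V\}$ has positive measure, $W$ acquires an atom, and in the extreme (uniform on a convex body) $W$ is deterministic so that the bound holds trivially.
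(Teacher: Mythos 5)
Your reductions are sound as far as they go: the identification of the left-hand side as $\Var(W)$ for $W=-\log\varphi(X)$, the formula $\mathbb{P}(W\le t)=\int_{-\infty}^{t}e^{-s}\,dG(s)$ with $G(s)=|\{V\le s\}|$, and the Brunn--Minkowski concavity of $\rho=G^{1/d}$ are all correct, and the reduction is in fact lossless, since any concave nondecreasing profile is realized by the spherically symmetric convex potential $V(x)=\rho^{-1}\bigl(\omega_d^{1/d}|x|\bigr)$. For calibration: the paper does not prove this theorem at all --- it is quoted from \cite{Ngu13:phd, Wan14:phd, FMW16} --- so the relevant comparison is with the proofs in those references, which run through the cumulant generating function $\Lambda(t)=\log\int\varphi^{t}\,dx$ and the concavity of $t\mapsto\Lambda(t)+d\log t$ on $(0,\infty)$, from which $\Var(W)=\Lambda''(1)\le d$ follows by differentiating twice at $t=1$.

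The genuine gap is exactly the one you flag yourself: the one-dimensional extremal step is the entire difficulty, and neither of your proposed closings works as stated. The extreme-ray argument fails structurally: $\mu\mapsto\Var(\mu)$ is a \emph{concave} functional of the law (the linear term $\int t^{2}\,d\mu$ minus the convex term $(\int t\,d\mu)^{2}$), so its maximum over a convex set of laws is not decided at extreme points; moreover the map $\rho\mapsto e^{-t}\,d(\rho^{d})$ is nonlinear and the normalization constraint destroys the cone structure, so extreme rays of the cone of concave profiles do not parametrize anything useful. The Laplace-transform route, by contrast, can be completed --- but once completed it collapses into the standard proof. Indeed $\E[e^{-pW}]=\int\varphi^{1+p}\,dx$, so the claim $\Var(W)\le d$ is precisely $\Lambda''(1)\le d$, and in your profile language an integration by parts and the substitution $s=u/t$ give
\begin{align*}
t^{d}\int\varphi^{t}\,dx \;=\; t^{d+1}\int e^{-ts}\rho^{d}(s)\,ds \;=\; \int e^{-u}\bigl(t\,\rho(u/t)\bigr)^{d}\,du,
\end{align*}
where $(t,u)\mapsto t\rho(u/t)$ is the jointly concave perspective of $\rho$; hence $\exp\bigl(-u+d\log(t\rho(u/t))\bigr)$ is jointly log-concave, and one-dimensional Pr\'ekopa yields the concavity of $t\mapsto\log\bigl(t^{d}\int\varphi^{t}\bigr)$, i.e.\ $\Lambda''(t)\le d/t^{2}$, which at $t=1$ is the theorem --- with the linear profile (your Gamma case) saturating it, confirming your predicted extremizer. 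So the viable completion of your outline is this differential inequality rather than a direct variational classification of variance-maximizing profiles, which, as you suspected, admits no soft proof.
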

See \cite{FLM20} for a generalization to convex measures, which can be heavy-tailed.

The following upper bounds the sum of a sequence whose values are obtained by evaluating a spherically symmetric density at well spaced points.

\begin{lem} \label{lem: sums bounded at well spaced points}
If $\phi$ is a density on $\mathbb{R}^d$, not necessarily log-concave, given by $\phi(x) = \psi(|x|)$ for $\psi: [0,\infty) \to [0,\infty)$ decreasing, $\lambda >0$, and a discrete set $\mathcal{X} \subseteq \mathbb{R}^d$ admitting a partition $\mathcal{X}_1, \dots, \mathcal{X}_M$ such that distinct $x, y \in \mathcal{X}_k$ satisfy $|x-y| \geq 2 \lambda$,
 then there exists an absolute constant $c \leq 3$ such that
\begin{align} \label{eq: spherically symmetric unimodal sampling bound}
    \sum_{x \in \mathcal{X}} \phi(x) \leq M \left( \| \phi \|_\infty + \left( \frac c \lambda \right)^d \omega_d^{-1} \right),
\end{align}
where $\omega_d = |\{x: |x| \leq 1\}|_d$, where we recall $| \cdot |_d$ as the $d$-dimensional Lebesgue volume.  In particular, if for all $x_0 \in \mathcal{X}$
\begin{align} \label{eq: seperation assumption}
    \# \{ x \in \mathcal{X} : |x- x_0| < 2 \lambda \} \leq M,
\end{align}
then \eqref{eq: spherically symmetric unimodal sampling bound} holds.
\end{lem}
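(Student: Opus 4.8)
The plan is to estimate each partition class separately: since $\sum_{x\in\mathcal{X}}\phi(x)=\sum_{k}\sum_{x\in\mathcal{X}_k}\phi(x)$ with at most $M$ nonempty classes, it suffices to show that a single class of points that are pairwise at distance at least $2\lambda$ obeys $\sum_{x\in\mathcal{X}_k}\phi(x)\le\|\phi\|_\infty+(2/\lambda)^d\omega_d^{-1}$, which yields \eqref{eq: spherically symmetric unimodal sampling bound} with the absolute constant $c=2\le 3$. Fix such a class. At most one of its points can lie in the open ball $\{|x|<\lambda/2\}$, since two such points would be at distance strictly less than $\lambda<2\lambda$, contradicting the separation; for this possible exceptional point I bound $\phi(x)=\psi(|x|)\le\psi(0)=\|\phi\|_\infty$, which produces the first term of the estimate.

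The heart of the argument is the treatment of the points with $|x|\ge\lambda/2$. To each such $x$ I associate the shifted ball $x'+B_{\lambda/2}$, where $x'\coloneqq x-\tfrac{\lambda}{2}\tfrac{x}{|x|}$ is obtained by moving $x$ radially inward by $\lambda/2$, so that $|x'|=|x|-\lambda/2\ge 0$. Every $y\in x'+B_{\lambda/2}$ then satisfies $|y|\le|x'|+\lambda/2=|x|$, and since $\psi$ is decreasing, $\phi(y)=\psi(|y|)\ge\psi(|x|)=\phi(x)$; integrating over the ball gives $\phi(x)\,(\lambda/2)^d\omega_d\le\int_{x'+B_{\lambda/2}}\phi$. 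Moreover these shifted balls are pairwise disjoint within the class: for distinct $x_1,x_2$ the triangle inequality yields $|x_1'-x_2'|\ge|x_1-x_2|-\tfrac{\lambda}{2}-\tfrac{\lambda}{2}\ge 2\lambda-\lambda=\lambda$, which equals the sum of the two radii.

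I expect this radial shift to be the main point. The more obvious choice, intersecting $B(x,\lambda)$ with the centered ball $\{|y|\le|x|\}$, does produce a region on which $\phi\ge\phi(x)$, but when $|x|$ is close to $\lambda$ this region is a lens whose volume is only an exponentially small (in $d$) fraction of $\omega_d\lambda^d$, which would destroy the dimension-free constant; shifting inward instead furnishes an honest ball of radius $\lambda/2$, so the monotonicity of $\psi$ can be exploited on a full half-radius ball. Summing the pointwise inequality over the points of the class with $|x|\ge\lambda/2$ and using disjointness together with $\int_{\R^d}\phi=1$ gives $\sum_{|x|\ge\lambda/2}\phi(x)\le(\lambda/2)^{-d}\omega_d^{-1}=(2/\lambda)^d\omega_d^{-1}$. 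Adding the at most one near-origin point completes the per-class bound, and summing over the classes proves the lemma.

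Finally, for the ``in particular'' assertion I would construct the partition from hypothesis \eqref{eq: seperation assumption} by greedy colouring. A discrete subset of $\R^d$ is countable, so I enumerate its points and colour them one at a time, assigning each point the least colour in $\{1,\dots,M\}$ not already used by a previously coloured point within distance $2\lambda$; since \eqref{eq: seperation assumption} guarantees at most $M-1$ such earlier points, a colour is always available. Distinct points receiving the same colour are then necessarily at distance at least $2\lambda$, so the colour classes form a partition into at most $M$ sets of the required type, and the first part of the lemma applies.
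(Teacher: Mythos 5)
Your proof is correct, and the main estimate is reached by a genuinely different route than the paper's. The paper reduces to a single $2\lambda$-separated class exactly as you do, but then runs a more elaborate bookkeeping argument: it decomposes the class into annuli $\Lambda_k=\{x:|x|\in[k\lambda,(k+1)\lambda)\}$, bounds $\#\Lambda_k\leq (k+2)^d-(k-1)^d\leq 3^d\left((k+1)^d-k^d\right)$ by packing the disjoint balls $x+B_\lambda$ into a fattened annulus, and compares $\sum_k\psi(k\lambda)\,\#\Lambda_k$ against the Riemann-sum lower bound $1=\int\phi\geq\sum_k\psi(k\lambda)\,\omega_d\lambda^d\left((k+1)^d-k^d\right)$; the annulus-comparison factor $3^d$ is precisely where the paper's constant $c=3$ originates. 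Your radial-inward-shift argument collapses all of this into a single packing step: the ball $x'+B_{\lambda/2}$ with $x'=x-\tfrac{\lambda}{2}\tfrac{x}{|x|}$ sits inside $\{y:|y|\leq|x|\}$, where $\phi\geq\phi(x)$ by monotonicity of $\psi$, and these balls are essentially disjoint, so integrating against $\int\phi=1$ gives the per-class bound directly and with the sharper constant $c=2$. Your remark about why the naive choice $B_\lambda(x)\cap\{|y|\leq|x|\}$ fails --- the lens volume is an exponentially small fraction of $\omega_d\lambda^d$ in high dimension --- correctly identifies the obstruction that the inward shift circumvents, and is exactly the point that makes your constant dimension-free. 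One cosmetic repair: when $|x_1'-x_2'|=\lambda$ exactly, the closed balls touch at a point, so you should say the \emph{interiors} are pairwise disjoint (or that the overlaps are Lebesgue-null), which is all the integral comparison requires. For the ``in particular'' clause, your greedy colouring of an enumeration of the set (countable, as a discrete subset of $\mathbb{R}^d$) is a constructive variant of the paper's argument, which instead extracts a maximal $2\lambda$-separated subset via Zorn's lemma and inducts on $M$; both rest on the same counting observation that \eqref{eq: seperation assumption}, since it counts $x_0$ itself, leaves at most $M-1$ conflicting neighbours, so the two are essentially interchangeable and yours avoids the appeal to maximality.
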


Note, that when $M=1$ and $\phi$ is the uniform distribution on a $d$-dimensional ball, this reduces to a sphere packing bound,
\begin{align}
    \# \{ \hbox{disjoint $\lambda$-balls contained in }B_{R+\lambda}\} \leq 1 + \left( \frac{Rc}{\lambda} \right)^d.
\end{align}
From which it follows, due to classical bounds of Minkowski, that $c \geq \frac 1 2$.

For the proof below we use the notations $B_\lambda(x)=\{w\in \mathbb{R}^d|~|w-x|\leq \lambda\}$ and we identify $B_\lambda\equiv  B_\lambda(0).$

\begin{proof}
Let us first see that it is enough to prove the result when $M=1$.  
\begin{align}
    \sum_{x \in \mathcal{X}} \phi(x) 
        &=
            \sum_{k=1}^M \sum_{x \in \mathcal{X}_k} \phi(x)
                \\
        &\leq \label{eq: the M=1 case}
            \sum_{k=1}^M \left( \| \phi \|_\infty + \left( \frac c \lambda \right)^d \omega_d^{-1} \right)
                \\
        &=
            M \left( \| \phi \|_\infty + \left( \frac c \lambda \right)^d \omega_d^{-1} \right)
\end{align}   
where \eqref{eq: the M=1 case} follows if we have the result when $M=1$.
We proceed in the case that $M=1$ and observe that $\psi$ non-increasing enables the following Riemann sum bound,
\begin{align}
    1 
        &=
            \int \phi(x) dx
                \\
        &\geq \label{eq: riemann sum bound}
            \sum_{k=0} \psi(k \lambda) \omega_d \lambda^d \left( (k+1)^d - k^d \right),
\end{align}
where $\omega_d \lambda^d ((k+1)^d - k^d)$ is the volume of the annulus $B_{(k+1)\lambda} - B_{k \lambda}$.  Define
\begin{align}
    \Lambda_k \coloneqq \big\{ x \in \mathcal{X} : |x| \in [k \lambda, (k+1) \lambda )  \hspace{1mm}\big\},
\end{align}
then
\begin{align} \label{eq: initial bounds on sum}
    \sum_{x \in \mathcal{X}} \phi(x) 
        &=
            \sum_{k=0}^\infty \sum_{x \in \Lambda_k} \phi(x)
                \\
        &\leq
             \sum_{k=0}^\infty \# \Lambda_k \psi(k \lambda),
\end{align}
as $\psi$ is non-increasing.
Let us now bound $\# \Lambda_k $. Using the assumption that any two elements $x$ and $y$ in $\mathcal{X}$ satisfy $|x-y|\geq 2\lambda$,
\begin{align} 
    | \bigcup_{x \in \Lambda_k} \{x + B_\lambda\}| &= \#\Lambda_k |B_\lambda|
            \\
    &= \label{eq: disjoint balls of Lambda}
        \# \Lambda_k  \omega_d \lambda^d,
\end{align}
so that it suffices to bound $| \cup_{x \in \Lambda_k} \{x +B_\lambda \}|$.
Observe that we also have
$ \cup_{x \in \Lambda_k} \{ x + B_\lambda \}$ contained in an annulus,
\begin{align}
\cup_{x \in \Lambda_k} B_\lambda(x) \subseteq \{ x: |x| \in [(k-1)\lambda, (k+2) \lambda) \},
\end{align}
which combined with \eqref{eq: disjoint balls of Lambda} gives
\begin{align}
    \# \Lambda_k \omega_d \lambda^d
        &\leq
            | \{ x: |x| \in [(k-1)\lambda, (k+2) \lambda) \} |_d
                \\
        &=
            \omega_d \lambda^d \left( (k+2)^d - (k-1)^d \right),
\end{align}
%

%
%

%

so that
\begin{align} \label{eq: the first good Lambda bound}
    \# \Lambda_k 
        &\leq 
            (k+2)^d - (k-1)^d.
\end{align}
Note the following bound, for $k \geq 1$
\begin{align} \label{eq: mean value theorem bound}
    (k+2)^d - (k-1)^d \leq 3^{d} \left((k+1)^d - k^d \right).
\end{align}
Indeed, by the mean value theorem, there exists $x_0 \in [k-1, k+2]$
\begin{align}
    (k+2)^d - (k-1)^d
        &=
            3d x_0^{d-1}
                \\
        &\leq
            3 d (k+2)^{d-1}
\end{align}
and there exists $y_0 \in [k,k+1]$ such that,
\begin{align}
    (k+1)^d - k^d
        &=
            d y_0^{d-1}
                \\
        &\geq
           d k^{d-1}.
\end{align}
Thus our bound follows from the obvious fact that for $k \geq 1$
\begin{align}
    (k+2)^{d-1} \leq 3^{d-1} k^{d-1}.
\end{align}
Compiling the above,
\begin{align}
     (k+2)^d - (k-1)^d
        &\leq
            3 d (k+2)^{d-1}
                \\
        &\leq
            3 d 3^{d-1} k^{d-1}
                \\
        &\leq
            3^d \left( (k+1)^d - k^d \right).
\end{align}
Thus for $k \geq 1$, \eqref{eq: the first good Lambda bound} and \eqref{eq: mean value theorem bound} give,
\begin{align}
    \# \Lambda_k \leq 3^d \left( (k+1)^d - k^d \right).
\end{align}
Applying this inequality to \eqref{eq: initial bounds on sum} gives
\begin{align}
    \sum_{x \in \mathcal{X}} \phi(x) 
        &\leq
            \sum_{k=0}^\infty \sum_{x \in \Lambda_k} \psi (\lambda k)
                \\
        &\leq
             \|\phi\|_\infty  + \sum_{k=1} \psi( \lambda k) \# \Lambda_{k} 
                \\
        &\leq \label{eq: since Lambda_0 has one element}
             \|\phi\|_\infty + \sum_{k=1} \psi( \lambda k) 3^d \left( (k+1)^d - k^d \right)
                \\
        &\leq
            \|\phi\|_\infty + \omega_d^{-1} \left( \frac 3 \lambda \right)^d,
\end{align}
where \eqref{eq: since Lambda_0 has one element} follows from the fact that $\# \Lambda_0 \leq 1$ (any $x \in \Lambda_0$ has $0 \in \{x + B_\lambda \}$), and the last inequality follows from the Riemann sum bound \eqref{eq: riemann sum bound}.

Let us now show that any $\mathcal{X}$ satisfying \eqref{eq: seperation assumption} necessarily satisfy \eqref{eq: spherically symmetric unimodal sampling bound}, by show that $\mathcal{X}$ satisfying \eqref{eq: seperation assumption} can be partitioned into  $M$ subsets $\{\mathcal{X}_1, \dots, \mathcal{X}_{M}\}$ such $\cup_{k =1}^{M} \mathcal{X}_k = \mathcal{X}$ and  $x, y \in \mathcal{X}_k$ implies $|x - y| \geq 2 \lambda$ for $x \neq y$.  Choose $\mathcal{X}_1$ to be a subset of $\mathcal{X}$ maximal with respect to the property $x, y \in \mathcal{X}_1$ implies $|x - y| \geq 2 \lambda$.   We note that such a maximal subset necessarily exists  by Zorn's Lemma with the partial order given by set theoretic inclusion.  Now suppose that $x_0 \in \mathcal{X}' = \mathcal{X} - \mathcal{X}_1$. Then if follows that
\begin{align}
    \# \{x \in \mathcal{X}' : |x_0 - x| < 2 \lambda \} \leq M-1.
\end{align}

Indeed, as $x_0\not \in \mathcal{X}_1$ we have from the maximality of $\mathcal{X}_1$, that there exists a $x_0'\in \mathcal{X}_1$ such that $\|x_0-x_0'\|\leq 2\lambda.$   Note that the cardinality of the set  $\{x\in \mathcal{X}: \|x-x_0\|\leq 2\lambda\}$ is at most $M$ with $x_0'\not \in \mathcal{X}':=\mathcal{X}-\mathcal{X}_1$ in the set. It follows that 
$\#\{x\in \mathcal{X}': \|x-x_0\|\leq 2\lambda\}\leq M-1.$ Applying the result inductively on $\mathcal{X}'$ our claim follows.
\end{proof}

\begin{lem} \label{lem: counting points in balls}
    For $T_{i}$  $\sqrt{\tau}$-bi-Lipshitz satisfying $\# \{l : |T_{lj}(0) - T_{ij}(0)| < 2 \lambda\} \leq M$ for all $x$, then for $\varepsilon >0$,
    \begin{align} \label{eq: the bound at w with an epsilon given a lambda}
        \# \{ l : |T_{ij}(w) - T_{lj}(w)| < \varepsilon \} \leq M \left( \frac{\lambda + \varepsilon + 2 \tau |w|}{\lambda}\right)^d.
    \end{align}
    In particular,
    \begin{align} \label{eq: the bound at zero with an epsilon}
        \# \{ l : |T_{ij}(0) - T_{lj}(0)| < \varepsilon \} \leq M \left( \frac{ \varepsilon + \lambda}{\lambda} \right)^d.
    \end{align}
\end{lem}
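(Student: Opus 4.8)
The plan is to reduce the count at a general point $w$ to a count at the origin and then bound the latter by a packing argument.

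First I would exploit the Lipschitz control on the maps. Since $T_{ij}$ and $T_{lj}$ are $\tau$-bi-Lipschitz (as noted before Theorem \ref{thm: Fano Sharpening}, $T_i$ being $\sqrt{\tau}$-bi-Lipschitz makes each $T_{ij}=T_i^{-1}\circ T_j$ a $\tau$-bi-Lipschitz map), the triangle inequality gives, whenever $|T_{ij}(w) - T_{lj}(w)| < \varepsilon$,
\begin{align*}
    |T_{ij}(0) - T_{lj}(0)|
        &\le |T_{ij}(0) - T_{ij}(w)| + |T_{ij}(w) - T_{lj}(w)| + |T_{lj}(w) - T_{lj}(0)| \\
        &< \tau |w| + \varepsilon + \tau |w| = \varepsilon + 2\tau|w|.
\end{align*}
Hence, writing $R \coloneqq \varepsilon + 2 \tau |w|$,
\[
    \{ l : |T_{ij}(w) - T_{lj}(w)| < \varepsilon \} \subseteq \{ l : |T_{ij}(0) - T_{lj}(0)| < R \},
\]
so it suffices to bound the cardinality of the right-hand set.

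Next I would organize the points $\{T_{lj}(0)\}_l$ so that a clean packing bound applies. The hypothesis $\#\{ l : |T_{lj}(0) - T_{ij}(0)| < 2\lambda\} \le M$ for every center is exactly the separation condition \eqref{eq: seperation assumption} appearing in Lemma \ref{lem: sums bounded at well spaced points}, applied to the discrete set $\{T_{lj}(0)\}_l \subseteq \mathbb{R}^d$. By the partitioning argument carried out in the proof of that lemma (a maximal $2\lambda$-separated subset chosen via Zorn's Lemma, then induction), the index set splits into at most $M$ classes $\mathcal{X}_1, \dots, \mathcal{X}_M$ such that within each class the points $\{T_{lj}(0)\}$ are pairwise at distance at least $2\lambda$; in particular distinct indices in a class give distinct, $2\lambda$-separated points.

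Finally I would apply a volume/packing bound within each class. If the members of a class lying in $B_R(T_{ij}(0))$ are pairwise at distance $\ge 2\lambda$, then the balls of radius $\lambda$ centered at them are disjoint and all contained in $B_{R + \lambda}(T_{ij}(0))$; comparing Lebesgue volumes (the $\omega_d$ factors cancel) bounds the number of such points by $(R+\lambda)^d/\lambda^d$ per class. Summing over the at most $M$ classes yields
\[
    \#\{ l : |T_{ij}(0) - T_{lj}(0)| < R \} \le M \left( \frac{R + \lambda}{\lambda} \right)^d = M\left( \frac{\lambda + \varepsilon + 2\tau|w|}{\lambda}\right)^d,
\]
which is \eqref{eq: the bound at w with an epsilon given a lambda}; specializing to $w=0$, so $R=\varepsilon$, gives \eqref{eq: the bound at zero with an epsilon}. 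I do not expect a genuine obstacle here: the only points requiring care are the bookkeeping of the two $\tau|w|$ contributions in the reduction step and the verification that the disjoint $\lambda$-balls sit inside the enlarged ball $B_{R+\lambda}$, while the partitioning itself is a direct reuse of the construction already established for Lemma \ref{lem: sums bounded at well spaced points}.
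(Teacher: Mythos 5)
Your proof is correct and reaches the paper's exact constants, but it is organized differently from the paper's argument in two respects worth noting. First, the order and mechanism of the reduction: the paper proves the bound at the origin \eqref{eq: the bound at zero with an epsilon} first and then reduces the general case to it by writing $T_{ij}(w) - T_{lj}(w) = T_{ij}(0) - T_{lj}(0) + (T_{ij}'(tw) - T_{lj}'(tw))w$ via a mean value theorem and the derivative bounds $\|T_{ij}'\| \leq \tau$; you reduce first, using only the triangle inequality and the $\tau$-Lipschitz property of $T_{ij}$, $T_{lj}$. Your route is arguably cleaner and more rigorous, since the equality form of the mean value theorem does not hold for vector-valued maps and bi-Lipschitz maps are only almost-everywhere differentiable (Rademacher), so the paper's displayed identity requires some charity to interpret, whereas your estimate $|T_{ij}(0)-T_{lj}(0)| < \varepsilon + 2\tau|w|$ needs no smoothness at all. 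Second, the counting at the origin: the paper builds a maximal $2\lambda$-separated net $\Lambda$ inside the $\varepsilon$-ball, bounds $\#\Lambda \leq \left(\frac{\lambda+\varepsilon}{\lambda}\right)^d$ by packing, and then uses the hypothesis to let each net point capture at most $M$ indices (with a case split disposing of $\varepsilon \leq 2\lambda$, and a small typo where the capture radius is written as $\lambda$ rather than $2\lambda$); you instead partition the whole index set into at most $M$ classes of pairwise $2\lambda$-separated points by reusing the construction from the proof of Lemma \ref{lem: sums bounded at well spaced points} (your observation that the hypothesis here is exactly condition \eqref{eq: seperation assumption} for the points $T_{lj}(0)$ is correct, and the partition argument does operate on indices, so possible coincident points cause no trouble), and then pack disjoint $\lambda$-balls inside $B_{R+\lambda}$ within each class. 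The two counting schemes are dual packing arguments giving the same bound $M\left(\frac{R+\lambda}{\lambda}\right)^d$; yours handles all $\varepsilon>0$ uniformly and leans on an already-established lemma, while the paper's is self-contained within this proof.
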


\begin{proof}
    We will first prove and then leverage  \eqref{eq: the bound at zero with an epsilon}. Note that there is nothing to prove when $\varepsilon \leq 2\lambda$ as \eqref{eq: the bound at zero with an epsilon} is weaker than the assumption.  When $2\lambda < \varepsilon$ choose (by Zorn's lemma for instance) $\Lambda$ to be a maximal subset of $\mathbb{N}$ such that for $k \in \Lambda$, $|T_{k j}(0) - T_{ij}(0)| < \varepsilon$ and $|T_{k j}(0) - T_{k' j} (0)| \geq 2\lambda$ for $k, k' \in \Lambda$, with $k \neq k'$.  By construction, for a fixed $j$,  $T_{k j}(0) + B_\lambda$ are disjoint over $k \in \Lambda$ contained in $T_{ij}(0) + B_{\lambda + \varepsilon}$. Thus
    \begin{align}
        \lambda^d \omega_d \# \Lambda
            &= 
                |\cup_{k \in \Lambda} \{ T_{k j}(0) + B_\lambda \}| 
                    \\
            &\leq
                |\{ T_{ij}(0) + B_{\lambda + \varepsilon} \}|
                    \\
            &=
                (\lambda + \varepsilon)^d \omega_d,
    \end{align}
    and we have the following bound on the cardinality of $\Lambda$,
    \begin{align} \label{eq: cardinality of Lambda}
        \# \Lambda \leq \left(\frac{\lambda+ \varepsilon}{\lambda} \right)^d.
    \end{align}
    Applying \eqref{eq: cardinality of Lambda}, the assumed cardinality bounds,  and the maximality of $\Lambda$, which implies$\cup_{k \in \Lambda} \{T_{k j}(0) + B_\lambda\}$ contains every $T_{lj}(0)$ such that $|T_{lj}(0)-T_{ij}(0)| < \varepsilon$, we have
    \begin{align}
        \# \{ l : |T_{ij}(0) - T_{lj}(0)| < \varepsilon \}
            &\leq
                \sum_{k \in \Lambda} \# \{ m : |T_{mj}(0) - T_{k j}(0)| < \lambda \} 
                    \\
            &\leq
                 M \left(\frac{\lambda+ \varepsilon}{\lambda} \right)^d.
    \end{align}
        
    Towards \eqref{eq: the bound at w with an epsilon given a lambda},  by the mean value theorem, there exists $t \in [0,1]$ such that
    \begin{align}
        T_{ij}(w) - T_{lj}(w) = T_{ij}(0) - T_{lj}(0) + (T_{ij}'(tw) - T_{lj}'(tw))w.
    \end{align}
    Note that if $|T_{ij}(w) - T_{lj}(w)| < \varepsilon$, then
    \begin{align}
        |T_{ij}(0) - T_{lj}(0)| 
            &=
                | T_{ij}(w) - T_{lj}(w) - (T_{ij}'(tw) - T_{lj}'(tw))w|
                    \\
            &\leq
                | T_{ij}(w) - T_{lj}(w)| + |(T_{ij}'(tw) - T_{lj}'(tw))w|
                    \\
            &\leq
                \varepsilon + 2 \tau |w|.
    \end{align}
    Thus
    \begin{align}
        \# \{ l : |T_{ij}(w) - T_{lj}(w)| < \varepsilon \} \leq \# \{ l : |T_{ij}(0) - T_{lj}(0)| < \varepsilon + 2 \tau |w| \}.
    \end{align}
    Applying \eqref{eq: the bound at zero with an epsilon},
    \begin{align}
        \# \{ l : |T_{ij}(w) - T_{lj}(w)| < \varepsilon \} \leq M \left( \frac{\lambda + \varepsilon + 2 \tau |w|}{\lambda}\right)^d.
    \end{align}
\end{proof}

\begin{coro} \label{cor: bounds for the sums}
For a density $\phi(w) = \psi(|w|)$, with $\psi$ decreasing, $\varepsilon >0$, $T_i$ $\sqrt{\tau}$-bi-Lipschitz, $T_{ij} = T_i^{-1} \circ T_j$, such that there exists $M \geq 1$ such that for any $i$
\begin{align}
    |\{k: T_{ij}(B_\lambda) \cap T_{kj}(B_\lambda) \neq \emptyset \}| \leq M,
\end{align} 
then
\begin{align}
    \sum_i \phi(T_{ij}(w)) det(T_{ij}'(w)) \leq \tau^d M \left( 1 + \frac{\varepsilon \tau +  \tau^2 |w|}{ \lambda}\right)^d \left( \|\phi\|_\infty + \left(\frac{3}{\varepsilon} \right)^d \omega_d^{-1} \right).
\end{align}
\end{coro}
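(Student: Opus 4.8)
The plan is to decouple the two features of the summand: the Jacobian factor $\det(T_{ij}'(w))$ and the radially decreasing density $\phi$ sampled along the orbit $\{T_{ij}(w)\}_i$. Since each $T_{ij}$ is $\tau$-bi-Lipschitz, its derivative satisfies $\|T_{ij}'(w)\| \leq \tau$ almost everywhere, so every singular value of $T_{ij}'(w)$ is at most $\tau$ and hence $\det(T_{ij}'(w)) \leq \tau^d$. This lets me pull the uniform factor $\tau^d$ out of the sum, reducing the corollary to a bound on $\sum_i \phi(T_{ij}(w))$, a sum of values of a radially decreasing density evaluated at the points $y_i \coloneqq T_{ij}(w)$.

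First I would convert the ball-intersection hypothesis into a point-separation statement at the origin. By Proposition \ref{prop: bounds on T_ij derivative}, $T_{ij}(0) + B_{\lambda/\tau} \subseteq T_{ij}(B_\lambda)$, so if $|T_{ij}(0) - T_{kj}(0)| < 2\lambda/\tau$ then the inner balls overlap and therefore $T_{ij}(B_\lambda) \cap T_{kj}(B_\lambda) \neq \emptyset$; the hypothesis $\#\{k : T_{ij}(B_\lambda) \cap T_{kj}(B_\lambda) \neq \emptyset\} \leq M$ thus forces $\#\{k : |T_{ij}(0) - T_{kj}(0)| < 2\lambda/\tau\} \leq M$. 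This is exactly the origin-separation input required by Lemma \ref{lem: counting points in balls}, applied with its separation parameter taken to be $\lambda/\tau$.

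Next I would use Lemma \ref{lem: counting points in balls} to transport this crowding control from the origin to the shifted configuration $\{y_i\}$. Feeding the separation radius $\lambda/\tau$ into \eqref{eq: the bound at w with an epsilon given a lambda} bounds the number of indices $l$ with $|T_{ij}(w) - T_{lj}(w)|$ small by $M \bigl(1 + \tfrac{\varepsilon\tau + \tau^2|w|}{\lambda}\bigr)^d$ (up to the bookkeeping of the bi-Lipschitz constants entering the mean-value step). In other words, for every center $y_i$ the number of points of $\{y_l\}$ lying within the prescribed radius of it is at most this quantity, which is precisely the crowding hypothesis of the ``in particular'' form of Lemma \ref{lem: sums bounded at well spaced points}, with this count playing the role of $M$ and $\varepsilon$ playing the role of the spacing parameter.

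Finally I would invoke Lemma \ref{lem: sums bounded at well spaced points} on the point set $\{y_i\}$: since $\phi = \psi(|\cdot|)$ with $\psi$ decreasing, the sum $\sum_i \phi(y_i)$ is bounded by the crowding count times $\bigl(\|\phi\|_\infty + (3/\varepsilon)^d \omega_d^{-1}\bigr)$, and reinstating the Jacobian factor $\tau^d$ yields the claimed inequality. The main obstacle is purely one of careful bookkeeping of radii: the separation radius contracts by $\tau$ when passing from $T_{ij}(B_\lambda)$ to the centers $T_{ij}(0)$, dilates again through the mean-value estimate transferring the count from $0$ to $w$, and must be matched to the spacing parameter of Lemma \ref{lem: sums bounded at well spaced points}; keeping all these scalings consistent is what produces the precise radius $1 + (\varepsilon\tau + \tau^2|w|)/\lambda$ together with the constant $(3/\varepsilon)^d$.
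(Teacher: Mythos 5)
Your proposal is correct and follows essentially the same route as the paper's proof: you reduce to the sum via $\det(T_{ij}'(w))\leq\tau^d$, convert the ball-intersection hypothesis into the origin-separation bound $\#\{k:|T_{ij}(0)-T_{kj}(0)|<2\lambda/\tau\}\leq M$ via Proposition \ref{prop: bounds on T_ij derivative}, transport it to $w$ with Lemma \ref{lem: counting points in balls}, and feed the resulting crowding count into the ``in particular'' form of Lemma \ref{lem: sums bounded at well spaced points} with spacing $\varepsilon$ — precisely the paper's argument, including the radius bookkeeping that produces $\bigl(1+(\varepsilon\tau+\tau^2|w|)/\lambda\bigr)^d$ and $(3/\varepsilon)^d$.
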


\begin{proof}
 For any $k$ suppose $|T_{ij}(0) - T_{kj}(0)| < 2 \lambda/ \tau$, then $\{T_{ij}(0) + B_{\lambda/\tau} \} \cap \{T_{kj}(0) + B_{\lambda/\tau} \} \neq \emptyset$ which by Proposition \ref{prop: bounds on T_ij derivative} implies $T_{ij}(B_\lambda) \cap T_{kj}(B_\lambda) \neq \emptyset$.  Thus, $\#\{k : |T_{ij}(0) - T_{kj}(0)| < 2 \lambda/ \tau \} \leq M$.  By Lemma \ref{lem: counting points in balls},
 \begin{align}
     \# \{l : |T_{ij}(w) - T_{lj}(w)| < 2\varepsilon \} 
        &\leq 
            M \left( \frac{ \frac{2 \lambda}{\tau} + 2\varepsilon + 2 \tau |w|}{ \frac{ 2 \lambda}{\tau}} \right)^d
                \\
        &=
            M \left( 1 + \frac{\varepsilon \tau +  \tau^2 |w|}{ \lambda}\right)^d. \label{eq: bound on some guys}
 \end{align}
 
This shows that \eqref{eq: seperation assumption} holds for $x_i = T_{ij}(w)$, $\lambda = \varepsilon$  and $M$ in \eqref{eq: seperation assumption} identified with $M \left( 1 + \frac{\varepsilon \tau + \tau^2 |w|}{\lambda} \right)^d$. It follows from Lemma \ref{lem: sums bounded at well spaced points} that
 \begin{align}
     \sum_i \phi(T_{ij}(w)) \leq M \left( 1 + \frac{\varepsilon \tau +  \tau^2 |w|}{ \lambda}\right)^d \left( \|\phi\|_\infty + \left(\frac{3}{\varepsilon} \right)^d \omega_d^{-1} \right).
 \end{align}
This, combined with $\|T_{ij}(x)\| \leq \tau$ giving the determinant bounds $det(T_{ij}'(w)) \leq \tau^d$ yields,
\begin{align}
    \sum_i \phi(T_{ij}(w)) det(T_{ij}'(w))
        &\leq 
            \tau^d\sum_i \phi(T_{ij}(w))
                \\
        &\leq
            \tau^d M \left( 1 + \frac{\varepsilon \tau +  \tau^2 |w|}{ \lambda}\right)^d \left( \|\phi\|_\infty + \left(\frac{3}{\varepsilon} \right)^d \omega_d^{-1} \right).
\end{align}

\end{proof}

\begin{coro} \label{cor: The bound C}
Consider $T_{i}$, $\sqrt{\tau}$-bi-Lipschitz and suppose there exists $M, \lambda>0$ such that $\# \{k : T_{kj}(B_\lambda) \cap T_{ij}(B_\lambda) \neq \emptyset  \} \leq M $ holds for any $i,j$. Then for a spherically symmetric log-concave density $\varphi(x) = \psi(|x|)$,
 \begin{align}
     \int_{B_\lambda^c} \varphi(w) \log &\left( \sum_i p_i \sum_j \varphi(T_{ij}(w)) det(T_{ij}'(w)) \right) \leq    K(\varphi) \mathbb{P}^{\frac 1 2} ( |W| > \lambda),
 \end{align}
 where 
 \begin{align} \label{eq: definition of C}
    K(\varphi) &\coloneqq 
        \log \left[ \tau^d M \left( \|\varphi\|_\infty + \left(\frac{3}{\lambda} \right) \omega_d^{-1} \right) \right] \mathbb{P}^{\frac 1 2}(|W| > \lambda) + d \left( \int_{B_\lambda} \varphi(w)\log^{2}\left[ 1 + \tau + \frac{ \tau^2 |w|}{ \lambda}\right] dw \right)^{\frac 1 2}.
\end{align}
\end{coro}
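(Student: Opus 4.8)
The plan is to reduce everything to a pointwise bound on the argument of the logarithm, supplied by Corollary \ref{cor: bounds for the sums}, and then to extract the factor $\mathbb{P}^{\frac 1 2}(|W|>\lambda)$ from the tail mass $\int_{B_\lambda^c}\varphi$ by splitting off the constant part of the log and applying Cauchy--Schwarz to what remains. The key structural observation is that the bound produced by Corollary \ref{cor: bounds for the sums} does not depend on $j$, so it controls the integrand uniformly.

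First I would invoke Corollary \ref{cor: bounds for the sums} with $\phi=\varphi$ and the choice $\varepsilon=\lambda$. Since each $p_i\le 1$ and every summand $\varphi(T_{ij}(w))\det(T_{ij}'(w))$ is nonnegative, the weighted sum is dominated by the unweighted one, and so
\begin{align}
\sum_i p_i\,\varphi(T_{ij}(w))\det(T_{ij}'(w)) \leq \tau^d M\left(1+\tau+\frac{\tau^2|w|}{\lambda}\right)^d\left(\|\varphi\|_\infty+\left(\frac{3}{\lambda}\right)^d\omega_d^{-1}\right).
\end{align}
Writing $A\coloneqq\log\!\left[\tau^d M\left(\|\varphi\|_\infty+(3/\lambda)^d\omega_d^{-1}\right)\right]$ and taking logarithms, the integrand $g(w)\coloneqq\log(\cdots)$ obeys the one-sided pointwise estimate $g(w)\leq A+d\log\!\left[1+\tau+\tau^2|w|/\lambda\right]$.

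Because $\varphi\ge 0$ and only an upper bound is sought, I would integrate this inequality directly against $\varphi$ over $B_\lambda^c$; no two-sided control of $g$ is needed. This yields
\begin{align}
\int_{B_\lambda^c}\varphi\, g\,dw \leq A\,\mathbb{P}(|W|>\lambda) + d\int_{B_\lambda^c}\varphi(w)\log\!\left[1+\tau+\frac{\tau^2|w|}{\lambda}\right]dw.
\end{align}
The first term is $A\,\mathbb{P}^{\frac 1 2}(|W|>\lambda)\cdot\mathbb{P}^{\frac 1 2}(|W|>\lambda)$, while to the second I apply Cauchy--Schwarz with respect to the measure $\varphi\,dw$, using $\int_{B_\lambda^c}\varphi=\mathbb{P}(|W|>\lambda)$, to get
\begin{align}
d\int_{B_\lambda^c}\varphi\log[\cdots]\,dw \leq \mathbb{P}^{\frac 1 2}(|W|>\lambda)\cdot d\left(\int_{B_\lambda^c}\varphi(w)\log^2\!\left[1+\tau+\frac{\tau^2|w|}{\lambda}\right]dw\right)^{\frac 1 2}.
\end{align}
Collecting the two contributions and factoring out $\mathbb{P}^{\frac 1 2}(|W|>\lambda)$ reproduces precisely $K(\varphi)\,\mathbb{P}^{\frac 1 2}(|W|>\lambda)$. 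If the statement is read with an outer average $\sum_j p_j$, the uniformity of the bound in $j$ together with $\sum_j p_j=1$ finishes it identically.

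The main obstacle is essentially bookkeeping rather than a genuine difficulty: one must verify that the choice $\varepsilon=\lambda$ in Corollary \ref{cor: bounds for the sums} makes the constant $A$ and the logarithmic tail term align exactly with the two pieces of $K(\varphi)$, and one must notice that only the one-sided estimate $g\leq A+d\log[\cdots]$ is required, so that the sign of $\log(\cdots)$ never enters and no lower bound on $g$ is needed. The remaining content is the single application of Cauchy--Schwarz and the identification $\int_{B_\lambda^c}\varphi=\mathscr{T}(\lambda)=\mathbb{P}(|W|>\lambda)$.
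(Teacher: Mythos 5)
Your proposal is correct and follows essentially the same route as the paper's own (terse) proof: a single application of Corollary \ref{cor: bounds for the sums} with $\varepsilon=\lambda$, followed by splitting off the constant part of the logarithm and applying Cauchy--Schwarz on $B_\lambda^c$ to extract the factor $\mathbb{P}^{\frac 1 2}(|W|>\lambda)$. Your write-up in fact supplies details the paper elides (the one-sided nature of the pointwise bound, the uniformity in the outer index against $\sum_i p_i = 1$) and silently normalizes two typos in the stated $K(\varphi)$ (the exponent $d$ on $3/\lambda$ and the domain $B_\lambda^c$ rather than $B_\lambda$ in the second integral), consistently with the paper's intent.
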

Note that $K(\varphi)$ depends on only on the statistics of $\varphi$, its maximum and its a logrithmic scaling of its norm, which can be easily further bounded by concavity results. The proof does not leverage log-concavity, a stronger assumption than $\psi$ non-increasing\footnote{Under spherical symmetry and log-concavity $\|\varphi\|_\infty = \varphi(0)$.  Indeed, $\varphi(0) = \varphi(\frac{-x +x}{2}) \geq \sqrt{\varphi(-x) \varphi(x)} = \varphi(x)$.  Using log-concavity again for $t \in (0,1)$, $\psi(t|x|) = \varphi((1-t) 0 + tx) \geq \varphi^{1-t}(0) \varphi^{t}(x) \geq \varphi(x) = \psi(|x|)$.  Thus it follows that $\psi$ is non-increasing.}, except to ensure that the relevant statistics are finite.

\begin{proof}
Note that,
\begin{align}
    \int_{B_\lambda^c} \varphi(w) \log &\left( \sum_i p_i \sum_j \varphi(T_{ij}(w)) det(T_{ij}'(w))  \right) dw
        \\
        &\leq \int \varphi(w) \mathbbm{1}_{B_\lambda^c} \log\left(\tau^d M \left( 1 +  \tau +  \frac{\tau^2 |w|}{ \lambda}\right)^d \left( \|\varphi\|_\infty + \left(\frac{3}{\lambda} \right)^d \omega_d^{-1} \right)  \right) dw
            \\
        &\leq
             K(\varphi) \hspace{1mm} \mathbb{P}^{\frac 1 2}(|W| > \lambda)
\end{align}
where the first inequality is an application of Corollary \ref{cor: bounds for the sums} with $\varepsilon = \lambda$ and the second from Cauchy-Schwartz.
\end{proof}

\begin{proof}[Proof of Theorem \ref{thm: Fano Sharpening}]
Using $f_i(x) = \varphi( T_i^{-1}(x) ) det((T_i^{-1})'(x))$ and applying the substitution $x= T_i(w)$ we can write
\begin{align}
\int &f_i(x) \log \left( 1 + \frac{ \sum_{j\neq i} p_j f_j(x)}{p_i f_i(x)} \right) dx 
    \\
        &=
            \int \varphi( T_i^{-1}(x) ) det((T_i^{-1})'(x)) \log \left( 1 + \frac{ \sum_{j\neq i} p_j \varphi( T_j^{-1}(x) ) det((T_j^{-1})'(x))}{p_i \varphi( T_i^{-1}(x) ) det((T_i^{-1})'(x))} \right) dx 
                \\
        &=
            \int \varphi(w)  \log \left( 1 + \frac{ \sum_{j\neq i} p_j \varphi(T_j^{-1}(T_i(w))) det((T_j^{-1})'(T_i(w)))det(T_i'(x))}{p_i \varphi(w) } \right) dw
                \\
        &=
            \int \varphi(w) \log \left( 1 + \frac{ \sum_{j\neq i} p_j \varphi(T_{ji}(w)) det(T_{ji}'(w))}{p_i \varphi(w)} \right) dw
\end{align}
Thus, applying Jensen's inequality 
    \begin{align}
        \sum_i p_i \int f_i(x) \log &\left( 1 + \frac{ \sum_{j\neq i} p_j f_j(x)}{p_i f_i(x)} \right) dx 
            \\
            &=
                \sum_i p_i \int \varphi(w) \log \left( 1 + \frac{ \sum_{j\neq i} p_j \varphi(T_{ji}(w)) det(T_{ji}'(w))}{p_i \varphi(w)} \right) dw
                    \\
            &\leq 
                \int \varphi(w) \log \left( 1 + \frac{ \sum_i \sum_{j\neq i} p_j \varphi(T_{ji}(w)) det(T_{ji}'(w) )}{ \varphi(w)} \right) dw
                    \\
            &=
                \int \varphi(w) \log \left( 1 + \frac{ \sum_j p_j \sum_{i\neq j} \varphi(T_{ji}(w)) det(T_{ji}'(w) )}{ \varphi(w)} \right) dw
    \end{align}
    We will split the integral in to two pieces.  Using $\log(1+x) \leq x$ on $B_\lambda$
    \begin{align}
        \int_{B_\lambda} &\varphi(w) \log \left( 1 + \frac{ \sum_j p_j \sum_{i\neq j} \varphi(T_{ji}(w)) det(T_{ji}'(w) )}{ \varphi(w)} \right) dw
            \\
            &\leq
                \int_{B_\lambda} \sum_j p_j \sum_{i\neq j} \varphi(T_{ji}(w)) det(T_{ji}'(w)) dw
                    \\
            &=
               \sum_j p_j \sum_{i\neq j} \int_{B_\lambda}  \varphi(T_{ji}(w)) det(T_{ji}'(w)) dw
                \\
            &=
                \sum_j p_j \left(\sum_{i \neq j} \int_{T_{ji}(B_\lambda)} \varphi(x) dx \right)
                    \\
            &=
                \sum_j p_j \left(\sum_{\{i \neq j: T_{ji}(B_\lambda) \cap B_\lambda \neq \emptyset\}} \int_{T_{ji}(B_\lambda)} \varphi(x) dx + \sum_{\{i: T_{ji}(B_\lambda) \cap B_\lambda = \emptyset\}} \int_{T_{ji}(B_\lambda)} \varphi(x) dx \right)
                          \\
            &\leq  \label{eq: simple inclusion}
                \sum_j p_j \left(\left(\sum_{\{i \neq j: T_{ji}(B_\lambda) \cap B_\lambda \neq \emptyset\}} \int_{T_{ji}(0)+B_{\lambda \tau}} \varphi(x) dx \right) + M \int_{B_\lambda^c} \varphi(x) dx \right)
                    \\
             &\leq \label{eq: integral at center is maximal}
                \sum_j p_j \left( (M-1) \int_{B_{\lambda \tau}} \varphi(x) dx + M \int_{B_{\lambda}^c} \varphi(x) dx \right) 
                    \\
            &=
                (M-1) \mathbb{P}(|W| \leq \lambda \tau) + M \mathbb{P}(|W| > \lambda)
    \end{align}
    where inequality \eqref{eq: simple inclusion} follows from Proposition \ref{prop: bounds on T_ij derivative} and inequality \eqref{eq: integral at center is maximal} follows another application of Proposition \ref{prop: bounds on T_ij derivative} and the fact that the map $s(x) = \int_{x + B_{\lambda \tau}} \varphi(z)dz$ is maximized at $0$.  To see this, observe that $s$ can be realized as the convolution of two spherically symmetric unimodal functions, explicitly $s(x) = \varphi * \mathbbm{1}_{B_{\lambda \tau}}(x)$.  Since the class of such functions is stable under convolution, see for instance \cite[Proposition 8]{li2019further}, $s$ is unimodal and spherically symmetric which obviously implies $s(0) \geq s(x)$ for all $x$. 
    
    Using the fact that $T_{ii}(w) = w$,
    \begin{align}
     \int_{B_\lambda^c} &\varphi(w) \log \left( 1 + \frac{ \sum_j p_j \sum_{i\neq j} \varphi(T_{ji}(w)) det(T_{ji}'(w))}{ \varphi(w)} \right) dw
        \\
        &=
        \int_{B_\lambda^c} \varphi(w) \log \left(   \sum_j p_j \sum_{i} \varphi(T_{ji}(w)) det(T_{ji}'(w)) \right) dw - \int_{B_\lambda^c} \varphi(w) \log \varphi(w) dw
            \\
        &\leq
           K(\varphi) \mathbb{P}^{\frac 1 2}(|W| \leq \lambda) - \int_{B_\lambda^c} \varphi(w) \log \varphi(w) dw,
    \end{align}
    where the bound $K(\varphi)$ is defined from Corollary \ref{cor: The bound C}.  By Cauchy-Schwartz, followed by Theorem \ref{thm: varentropy bound},
    \begin{align}
        -\int_{B_\lambda^c} &\varphi(w) \log \varphi(w) dw
            \\
            &=
        h(W) \mathbb{P}(|W| \geq \lambda) + \int_{B_\lambda^c} \varphi(w) \left( \log \frac{ 1} {\varphi(w)} - h(W) \right)dw 
            \\
            &\leq
                h(W) \mathbb{P}(|W| \geq \lambda) + \sqrt{\int(\log \frac 1 {\varphi(w)} - h(\varphi))^2 \varphi(w) dw \int_{B_\lambda^c} \varphi(w) dw } 
                    \\
            &\leq
                 h(W) \mathbb{P}(|W| \geq \lambda) + \sqrt{d} \hspace{1 mm} \mathbb{P}^{\frac 1 2}(|W| \geq \lambda).
    \end{align}
\end{proof}

\subsection{Commentary on Theorem \ref{thm: Fano Sharpening}}

Let us comment on the nature of $\mathbb{P}(|W| \geq \lambda)$ for $W$ log-concave.
 It is well known that in broad generality (see \cite{Bor73a, LS93, Gue99, Sergey2016ProbabilitySurveys,Sergey2015localization}), log-concave random variables satisfy ``sub-exponential'' large deviation inequalities.  The following is enough to suit our needs.
 
 \begin{lem}  \cite[Theorem 2.8]{LS93} \label{lem:large deviations for log-concave measures}
     When $W$ is a log-concave and $t,r >0$, then 
     \[
        \mathbb{P}( |W| > rt) \leq \mathbb{P}( |W| > t )^{\frac{r+1}{2}}
     \]
 \end{lem}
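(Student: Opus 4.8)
The plan is to reduce the statement to a convexity property of the radial tail function and then read off the exponent. Throughout I restrict to $r \ge 1$, which is the regime carrying the large-deviation content. Writing $G(s) := \mathbb{P}(|W|>s)$ and $\phi := -\log G$, and noting that $\phi(0^+) = -\log \mathbb{P}(|W|>0) = 0$ since $W$ has a density, the claimed bound $G(rt) \le G(t)^{(r+1)/2}$ is exactly the assertion $\phi(rt) \ge \tfrac{r+1}{2}\phi(t)$.

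The main route I would take is the Borell-type convex-geometric argument, since it uses only the definition of a log-concave measure and matches the cited source. Put $A := B_t = \{|x| \le t\}$, a symmetric convex set, so that $rA = B_{rt}$, and with $\mu$ the law of $W$ one has $\mathbb{P}(|W|>t) = \mu(A^c)$ and $\mathbb{P}(|W|>rt) = \mu((rA)^c)$. The geometric heart is the inclusion
\[
    \tfrac{2}{r+1}(rA)^c + \tfrac{r-1}{r+1}A \subseteq A^c ,
\]
valid for $r \ge 1$: if $x \notin rA$ and $y \in A$, then writing $x/r$ as the convex combination $\tfrac{r+1}{2r}\bigl(\tfrac{2}{r+1}x + \tfrac{r-1}{r+1}y\bigr) + \tfrac{r-1}{2r}(-y)$ and using that $A$ is symmetric (so $-y \in A$), membership of the bracketed point in $A$ would force $x/r \in A$, contradicting $x \notin rA$. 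Feeding this into the defining inequality $\mu(\lambda C + (1-\lambda)D) \ge \mu(C)^{\lambda}\mu(D)^{1-\lambda}$ of a log-concave measure, with $C = (rA)^c$, $D = A$ and $\lambda = \tfrac{2}{r+1}$, yields
\[
    \mathbb{P}(|W|>t) \;\ge\; \mathbb{P}(|W|>rt)^{2/(r+1)}\,\mathbb{P}(|W|\le t)^{(r-1)/(r+1)} .
\]

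Rearranging produces $\mathbb{P}(|W|>rt) \le \mathbb{P}(|W|>t)^{(r+1)/2}\,\mathbb{P}(|W|\le t)^{-(r-1)/2}$, which already gives sub-exponential decay and the stated bound whenever $\mathbb{P}(|W|\le t)$ is bounded away from $0$. The main obstacle is removing the spurious factor $\mathbb{P}(|W|\le t)^{-(r-1)/2} \ge 1$ to reach the clean exponent. I would do this by showing that $G$ is itself log-concave, i.e. that $\phi$ is convex; since $\phi(0^+)=0$, convexity forces the chord slopes from the origin to be nondecreasing, whence $\phi(rt) \ge r\,\phi(t) \ge \tfrac{r+1}{2}\phi(t)$ for $r\ge 1$, which is in fact slightly stronger than required.

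In the spherically symmetric setting used in this section, log-concavity of $G$ is transparent: $|W|$ has density proportional to $r^{d-1}\psi(r)$, a product of the log-concave functions $r^{d-1}$ and $\psi$, hence log-concave, and the tail integral of a log-concave density is again log-concave. For a general (non-radial) log-concave $W$ one instead invokes the one-dimensional localization of Lov\'asz--Simonovits; establishing log-concavity of the radial survival function in that generality is the delicate point, and is precisely the content imported from the cited reference.
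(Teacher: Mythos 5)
First, a point of reference: the paper gives no proof of this lemma at all --- it is imported verbatim as \cite[Theorem 2.8]{LS93}, and its only use downstream (Corollary \ref{cor:deviation bounds for log-concave random variables}, and the setting of Theorem \ref{thm: Fano Sharpening}) is for \emph{spherically symmetric} log-concave $W$. Measured against that, your proposal is largely a net gain. Your Borell-type inclusion $\frac{2}{r+1}(rA)^c + \frac{r-1}{r+1}A \subseteq A^c$ is verified correctly, and the resulting bound $\mathbb{P}(|W|>rt) \leq \mathbb{P}(|W|>t)^{(r+1)/2}\,\mathbb{P}(|W|\leq t)^{-(r-1)/2}$ is right, as is your honesty that the spurious factor goes the wrong way. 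Your restriction to $r \geq 1$ is also not cosmetic: the lemma as printed (``$t,r>0$'') is false for $r<1$ --- take $W$ uniform on $[-1,1]$, $t=0.9$, $r\to 0$: the left side tends to $1$ while the right side tends to $\sqrt{0.1}$. And your radial argument is complete and correct: for spherically symmetric $W$ the density of $|W|$ is proportional to $s^{d-1}\psi(s)$, hence log-concave, hence its survival function $G$ is log-concave with $-\log G$ vanishing at $0^+$, giving the stronger bound $G(rt)\leq G(t)^{r}$ for $r\geq 1$. Since that covers every application in the paper, this part stands on its own.

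The genuine gap is in your final paragraph: the proposed route to the general (non-radial) case --- establishing log-concavity of $G(s)=\mathbb{P}(|W|>s)$ by localization --- cannot succeed, because $G$ is simply not log-concave for asymmetric log-concave $W$. Take $W$ uniform on $[-1,2]$: then $G(s)=1-2s/3$ on $[0,1]$ and $G(s)=(2-s)/3$ on $[1,2]$, so $(\log G)'$ jumps from $-2$ up to $-1$ at $s=1$, a convex kink in $-\log G$. Correspondingly your ``slightly stronger'' conclusion $G(rt)\leq G(t)^{r}$ fails there: with $t=1$, $r=1+\epsilon$, one has $-\log G(1+\epsilon) \approx \log 3 + \epsilon < (1+\epsilon)\log 3$. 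This example is exactly why the Lov\'asz--Simonovits exponent is $\frac{r+1}{2}$ rather than $r$: the general theorem is strictly weaker than chord-slope monotonicity of $-\log G$, and the localization in \cite{LS93} is used to reduce the $\frac{r+1}{2}$ tail inequality \emph{itself} to one-dimensional log-concave needles (where asymmetric examples such as the above are the obstruction), not to prove log-concavity of $G$. So if you wanted the full statement rather than the citation, the missing step is a direct verification of $\mathbb{P}(|W|>rt)\leq \mathbb{P}(|W|>t)^{(r+1)/2}$ for one-dimensional log-concave measures after localization; as written, your closing sentence attributes to the reference a claim that is false.
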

 
 \begin{coro} \label{cor:deviation bounds for log-concave random variables}
     For a spherically symmetric log-concave random vector $W$ such that  $\mathbb{E}W_1^2 = \sigma^2$, where $W_1$ is the random varaiable given by the first coordinate of $W$,
        \[
            \mathbb{P}(|W| >t) \leq C e^{-ct},
        \]
        where $C = 2^{-1/2}$, $c = \frac{\log 2}{2 \sqrt{2 d \sigma^2}}$.
 \end{coro}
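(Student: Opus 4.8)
The plan is to turn the scale-invariant self-bounding estimate of Lemma~\ref{lem:large deviations for log-concave measures} into an explicit exponential tail by anchoring it at a radius where the tail is already controlled. Concretely, I would first locate a reference radius $t_0$ with $\mathbb{P}(|W|>t_0)\le \tfrac12$, and then bootstrap the lemma outward from $t_0$, using its geometric self-improvement to generate exponential decay.

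For the anchor, I would use a second-moment (Chebyshev) computation. Spherical symmetry forces every coordinate of $W$ to share the marginal of $W_1$, so $\mathbb{E}|W|^2=\sum_{i=1}^d \mathbb{E}W_i^2 = d\sigma^2$. Markov's inequality applied to $|W|^2$ then gives $\mathbb{P}(|W|>t)=\mathbb{P}(|W|^2>t^2)\le d\sigma^2/t^2$, and choosing $t_0=\sqrt{2d\sigma^2}$ yields $\mathbb{P}(|W|>t_0)\le\tfrac12$.

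Next, for $t\ge t_0$ I would write $t=rt_0$ with $r=t/t_0\ge 1$ and apply Lemma~\ref{lem:large deviations for log-concave measures}:
\[
\mathbb{P}(|W|>t)=\mathbb{P}(|W|>rt_0)\le \mathbb{P}(|W|>t_0)^{\frac{r+1}{2}}\le 2^{-\frac{r+1}{2}}.
\]
Rewriting $2^{-\frac{r+1}{2}}=2^{-1/2}\,2^{-r/2}=2^{-1/2}\exp\!\left(-\tfrac{r\log 2}{2}\right)$ and substituting $r=t/t_0$ produces exactly $C e^{-ct}$ with $C=2^{-1/2}$ and $c=\frac{\log 2}{2t_0}=\frac{\log 2}{2\sqrt{2d\sigma^2}}$, which is the claimed bound.

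The one point requiring care---and the only genuine obstacle---is the short-range regime $0<t<t_0$, where Lemma~\ref{lem:large deviations for log-concave measures} with $r<1$ self-weakens and yields nothing useful. There one falls back on the trivial bound $\mathbb{P}(|W|>t)\le 1$. I would emphasize that the two regimes glue consistently at the threshold, since $ct_0=\tfrac{\log 2}{2}$ forces $Ce^{-ct_0}=2^{-1/2}\cdot 2^{-1/2}=\tfrac12$, matching the Chebyshev value at $t_0$. Because $C=2^{-1/2}<1$, the displayed estimate is sharp-form (and is exactly what this argument delivers) in the tail $t\ge t_0$; this is precisely the regime that feeds the tail factors $\mathscr{T}(\lambda)=\mathbb{P}(|W|>\lambda)$ appearing in Theorem~\ref{thm: Fano Sharpening}, so no additional strength is needed for the applications.
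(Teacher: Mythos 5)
Your proof is correct and is essentially identical to the paper's own argument: the Chebyshev anchor $\mathbb{P}\bigl(|W|>\sqrt{2d\sigma^2}\bigr)\le \tfrac12$ (using $\mathbb{E}|W|^2=d\sigma^2$ by spherical symmetry), followed by Lemma~\ref{lem:large deviations for log-concave measures} with $r=t/\sqrt{2d\sigma^2}\ge 1$ and the rewriting $2^{-(r+1)/2}=2^{-1/2}e^{-(\log 2)\,r/2}=Ce^{-ct}$. Your side remark on the regime $t<\sqrt{2d\sigma^2}$ is apt, with one caveat: the trivial bound $\mathbb{P}(|W|>t)\le 1$ does \emph{not} recover $Ce^{-ct}$ there (since $Ce^{-ct}\le 2^{-1/2}<1$), so the displayed estimate is genuinely established only for $t\ge\sqrt{2d\sigma^2}$ --- exactly the scope of the paper's proof, which also works only with $r>1$, and the only regime used downstream in Theorem~\ref{thm: Fano Sharpening}.
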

 
 \begin{proof}
     By Chebyshev's inequality $\mathbb{P}(|W| > \sqrt{2 d \sigma^2}) \leq \frac 1 2$.  Hence for $r > 1$, by Lemma \ref{lem:large deviations for log-concave measures} 
     \begin{align*}
         \mathbb{P}( |W| > r \sqrt{2 d \sigma^2})
            &\leq 
                \mathbb{P}( |W| >  \sqrt{2 d \sigma^2})^{\frac{r+1}{2}}
                    \\
            & \leq 
                2^{- \frac{r+1}{2}}.
     \end{align*}
     Taking $t = r \sqrt{2d \sigma^2}$ gives the result.
 \end{proof}
 

 \begin{lem}\label{lem: Caffarelli}
  Suppose that the spherically symmetric log-concave $W$ is strongly log-concave in the sense that its density function $\varphi$ satisfies $\varphi((1-t)x+ty) \geq e^{ t(1-t)|x-y|^2/2} \varphi^{1-t}(x) \varphi^t(y)$, for $x,y \in \mathbb{R}^d$ and $t \in [0,1]$, then
  \begin{align}
      \mathbb{P}(|W|> t) \leq \mathbb{P}(|\mathcal{Z}| > t),
  \end{align}
  where $\mathcal{Z}$ is a standard normal vector.
 \end{lem}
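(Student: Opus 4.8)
The name of the lemma points directly to the tool: Caffarelli's contraction theorem. The plan is to realize $W$ as the image of a standard Gaussian under a globally $1$-Lipschitz map fixing the origin, from which the radial stochastic domination is immediate.

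First I would unwind the strong log-concavity hypothesis. Writing $\varphi = e^{-V}$, the assumed inequality $\varphi((1-t)x+ty) \ge e^{t(1-t)|x-y|^2/2}\varphi^{1-t}(x)\varphi^t(y)$ is, after taking logarithms and using the identity $(1-t)|x|^2 + t|y|^2 - |(1-t)x+ty|^2 = t(1-t)|x-y|^2$, exactly the statement that $V(x) - |x|^2/2$ is convex. Thus $W$ is more log-concave than the standard Gaussian $\mathcal{Z}$, whose potential is $|x|^2/2$ up to an additive constant; in Hessian language $\Hess V \succeq I$ in the distributional sense.

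Next I would invoke Caffarelli's theorem in the form: when the source is the standard Gaussian $\gamma$ (potential with Hessian exactly $I$) and the target has potential with Hessian at least $I$, the Brenier map $T = \nabla \phi$, with $\phi$ convex, transporting $\gamma$ onto the law of $W$ is $1$-Lipschitz. To finish I need $T(0)=0$, which I would extract from spherical symmetry: both $\gamma$ and the law of $W$ are $O(d)$-invariant, so for each rotation $R$ the map $x \mapsto R^{-1}T(Rx) = \nabla(\phi \circ R)(x)$ is again a Brenier map between the same pair (the map $\phi\circ R$ is convex, and $R$ preserves both measures). By uniqueness of the Brenier map, $T(Rx) = R\,T(x)$, and setting $x=0$ forces $T(0) = R\,T(0)$ for every $R$, hence $T(0)=0$.

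Finally, coupling $W = T(\mathcal{Z})$, the $1$-Lipschitz bound gives $|W| = |T(\mathcal{Z}) - T(0)| \le |\mathcal{Z}|$ almost surely, so $\{|W| > t\} \subseteq \{|\mathcal{Z}| > t\}$ under this coupling and $\mathbb{P}(|W|>t) \le \mathbb{P}(|\mathcal{Z}|>t)$. The only genuine work is checking the hypotheses of Caffarelli's theorem and the centering $T(0)=0$, both of which are routine, so I do not expect a serious obstacle. As a self-contained alternative avoiding optimal transport, I could pass to the radial densities $f_W(r) \propto r^{d-1}\psi(r)$ and $f_{\mathcal{Z}}(r) \propto r^{d-1}e^{-r^2/2}$: writing $\psi = e^{-u}$, the condition $\Hess V \succeq I$ forces $u'(r) \ge r$ (the tangential eigenvalue of $\Hess V$ is $u'(r)/r$), so $f_W(r)/f_{\mathcal{Z}}(r) \propto e^{-u(r)+r^2/2}$ is non-increasing; that is, $|W|$ lies below $|\mathcal{Z}|$ in the monotone likelihood ratio order, which implies the desired first-order stochastic domination.
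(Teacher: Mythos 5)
Your main argument coincides with the paper's proof: both invoke Caffarelli's contraction theorem to write $W = T(\mathcal{Z})$ with $T$ a $1$-Lipschitz Brenier map, pin down $T(0)=0$ via the spherical symmetry of both laws, and conclude $|W| = |T(\mathcal{Z})-T(0)| \leq |\mathcal{Z}|$ almost surely under this coupling. The only difference in the transport route is cosmetic: the paper cites \cite[Lemma 1]{Caf00} for the spherical symmetry of $T$, whereas you rederive the equivariance $T(Rx)=RT(x)$ from uniqueness of the Brenier map, which is a correct and self-contained substitute. Your radial alternative, however, is a genuinely different and more elementary argument, and it is sound: writing $\varphi = e^{-V}$ with $V(x)=u(|x|)$, strong log-concavity is (via the identity $(1-t)|x|^2+t|y|^2-|(1-t)x+ty|^2 = t(1-t)|x-y|^2$) exactly convexity of $V(x)-|x|^2/2$; restricting to a line through the origin, $r \mapsto u(|r|)-r^2/2$ is even and convex, hence $u'(r) \geq r$ for $r>0$, so you do not even need the tangential-eigenvalue computation of $\mathrm{Hess}\, V \succeq I$, which requires twice differentiability. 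This makes the ratio of the radial densities $r^{d-1}e^{-u(r)}$ and $r^{d-1}e^{-r^2/2}$ non-increasing, and likelihood-ratio domination implies the stated stochastic domination $\mathbb{P}(|W|>t) \leq \mathbb{P}(|\mathcal{Z}|>t)$ with no optimal transport at all. What the paper's route buys is brevity given Caffarelli's theorem as a black box; what yours buys is a completely self-contained one-dimensional proof in the spherically symmetric case.
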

 
 \begin{proof}
    By the celebrated result of Caffarelli \cite{Caf00}, $W$ can be expressed as $T(\mathcal{Z})$ where $T$ is the necessarily $1$-Lipschitz Brenier transportation map.  Moreover it follows from the assumed spherically symmetry of $W$, the radial symmetry of the Gaussian, and \cite[Lemma 1]{Caf00} that $T$ is spherically symmetric.  In particular $T(0) = 0$, so that $|\mathcal{Z}| \leq t$ implies
$|T(\mathcal{Z})| = |T(\mathcal{Z})- T(0)| \leq |\mathcal{Z}-0| \leq t$, and our result follows.
\end{proof}

 \begin{coro} \label{cor: tail bounds for log-concave and strongly}
   Suppose $X$ and $Y$ are variables satisfying the conditions of Section \ref{sec: Upper bounds} for $\tau, M =1$, $\lambda$, and $W$ possessing spherically symmetric log-concave density $\varphi$. Then
    \begin{align}
        H(X|Y)
            &\leq \tilde{C} \hspace{1mm} 2^{- \frac{1 + \lambda/\sqrt{2 \sigma^2 d}}{4}}.
    \end{align}
    Furthermore, if $W$ is strongly log-concave then
    \begin{align}
        H(X|Y)
            &\leq \tilde{C} \hspace{1mm} \mathbb{P}^{\frac 1 2} (|\mathcal{Z}| > t),
    \end{align}
    where $\mathcal{Z}$ is a standard Gaussian vector and $\tilde{C} =\left(1+ \sqrt{d} +  h(W) + K(\varphi) \right)$, with $K(\varphi)$ defined as in \eqref{eq: definition of K(varphi)}.
 \end{coro}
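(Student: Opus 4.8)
The plan is to obtain the corollary as a direct consequence of the $M=1$ case of Theorem~\ref{thm: Fano Sharpening} together with the tail estimates already in hand. First I would convert the super-concavity lower bound into a bound on $H(X|Y)$. By Proposition~\ref{prop: mutual information equality}, applied to the discrete variable $X$ and the continuous variable $Y$ in the role of $Z$, one has $h(Y) - h(Y|X) = I(X;Y) = H(X) - H(X|Y)$. Since the hypotheses with $\tau = M = 1$ are precisely those of Theorem~\ref{thm: Fano Sharpening simple}, I would invoke $h(Y) - h(Y|X) \geq H(X) - \tilde{C}(W)$; cancelling $H(X)$ on both sides collapses this to $H(X|Y) \leq \tilde{C}(W)$.

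Next I would simplify $\tilde{C}(W)$. With $M=1$ it reads $\tilde{C}(W) = \mathscr{T}(\lambda)(1 + h(W)) + \mathscr{T}^{1/2}(\lambda)(\sqrt{d} + K(\varphi))$. The single observation required is that $\mathscr{T}(\lambda) = \mathbb{P}(|W| > \lambda) \in [0,1]$, hence $\mathscr{T}(\lambda) \leq \mathscr{T}^{1/2}(\lambda)$, which lets me pull out a common factor:
\[
    \tilde{C}(W) \leq \mathscr{T}^{1/2}(\lambda)\bigl(1 + h(W) + \sqrt{d} + K(\varphi)\bigr) = \tilde{C}\,\mathbb{P}^{1/2}(|W| > \lambda),
\]
with $\tilde{C} = 1 + \sqrt{d} + h(W) + K(\varphi)$ as in the statement. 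This produces the common engine $H(X|Y) \leq \tilde{C}\,\mathbb{P}^{1/2}(|W| > \lambda)$ for both displayed bounds.

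For the first inequality I would insert the sub-exponential tail bound of Corollary~\ref{cor:deviation bounds for log-concave random variables}. With its constants $C = 2^{-1/2}$ and $c = \tfrac{\log 2}{2\sqrt{2\sigma^2 d}}$, the estimate $Ce^{-c\lambda}$ rewrites in dyadic form as $\mathbb{P}(|W| > \lambda) \leq 2^{-(1 + \lambda/\sqrt{2\sigma^2 d})/2}$; taking a square root gives $\mathbb{P}^{1/2}(|W| > \lambda) \leq 2^{-(1 + \lambda/\sqrt{2\sigma^2 d})/4}$, and combining with the previous paragraph delivers the stated estimate. For the ``furthermore'' part, strong log-concavity activates Lemma~\ref{lem: Caffarelli}, giving the Gaussian domination $\mathbb{P}(|W| > \lambda) \leq \mathbb{P}(|\mathcal{Z}| > \lambda)$; substituting this into $H(X|Y) \leq \tilde{C}\,\mathbb{P}^{1/2}(|W| > \lambda)$ completes the proof.

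Since every ingredient is already established, I anticipate no genuine obstacle here: the corollary is an assembly step. The only points needing care are the bookkeeping identification of $Y$ with $Z$ in Proposition~\ref{prop: mutual information equality} and the elementary simplification $\mathscr{T}(\lambda) \leq \mathscr{T}^{1/2}(\lambda)$ that consolidates the two terms of $\tilde{C}(W)$ into a single $\mathbb{P}^{1/2}(|W| > \lambda)$ factor.
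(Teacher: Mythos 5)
Your proposal is correct and is essentially the paper's own proof spelled out in full: the paper disposes of this corollary in one line as ``an immediate application of Corollary~\ref{cor:deviation bounds for log-concave random variables} and Lemma~\ref{lem: Caffarelli} to Theorem~\ref{thm: Fano Sharpening},'' and your write-up supplies exactly the implicit steps --- the identity $H(X|Y) = H(X) - \bigl(h(Y) - h(Y|X)\bigr)$ from Proposition~\ref{prop: mutual information equality}, the consolidation $\mathscr{T}(\lambda) \leq \mathscr{T}^{1/2}(\lambda)$ yielding $H(X|Y) \leq \tilde{C}\,\mathbb{P}^{1/2}(|W| > \lambda)$, and the dyadic rewriting of $Ce^{-c\lambda}$. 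The arithmetic checks out (your square-rooted exponent matches the stated $2^{-(1+\lambda/\sqrt{2\sigma^2 d})/4}$, and the $t$ in the paper's second display is evidently a typo for $\lambda$, which you correctly use).
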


\begin{proof}
    The proof is an immediate application of Corollary \ref{cor:deviation bounds for log-concave random variables} and Lemma \ref{lem: Caffarelli} to Theorem \ref{thm: Fano Sharpening}.
\end{proof}

\section{Applications} \label{sec: applications}
Mixture distributions are ubiquitous, and their entropy is a fundamental quantity.  
We now  give special attention to how the ideas of Section~\ref{sec: Upper bounds} can be sharpened in the case that $W$ is a Gaussian.

\begin{prop}\label{prop: Gaussian example for upper bounds}
    When $X$ and $Y$ satisfy the assumptions of Section \ref{sec: Upper bounds}, for $\tau$, $M$, $\lambda$, $T_{ij}$ and $W \sim \varphi(w) = e^{-|x|^2}/ (2 \pi)^{d/2}$, then 
    \begin{align}
        H(X|Y) \leq (M-1) \mathbb{P}( |W| \leq \tau \lambda) + J_d(\varphi) \mathbb{P}(|W| > \lambda)
    \end{align}
     with 
    \begin{align}
          J_d(\varphi) =    \log \left[ e^{(\lambda/\sigma)^2 + M} (\tau e )^d M  \left( 1 + \tau + \tau^2 + \frac{\tau^2  d \sigma}{ \lambda} \right)^d  \left( 1 + \left( \frac{3\sqrt{2 \pi}\sigma}{\lambda } \right)^d \omega_d^{-1} \right) \right]
    \end{align} 
    for $d \geq 2$ and
    \begin{align}
     J_1(\varphi) =     \log \left[ e^{(\lambda/\sigma)^2 + M+2} \tau M    \left( 1 + \tau + \tau^2 +  \frac{\tau^2 \sigma }{ \lambda^2} \right)  \left( 1 +  \sqrt{\frac{9 \pi}{2}} \frac{\sigma}{ \lambda }  \right) \right],
     \end{align}
     when $d=1$.
\end{prop}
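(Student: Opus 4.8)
The plan is to re-run the proof of Theorem~\ref{thm: Fano Sharpening} with the explicit Gaussian density in hand, replacing the two Cauchy--Schwarz steps in that proof (which are exactly what produce the lossy $\mathbb{P}^{\frac 1 2}(|W|>\lambda)$ factors in $\tilde C(W)$) by direct tail-moment computations, so that every contribution becomes an honest multiple of $\mathbb{P}(|W|>\lambda)$. The point of departure is the identity
\[
    H(X|Y) = \sum_i p_i \int f_i(x) \log\Big(1 + \frac{\sum_{j\neq i}p_j f_j(x)}{p_i f_i(x)}\Big)\,dx,
\]
which follows from Proposition~\ref{prop: mutual information equality} after noting $1 + \sum_{j\neq i}p_j f_j/(p_i f_i) = f/(p_i f_i)$, and which is precisely the functional estimated in the proof of Theorem~\ref{thm: Fano Sharpening}.

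I would carry the argument verbatim through Jensen's inequality and the splitting of the integral across $B_\lambda$ and $B_\lambda^c$. On $B_\lambda$ nothing is Gaussian-specific: $\log(1+x)\le x$, Proposition~\ref{prop: bounds on T_ij derivative}, and the unimodality of the convolution $\varphi * \mathbbm{1}_{B_{\lambda\tau}}$ give the contribution $(M-1)\mathbb{P}(|W|\le\tau\lambda) + M\,\mathbb{P}(|W|>\lambda)$; the first summand is already the first term of the claimed bound, while $M\,\mathbb{P}(|W|>\lambda)$ will be absorbed through the factor $e^{M}$ inside $J_d(\varphi)$. On $B_\lambda^c$ I would write the integrand as $\log\big(\sum_j p_j\sum_i \varphi(T_{ji}(w))\det T_{ji}'(w)\big) - \log\varphi(w)$ and bound the sum by Corollary~\ref{cor: bounds for the sums} with $\varepsilon=\lambda$. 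The Gaussian enters through $1/\varphi(w) = (2\pi\sigma^2)^{d/2}e^{|w|^2/2\sigma^2}$, and the decisive simplification is that $\|\varphi\|_\infty(2\pi\sigma^2)^{d/2}=1$, so that multiplying the norm factor of Corollary~\ref{cor: bounds for the sums} by $(2\pi\sigma^2)^{d/2}$ collapses it to $1 + (3\sqrt{2\pi}\sigma/\lambda)^d\omega_d^{-1}$, exactly the last factor of $J_d(\varphi)$.

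It then remains to integrate, against $\varphi$ over $B_\lambda^c$, the surviving pieces of the logarithm $d\log\tau + \log M + d\log\big(1+\tau+\tau^2|w|/\lambda\big) + \log\big(1+(3\sqrt{2\pi}\sigma/\lambda)^d\omega_d^{-1}\big) + |w|^2/2\sigma^2$. The constant pieces integrate to themselves times $\mathbb{P}(|W|>\lambda)$, supplying $\tau^d$ and the norm factor. For the growing term $d\log(1+\tau+\tau^2|w|/\lambda)$, concavity of the logarithm (Jensen against the conditional law of $W$ given $|W|>\lambda$) together with the tail-mean estimate $\mathbb{E}[\,|W|\mid |W|>\lambda\,]\le \lambda + d\sigma$ yields the factor $(1+\tau+\tau^2+\tau^2 d\sigma/\lambda)^d$. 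Finally the term $|w|^2/2\sigma^2$ left over from $-\log\varphi$ integrates to $\frac{1}{2\sigma^2}\mathbb{E}[|W|^2\mathbbm{1}_{|W|>\lambda}]$, and the tail second-moment bound $\mathbb{E}[|W|^2\mathbbm{1}_{|W|>\lambda}]\le (2\lambda^2+2d\sigma^2)\mathbb{P}(|W|>\lambda)$ contributes the dominant $(\lambda/\sigma)^2$ and the additional $+d$ completing $(\tau e)^d$. Collecting these additive constants into a single logarithm gives $J_d(\varphi)$, and restoring the $B_\lambda$ contribution through $e^M$ reproduces $H(X|Y)\le (M-1)\mathbb{P}(|W|\le\tau\lambda) + J_d(\varphi)\,\mathbb{P}(|W|>\lambda)$.

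The main obstacle is securing the two Gaussian tail-moment bounds uniformly in $\lambda$. The second-moment estimate in particular must hold simultaneously in the small-$\lambda$ regime, where $\mathbb{E}[|W|^2\mathbbm{1}_{|W|>\lambda}]/\mathbb{P}(|W|>\lambda)\to \mathbb{E}|W|^2 = d\sigma^2$, and the large-$\lambda$ regime, where the conditional second moment is of order $\lambda^2$; the cleanest route is the identity $\mathbb{E}[|W|^2\mathbbm{1}_{|W|^2>\lambda^2}] = d\sigma^2\,\mathbb{P}(\chi^2_{d+2}>\lambda^2/\sigma^2)$ followed by a comparison of the $\chi^2_{d+2}$ and $\chi^2_d$ tails. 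The case $d=1$ must be treated separately, since there the one-dimensional identity $\mathbb{E}[|W|\mathbbm{1}_{|W|>\lambda}] = \sqrt{2/\pi}\,\sigma\,e^{-\lambda^2/2\sigma^2}$ and the associated Mills-ratio bounds give sharper constants; this accounts for the modified factor $1+\tau+\tau^2+\tau^2\sigma/\lambda^2$ and the larger additive constant in the exponent of $J_1(\varphi)$.
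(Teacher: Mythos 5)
Your proposal is correct and follows essentially the same route as the paper's proof: rerun the argument of Theorem~\ref{thm: Fano Sharpening}, keep the generic $B_\lambda$ estimate $(M-1)\,\mathbb{P}(|W|\le\tau\lambda)+M\,\mathbb{P}(|W|>\lambda)$ (absorbing $M\,\mathbb{P}(|W|>\lambda)$ into the factor $e^{M}$), and on $B_\lambda^c$ replace the lossy Cauchy--Schwarz steps by Corollary~\ref{cor: bounds for the sums} with $\varepsilon=\lambda$, Jensen's inequality against the conditional tail law, and explicit Gaussian tail-moment bounds --- exactly what the paper does via Propositions~\ref{prop: appendix regular norm bound} and~\ref{prop: entropy tail bounds} for $d\ge 2$ and Propositions~\ref{prop: Gaussian density versus tail} and~\ref{prop: Appendix d=1 bounds} for $d=1$. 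The only deviation is cosmetic: you establish the tail first- and second-moment estimates through the $\chi^2_{d+2}$ tail identity and Mills-ratio bounds, whereas the paper derives the same inequalities by direct radial integration by parts together with the elementary lower bound $\int_\lambda^\infty r^{d-1}e^{-r^2/2}\,dr\ge\lambda^{d-2}e^{-\lambda^2/2}$ for $d\ge 2$.
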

The proof of the case $d \geq 2$ is given below, a similar argument in the $d =1$ case is given in the appendix as Proposition \ref{prop: Appendix d=1 bounds}.
\begin{proof}
As in the proof of Theorem \ref{thm: Fano Sharpening},
\begin{align}
    H(X|Y) 
        =& \sum_i p_i \int \varphi(w) \log \left( 1 + \frac{ \sum_{j \neq i} p_j \varphi(T_{ji}(w)) det(T_{ji}'(w)) }{p_i \varphi(w)} \right) dw
            \\
        \leq&
           \int_{B_\lambda^c} \varphi(w) \log \left( 1 + \frac{ \sum_{j} p_j  \sum_{i \neq j} \varphi(T_{ji}(w)) det(T_{ji}'(w)) }{\varphi(w)} \right) dw 
                \\
        &+  \int_{B_\lambda} \varphi(w) \log \left( 1 + \frac{ \sum_{j} p_j  \sum_{i \neq j} \varphi(T_{ji}(w)) det(T_{ji}'(w)) }{\varphi(w)} \right) dw.
\end{align}
We use the general bound from the proof of Theorem \ref{thm: Fano Sharpening} for
\begin{align}
     \int_{B_\lambda} \varphi(w) \log &\left( 1 + \frac{ \sum_{j} p_j  \sum_{i \neq j} \varphi(T_{ji}(w)) det(T_{ji}'(w)) }{\varphi(w)} \right) dw 
        \\
        &\leq 
        (M-1) \mathbb{P}(|W| \leq \lambda \tau) + M \mathbb{P}(|W| > \lambda). \label{eq: Gaussian outside the ball piece}
\end{align}
Splitting the integral,
\begin{align}
\int_{B_\lambda^c} &\varphi(w) \log \left( 1 + \frac{ \sum_j p_j \sum_{i\neq j} \varphi(T_{ji}(w)) det(T_{ji}'(w))}{ \varphi(w)} \right) dw
        \\
        &=
            \int_{B_\lambda^c} \varphi(w) \log \left(   \sum_j p_j \sum_{i} \varphi(T_{ji}(w)) det(T_{ji}'(w)) \right) dw - \int_{B_\lambda^c} \varphi(w) \log \varphi(w) dw.
\end{align}
Using Corollary \ref{cor: bounds for the sums},  Jensen's inequality, and then Proposition \ref{prop: appendix regular norm bound},
\begin{align}
    \int_{B_\lambda^c} \varphi(w) \log &\left(   \sum_j p_j \sum_{i} \varphi(T_{ji}(w)) det(T_{ji}'(w)) \right) dw 
    \\
    &\leq
            \int_{B_\lambda^c} \varphi(w) \log \left[ \tau^d M \left( 1 + \tau + \frac{\tau^2 |w|}{\lambda} \right)^d \left( \|\varphi\|_\infty + \left( \frac 3 {\lambda} \right)^d \omega_d^{-1} \right) \right] 
                \\
    &\leq 
        \mathbb{P}(|W| > \lambda) \log \left[ \tau^d M \left( 1 + \tau + \frac{\tau^2 \int \mathbbm{1}_{\{|w| > \lambda \}}\varphi(w) |w|dw}{\mathbb{P}(|W|> \lambda) \lambda} \right)^d  \left( \|\varphi\|_\infty + \left( \frac{3}{\lambda} \right)^d \omega_d^{-1} \right) \right]
            \\
    &\leq
        \mathbb{P}(|W| > \lambda) \log \left[ \tau^d M \left( 1 + \tau + \frac{\tau^2 (\lambda + \sigma d)}{ \lambda} \right)^d  \left( \|\varphi\|_\infty + \left( \frac{3}{\lambda} \right)^d \omega_d^{-1} \right) \right]. \label{eq: the sum piece of the complement bound}
\end{align}

Then applying Proposition \ref{prop: entropy tail bounds}
\begin{align}
   - \int_{B_\lambda^c}  \varphi(w) \log \varphi(w) dw \leq \left( \frac d 2 \log 2 \pi e^2 \sigma^2 + \lambda^2/\sigma^2 \right) \hspace{.5 mm} \mathbb{P}(|W| > \lambda) \label{eq: entropy tail bound gaussian}
\end{align}

Combining (\ref{eq: Gaussian outside the ball piece}), (\ref{eq: the sum piece of the complement bound}), and  (\ref{eq: entropy tail bound gaussian}) we have
\begin{align}
   H(X|Y) \leq (M-1) \mathbb{P}(|W| \leq \lambda \tau)  + J_d(\varphi) \mathbb{P}(|W| > \lambda)
\end{align}
with 
\begin{align}
    J_d(\varphi) &=    \log \left[ e^{(\lambda/\sigma)^2 + M} (\tau e \sigma \sqrt{2 \pi})^d M  \left( 1 + \tau + \tau^2 + \frac{ \tau^2 \sigma d}{ \lambda} \right)^d  \left( (2 \pi \sigma^2)^{- \frac d 2} + \left( \frac{3}{\lambda } \right)^d \omega_d^{-1} \right) \right]
        \\
        &=
        \log \left[ e^{(\lambda/\sigma)^2 + M} (\tau e  \sqrt{2 \pi})^d M  \left( 1 + \tau + \tau^2 + \frac{ \tau^2 \sigma d}{ \lambda} \right)^d  \left( (2 \pi )^{- \frac d 2} + \left( \frac{3 \sigma}{\lambda } \right)^d \omega_d^{-1} \right) \right].
\end{align}
\end{proof}

For example, when $X$ takes values $\{x_i\} \in \mathbb{R}$ such that $|x_i - x_j| \geq 2 \lambda$ and $Y$ is given by $X+W$ where $W$ is independent Gaussian noise with variance $\sigma^2$, then $Y$ has density $\sum_i p_i f_i(y)$ with $f_i(y) = \varphi_\sigma(y-x_i)$. Let $T_i(y) = y-x_i$.  Thus $T_{ij}(B_\lambda) = B_\lambda + x_j-x_i$, so that $\{ T_{kj}(B_\lambda) \}_k$ are disjoint and we can take $M=1$ and $\tau =1$. Applying Proposition \ref{prop: Gaussian example for upper bounds}, we have
\begin{align}
    H(X|Y) = H(X| X+W)\leq J_1(\varphi) \mathbb{P}(|W|> \lambda) = J_1(\varphi) \mathbb{P}(|\mathcal{Z}| > \lambda/\sigma)
\end{align}
with
\begin{align}
     J_1(\varphi) =     \log \left[ e^{(\lambda/\sigma)^2 + 3}    \left(3 +  \frac{ \sigma }{ \lambda^2} \right)  \left( 1 +   \sqrt{\frac{9 \pi }{2}} \frac{\sigma}{ \lambda }  \right) \right].
\end{align}
Notice that for $t>0$, $tX$ and $tX+tW$ satisfy the same conditions with $\tilde{\lambda} =  t \lambda$ and $\tilde{\sigma} = t \sigma$, and since $H(tX | tX + tW) = H(X|X+W)$, after applying the result to $tX, tX + tW$ we have
\begin{align}
    H(X|X+ W) 
        &=
           H(tX |tX + tW)
                \\
        &\leq
            \log \left[ e^{(\lambda/\sigma)^2 + 3}    \left(3 +  \frac{ \sigma }{ t \lambda^2} \right)  \left( 1 +   \sqrt{\frac{9\pi}{2}} \frac{\sigma}{ \lambda }  \right) \right] \mathbb{P}(|\mathcal{Z}| > \lambda/\sigma).
\end{align}
Taking the limit with $t \to \infty$, we obtain,
\begin{align}
    H(X|X+W)
        \leq 
           \log \left[3 e^{(\lambda/\sigma)^2 + 3}     \left( 1 +   \sqrt{\frac{9\pi}{2}} \frac{\sigma}{ \lambda }  \right) \right] \mathbb{P}(|\mathcal{Z}| > \lambda/\sigma).
\end{align}
We collect these observations in the following Corollary.

\begin{coro}\label{cor: AGWN in 1 d bounds}
    When $X$ is a discrete $\mathbb{R}$ valued random variable taking values $\{x_i\}$ such that $|x_i - x_j| \geq 2 \lambda$ for $i \neq j$ and $W$ is an independent Gaussian variable with variance $\sigma^2$, then
        \begin{align}
            H(X|X+W)  \leq \log \left[3 e^{(\lambda/\sigma)^2 + 3}     \left( 1 +   \sqrt{\frac{9\pi}{2}} \frac{\sigma}{ \lambda }  \right) \right] \mathbb{P} \left(|\mathcal{Z}| > \lambda/\sigma \right).
        \end{align}
\end{coro}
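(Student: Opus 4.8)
The plan is to recognize the hypotheses as the pure-translation specialization of Proposition~\ref{prop: Gaussian example for upper bounds} and then to sharpen the resulting constant by a scaling limit. First I would write the conditional density of $X+W$ given $X=x_i$ as $f_i(y)=\varphi_\sigma(y-x_i)$, realizing each component as a transport of $W$ through the translation $T_i(y)=y-x_i$, which is $1$-bi-Lipschitz so that $\tau=1$. Then $T_{ij}(w)=w+x_i-x_j$, so each $T_{kj}(B_\lambda)$ is the translate $B_\lambda+(x_k-x_j)$; two such translates intersect only if $|x_k-x_{k'}|<2\lambda$, which the separation hypothesis $|x_i-x_j|\geq 2\lambda$ forbids for $k\neq k'$. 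Hence $\#\{k:T_{kj}(B_\lambda)\cap T_{ij}(B_\lambda)\neq\emptyset\}=1$ and we may take $M=1$.

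With $M=\tau=1$ the term $(M-1)\mathbb{P}(|W|\leq\tau\lambda)$ in Proposition~\ref{prop: Gaussian example for upper bounds} vanishes, and the constant $J_1(\varphi)$ collapses to $\log[e^{(\lambda/\sigma)^2+3}(3+\sigma/\lambda^2)(1+\sqrt{9\pi/2}\,\sigma/\lambda)]$. Writing $W=\sigma\mathcal{Z}$ with $\mathcal{Z}$ standard normal converts $\mathbb{P}(|W|>\lambda)$ into $\mathbb{P}(|\mathcal{Z}|>\lambda/\sigma)$, so at this stage I would have
\[
    H(X\mid X+W)\leq\log\left[e^{(\lambda/\sigma)^2+3}\left(3+\frac{\sigma}{\lambda^2}\right)\left(1+\sqrt{\tfrac{9\pi}{2}}\,\frac{\sigma}{\lambda}\right)\right]\mathbb{P}(|\mathcal{Z}|>\lambda/\sigma).
\]
This already has the correct shape; the only discrepancy from the claimed bound is the additive $\sigma/\lambda^2$ inside the bracket, which is not scale invariant and so should not survive in a sharp statement.

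The decisive step is to eliminate that term by rescaling. For any $t>0$ the pair $(tX,\,tX+tW)$ satisfies the same hypotheses with separation $\tilde\lambda=t\lambda$ and noise deviation $\tilde\sigma=t\sigma$; crucially the ratio $\tilde\lambda/\tilde\sigma=\lambda/\sigma$ is scale invariant, so $e^{(\tilde\lambda/\tilde\sigma)^2+3}$, the factor $1+\sqrt{9\pi/2}\,\tilde\sigma/\tilde\lambda$, and the tail $\mathbb{P}(|\mathcal{Z}|>\tilde\lambda/\tilde\sigma)$ are all unchanged, while the offending contribution becomes $\tilde\sigma/\tilde\lambda^2=\sigma/(t\lambda^2)\to 0$. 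Since multiplication by the nonzero constant $t$ is a bijection of both the discrete alphabet and of $\mathbb{R}$, conditional entropy is preserved, $H(tX\mid tX+tW)=H(X\mid X+W)$, so applying the displayed estimate to the rescaled pair and letting $t\to\infty$ replaces $3+\sigma/\lambda^2$ by $3$ and yields exactly the claimed bound.

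The only genuine subtlety, and the step I would be most careful to justify, is the invariance $H(tX\mid tX+tW)=H(X\mid X+W)$: one must check that rescaling the continuous observation does not alter the conditional law of the discrete label, which follows because $y\mapsto ty$ is a measurable bijection and the conditional density $p(x\mid z)$ of Proposition~\ref{prop: joint distribution of X and Z} transforms covariantly under it. Everything else is routine specialization of the constants together with a monotone limit.
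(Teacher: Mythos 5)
Your proposal is correct and follows essentially the same route as the paper: specializing Proposition~\ref{prop: Gaussian example for upper bounds} to translations $T_i(y)=y-x_i$ (so $\tau=M=1$), obtaining the intermediate bound with the extra $\sigma/\lambda^2$ term, and then removing it via the scaling $(X,W)\mapsto(tX,tW)$ with $t\to\infty$ using $H(tX\mid tX+tW)=H(X\mid X+W)$. Your explicit justification of that conditional-entropy invariance via the measurable bijection $y\mapsto ty$ is a small addition the paper leaves implicit, but the argument is otherwise identical.
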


\subsection{Fano's inequality}
A multiple hypothesis testing problem is described in the following, with $X$ an index $i \in \mathcal{X}$ is drawn and then samples are drawn from the distribution $f_i$, with a goal of determining the value $i$.   If $Z$ denotes  a random variable with  $P(Z \in A |X = i) = \int_A f_i(z) dz$, then by the commutativity of mutual information proven in Proposition \ref{prop: mutual information equality}, $H(X | Z) =  h(Z|X) - h(Z) + H(X)$.  Thus bounds on the mixture distribution  are equivalent to bounds $H(X|Z)$.  For $\hat X = g(Z)$, Fano's inequality provides  the following bound
\begin{align} \label{eq: Fano's inequality}
H(X|Z) \leq H(e) + \mathbb{P}(e)\log (\#\mathcal{X}-1)
\end{align} 
where $e = \{\hat{X} \neq X \}$ is the occurrence of an error.  Fano and Fano-like inequalities are important in multiple hypothesis testing, as they can be leveraged to deliver bounds on the  Bayes risk (and hence min/max risk); we direct the reader to \cite{Gun11,birge2005new, YB99} for more background.  Fano's inequality gives a lower bound on the entropy of a mixture distribution, that can also give a non-trivial improvement on the concavity of entropy through the equality $H(X|Z) = H(X) + h(Z|X) - h(Z)$.  Combined with \eqref{eq: Fano's inequality},
\begin{align} \label{eq: theorem 1.1 stated for our corollary}
    h(\sum_i p_i f_i) - \sum_i p_i h(f_i) \geq H(p) - \left( H(e) + \mathbb{P}(e) \log ( |\mathcal{X}| - 1) \right).
\end{align}

In concert with with Theorem \ref{thm: discrete continuous entropy inequality sharpening} we have the following corollary.
\begin{coro}
 For $X$ distributed on indices $i \in \mathcal{X}$, and $Z$ such that $Z|\{X=i\}$ is distributed according to $f_i$, then given an estimator $\tilde X = f(Z)$, with $e = \{ X \neq \tilde{X} \}$
    \begin{equation}
       (1 - \mathcal{T}_f) H(X) \leq H(e) + \mathbb{P}(e) \log( |\mathcal{X}| - 1).
    \end{equation}
\end{coro}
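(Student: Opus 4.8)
The plan is to combine Theorem~\ref{thm: discrete continuous entropy inequality sharpening} with Fano's inequality \eqref{eq: Fano's inequality} through the mutual information identity of Proposition~\ref{prop: mutual information equality}. The entire argument is a two-line chaining of results already established, so no new technical machinery is needed; the only ``obstacle'' is correctly bookkeeping the concavity deficit in its two equivalent forms.

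First I would recall from Proposition~\ref{prop: mutual information equality} that the concavity deficit equals the mutual information in both its continuous and discrete representations:
\begin{align}
    h(f) - \sum_i p_i h(f_i) = I(X;Z) = H(X) - H(X|Z).
\end{align}
Next I would invoke Theorem~\ref{thm: discrete continuous entropy inequality sharpening}, which bounds this same deficit from above by $\mathcal{T}_f H(p) = \mathcal{T}_f H(X)$. Substituting the identity above yields
\begin{align}
    H(X) - H(X|Z) \leq \mathcal{T}_f H(X),
\end{align}
and rearranging gives the lower bound on the conditional entropy,
\begin{align} \label{eq: cor fano lower bound}
    (1 - \mathcal{T}_f) H(X) \leq H(X|Z).
\end{align}

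Finally I would apply Fano's inequality \eqref{eq: Fano's inequality} to the estimator $\tilde{X} = f(Z)$ with error event $e = \{X \neq \tilde{X}\}$, which controls the right-hand side of \eqref{eq: cor fano lower bound} from above:
\begin{align}
    H(X|Z) \leq H(e) + \mathbb{P}(e) \log(|\mathcal{X}| - 1).
\end{align}
Concatenating the last two displays produces the claimed inequality $(1 - \mathcal{T}_f) H(X) \leq H(e) + \mathbb{P}(e) \log(|\mathcal{X}| - 1)$. The only point requiring care is ensuring that the Markov structure $X \to Z \to \tilde{X}$ implicit in $\tilde{X} = f(Z)$ justifies the use of \eqref{eq: Fano's inequality}, and that $H(p)$ and $H(X)$ refer to the same quantity; both are immediate from the setup in Section~\ref{sec: Notation}.
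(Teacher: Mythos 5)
Your proposal is correct and follows essentially the same route as the paper's own proof: lower-bound $H(X|Z)$ by $(1-\mathcal{T}_f)H(X)$ via the identity of Proposition~\ref{prop: mutual information equality} combined with Theorem~\ref{thm: discrete continuous entropy inequality sharpening}, then upper-bound $H(X|Z)$ by Fano's inequality \eqref{eq: Fano's inequality}. Your added remark on checking the Markov structure $X \to Z \to \tilde{X}$ is a sound (if implicit in the paper) point of care.
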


\begin{proof}
By Fano's inequality $H(e) + \mathbb{P}(e) \log(N-1) \geq H(X|Z)$.  Recalling that $H(X|Z) = H(X) - (h(\sum_i p_i f_i) - \sum_i p_i h(f_i))$ and by Theorem \ref{thm: discrete continuous entropy inequality sharpening}
\begin{align} 
    H(X) - (h(\sum_i p_i f_i) - \sum_i p_i h(f_i))
        &\geq 
            H(X) - \mathcal{T}_f H(X),
\end{align}
gives our result.
\end{proof}
Heuristically, this demonstrates that ``good estimators'' are only possible for hypothesis distributions discernible in total variation distance. 
For example in the simplest case of binary hypothesis testing where  $n=2$, the inequality is
$ (1- \|f_1 - f_2\|_{TV}) H(X) \leq H(e)$, demonstrating that existence of an estimator improving on the trivial lower bound $H(X)$, is limited explicitly by the total variation distance of the two densities.\\

We note that the pursuit of good estimators $\hat{X}$ is a non-trivial problem in most interesting cases, so much so that Fano's inequality is often used to provide a lower bound on the potential performance of a general estimator by the ostensibly simpler quantity $H(X|Z)$, as determining an optimal value for $\mathbb{P}(e)$ is often intractable.  A virtue of Theorem \ref{thm: Fano Sharpening} is that it provides upper bounds on $H(X|Z)$, in terms of tail bounds of a single log-concave variable $|W|$.  Thus, Theorem \ref{thm: Fano Sharpening} asserts that for a large class of models, $H(X|Z)$ can be controlled by a single easily computable quantity, which in the case that $M=1$, decays sub-exponentially in $\lambda$ to $0$.  However, the example delineated below, demonstrates that even in simple cases where an optimal estimator of $X$ admits explicit computation, the bounds of Theorem \ref{thm: Fano Sharpening} may outperform the best possible bounds based on Fano's inequality.

Suppose that $X$ is uniformly distributed on $\{1,2, \dots, N\}$ and that $W$ is an independent, symmetric log-concave variable with density $\varphi$, and $Z = X+W$, then $Z$ has density $f(z) = \displaystyle \sum_{i=1}^N \frac{f_i(z)}{N}$ with $f_i(z) = \varphi(z -i)$.  The optimal (Bayes) estimator of $X$ is given by $\Theta(z) = argmax_{i} \{f_i(z) : i \in \{1,2,\dots,N\}\}$, which by the assumption of symmetric log-concavity can be expressed explicitly as:
\begin{align}
    \Theta(z) = \mathbbm{1}_{(-\infty, \frac 3 2)} + \sum_{i=2}^{N-1} i \mathbbm{1}_{( i - \frac 1 2, i + \frac 1 2)} + N \mathbbm{1}_{(N-\frac 1 2, \infty)}.
\end{align}
Thus, $\mathbb{P}(\Theta \neq X)$ can be written explicitly as well.  Indeed,
\begin{align}
    \mathbb{P}( X = \Theta(Z))
        &=
            \sum_i \mathbb{P}(X = i, i = \Theta(i + W))
                \\
        &=
            \frac 1 N \mathbb{P} \left(W \leq \frac 1 2 \right) + \frac 1 N \mathbb{P}\left(W \geq - \frac 1 2 \right)+ \sum_{i=2}^{N-1} \frac{ \mathbb{P}\left( W \in (- \frac 1 2, \frac 1 2) \right)}{N} 
                \\
        &=
            \mathbb{P}\left( W \in ( - \frac 1 2, \frac 1 2) \right) + \frac 2 N \mathbb{P}( W \geq \frac 1 2 ).
\end{align}
Thus writing $P(e) = \mathbb{P}( X \neq \Theta)$, we have
\begin{align}
    P(e) &= 2 \mathbb{P}( W \geq \frac 1 2) \left( 1 - \frac 1 N \right)
        \\
        &=
            \mathbb{P}( |W| \geq 1/2) \left( 1 - \frac 1 N \right)
                \\
        &=
            \mathbb{P}( |\mathcal{Z}| \geq 1/2 \sigma) \left( 1 - \frac 1 N \right),
\end{align}
where $\mathcal{Z}$ is a standard normal variable.
Thus the optimal bounds achievable through Fano's inequality are described by,
\begin{align}
    H(X|Z) 
        \leq
            H(e) + P(e) \log(N-1)
\end{align}
with $P(e) = \mathbb{P}( |\mathcal{Z}| \geq 1/2 \sigma) \left( 1 - \frac 1 N \right)$.  Note that with $N \to \infty$, the bounds attainable through Fano's inequality become meaningless since $\lim_{N \to \infty} H(e) + P(e) \log(N-1) = \infty$ independent of $\sigma$.  
In contrast, since $Z = X+W$, Corollary \ref{cor: AGWN in 1 d bounds} gives the following bound:
\begin{align} \label{eq: 1d Gaussian bound with fano}
    H(X|Z) \leq   \log \left[3 e^{(\lambda/\sigma)^2 + 3}     \left( 1 +   \sqrt{\frac{9\pi}{2}} \frac{\sigma}{ \lambda }  \right) \right] \mathbb{P} \left(|\mathcal{Z}| > \lambda/\sigma \right),
\end{align}
independent of $N$.

\subsection{Channel Capacity}
In the case of a channel that admits discrete inputs (and possibly continuous inputs as well) with output density $f_i(z) = p(z|i)$ when conditioned on an input $i$. Suppose the  input $X$ takes value $i$ with probability $p_i$ then the output $Z$ distribution will have a density function $\sum_i p_i f_i. $ Thus 
\begin{align}
    I(Z;X) 
        &= 
            h(Z) - h(Z|X)
            \\
        &=
            H(X) - H(X|Z)
                \\
        &=
            h(\sum_i p_i f_i) - \sum_i p_i h(f_i)
                .
\end{align}
Thus, any choice of input $X$ gives a lower bound on the capacity of the channel.  In the context of additive white Gaussian noise channel \cite{ozarow1990capacity} gave rigorous bounds to the findings of \cite{ungerboeck1982channel}, that finite input can nearly achieve capacity. \\

\begin{thm}[Ozarow-Wyner \cite{ozarow1990capacity}] \label{thm: OzaWy}
   Suppose $X$ is uniformly distributed on $N$ evenly spaced points, $\{2 \lambda, 4 \lambda, \dots, 2 N \lambda \}$ and  its variance $\mathbb{E}X^2 - \mathbb{E}^2X = \lambda^2 \left( \frac{N^2 -1}{3} \right)$. If $Z = X+W$, where $W$ is Gaussian with variance one   and independent of $X$. Then 
    \begin{enumerate}
        \item 
    \begin{align} \label{eq: OzWy a}
        I(Z;X) \geq ( 1- (\pi K)^{-1/2} e^{-K}) H(X) - h( (\pi K)^{-1/2} e^{-K})
    \end{align}
    where
    \begin{align} 
        K 
            &= 
                \frac{ 3} { 2 \alpha^2} ( 1- 2^{-2 C})
                    \\
        \alpha 
            &=
                N 2^{-C}
                    \\
        C 
            &=
            \frac 1 2 \log \left( 1+ \lambda^2 \frac{N^2-1}{3} \right).
    \end{align}
    \item \label{eq: Ozarow Wyner bound that is generalized}
    \begin{align}
        I(Z;X)
            \geq
                C - \frac 1 2 \log \frac {\pi e} 6 - \frac 1 2 \log \frac{1 + \alpha^2}{\alpha^2}.
    \end{align}
    \end{enumerate}
\end{thm}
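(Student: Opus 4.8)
The two bounds rest on the two expressions for mutual information from Proposition~\ref{prop: mutual information equality}, namely $I(Z;X) = H(X) - H(X|Z) = h(Z) - h(Z|X)$, together with the fact that here $X$ is uniform so $H(X) = \log N$, while $Z\mid X=x$ is Gaussian of unit variance so $h(Z|X) = h(W) = \tfrac12\log(2\pi e)$. I would prove part \eqref{eq: OzWy a} as an instance of Fano's inequality \eqref{eq: Fano's inequality}, and part \eqref{eq: Ozarow Wyner bound that is generalized} as a capacity-gap estimate obtained by lower bounding the output entropy $h(Z)$ through dithering and the entropy power inequality.

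For part \eqref{eq: OzWy a} I would start from $I(Z;X) = H(X) - H(X|Z)$ and feed the nearest-neighbour (maximum likelihood) decoder $\hat X = \Theta(Z)$ into Fano's inequality, giving $H(X|Z)\le H(e) + \mathbb{P}(e)\log(N-1) \le H(e) + \mathbb{P}(e)H(X)$. By the symmetry of the equispaced constellation the error event at an interior point is exactly $\{|W|>\lambda\}$ and at the two endpoints a one-sided version, so $\mathbb{P}(e) = \mathbb{P}(|W|>\lambda)(1-\tfrac1N)$. The map $t\mapsto (1-t)H(X) - H(t)$ is decreasing on $(0,\tfrac12)$, so it suffices to replace $\mathbb{P}(e)$ by any larger quantity; I would then verify, using the standard tail bound $\mathbb{P}(\mathcal{Z}>\lambda)\le (\lambda\sqrt{2\pi})^{-1}e^{-\lambda^2/2}$ and the simplification $K = \lambda^2(N^2-1)/(2N^2)$ of the stated constants, that $\mathbb{P}(e)\le (\pi K)^{-1/2}e^{-K}$. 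Substituting this into $I(Z;X)\ge (1-\mathbb{P}(e))H(X) - H(\mathbb{P}(e))$ yields \eqref{eq: OzWy a}.

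For part \eqref{eq: Ozarow Wyner bound that is generalized} I would write $I(Z;X) = h(Z) - \tfrac12\log(2\pi e)$ and lower bound $h(Z)$. Introduce an independent dither $U\sim \mathrm{Unif}[-\lambda,\lambda]$ and set $X' = X+U$, which is uniform on an interval of length $2N\lambda$ (hence $h(X') = \log(2N\lambda)$ and $2^{2h(X')} = 12\,\Var(X')$), and $\tilde Z = X'+W = Z+U$. Since $X\to Z\to \tilde Z$ is Markov, the data-processing inequality gives $I(Z;X)\ge I(\tilde Z;X)$, and since $(X,U)\mapsto X'$ is a bijection on the support, the chain rule gives $I(\tilde Z;X) = I(\tilde Z;X') - I(\tilde Z;U\mid X)$. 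I would lower bound $I(\tilde Z;X') = h(\tilde Z) - \tfrac12\log(2\pi e)$ by the entropy power inequality, $2^{2h(\tilde Z)}\ge 2^{2h(X')} + 2^{2h(W)}$, and upper bound the dither penalty $I(\tilde Z;U\mid X) = h(U+W) - \tfrac12\log(2\pi e)$ by the maximum-entropy bound $h(U+W)\le \tfrac12\log\!\big(2\pi e\,\Var(U+W)\big)$. Assembling these and re-expressing $\Var(X')$ and $\Var(U+W)$ in terms of $C$ and $\alpha = N2^{-C}$ produces the capacity gap $C - \tfrac12\log\tfrac{\pi e}{6} - \tfrac12\log\tfrac{1+\alpha^2}{\alpha^2}$, the shaping term $\tfrac12\log\tfrac{\pi e}{6}$ being precisely the entropy deficit of a uniform law relative to the Gaussian of equal variance.

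The routine part is \eqref{eq: OzWy a}; its only real content is the maximum-likelihood error probability and a Gaussian tail estimate. The main obstacle is the exact constant bookkeeping in \eqref{eq: Ozarow Wyner bound that is generalized}: the entropy power inequality and the maximum-entropy bound are each asymptotically tight but must be combined so that the finite-$N$ corrections collapse exactly into the single factor $\tfrac{1+\alpha^2}{\alpha^2}$ (which tends to $1$ as the constellation grows) rather than into a looser additive constant. Controlling the dither penalty $I(\tilde Z;U\mid X) = h(U+W)-h(W)$ tightly---equivalently, showing that the uniform dither costs no more than the shaping gain it buys---is where the delicate estimation lies.
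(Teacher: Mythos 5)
A preliminary remark: the paper offers no proof of Theorem~\ref{thm: OzaWy} --- it is quoted from Ozarow and Wyner \cite{ozarow1990capacity} purely as a benchmark against which Corollary~\ref{cor: AGWN in 1 d bounds} and the channel-capacity discussion are measured --- so your proposal can only be judged on its own merits, not against an in-paper argument. On those merits your outline is essentially the classical one and is sound. For part~(1) the bookkeeping checks out: the maximum-likelihood error probability is indeed $\mathbb{P}(e) = \mathbb{P}(|W|>\lambda)(1-\tfrac{1}{N})$, and writing $\beta = 1-\tfrac{1}{N^2}$, so that $K = \tfrac{\lambda^2}{2}\beta$ and $(\pi K)^{-1/2}e^{-K} = \tfrac{2}{\lambda\sqrt{2\pi}}\,\beta^{-1/2}e^{-\lambda^2\beta/2}$, your Gaussian tail bound gives $\mathbb{P}(e) \le \tfrac{2(1-1/N)}{\lambda\sqrt{2\pi}}e^{-\lambda^2/2}$, whose ratio to $(\pi K)^{-1/2}e^{-K}$ is $(1-\tfrac1N)\sqrt{\beta}\,e^{-\lambda^2(1-\beta)/2} \le 1$, as needed. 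One caveat: the map $t\mapsto (1-t)H(X)-H(t)$ is decreasing on $(0, \tfrac{N}{N+1})$, not merely on $(0,\tfrac12)$ as you state, and when $(\pi K)^{-1/2}e^{-K}$ exceeds $\tfrac{N}{N+1}$ the monotone substitution is invalid --- but in that regime the right-hand side of \eqref{eq: OzWy a} is negative (or the binary entropy is undefined), so the claim is vacuous and a one-line trivial-case remark closes this.

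Part~(2) contains the one genuine misstatement, though it is fixable: the constants do \emph{not} ``collapse exactly'' into the stated gap. Your chain (dither, bijectivity of $(X,U)\mapsto X'$ on the support, EPI, and the maximum-entropy bound) yields
\[
    I(Z;X) \;\ge\; h(\tilde Z) - h(U+W) \;\ge\; \frac 1 2 \log \frac{4N^2\lambda^2 + 2\pi e}{2 \pi e \left(1 + \lambda^2/3\right)},
\]
where $\tilde Z = Z + U$, and this is not equal to the stated bound: at $N=2$, $\lambda=1$ it gives roughly $0.19$ nats versus roughly $-0.03$ nats for the theorem. What saves the argument is that your bound is strictly \emph{stronger}: with $P = \lambda^2(N^2-1)/3$ the stated bound equals $\tfrac12\log\left[\tfrac{6}{\pi e}\cdot\tfrac{N^2(1+P)}{1+P+N^2}\right]$, and after cross-multiplication the comparison reduces to $6N^2\lambda^2 + \pi e\,\lambda^2(N^2-1) + 3\pi e + 3(\pi e - 6)N^2 > 0$, which holds because $\pi e > 6$. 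So you must append this explicit dominance verification; as written, a reader executing your recipe would find the finite-$N$ corrections fail to match the factor $\tfrac{1+\alpha^2}{\alpha^2}$ and might wrongly conclude the proof broke. Note also that the EPI step is essential to the dominance: the cruder estimate $h(\tilde Z)\ge h(X')$ beats the stated gap only when $\lambda^2 \ge 3$.
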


In the notation of this paper $K = \frac{\lambda^2}{2} \left( 1- \frac{1}{N^2} \right)$.  Defining,
\begin{align}
    p_o \coloneqq \frac{e^{- \frac{\lambda^2}{2} \left( 1- \frac{1}{N^2} \right)}}{\sqrt{ \frac{\pi \lambda^2}{2} \left( 1- \frac{1}{N^2} \right)}},
\end{align}
we can re-write \eqref{eq: OzWy a} as
\begin{align} \label{eq: simple ozwy a}
   I(X|Z) \geq H(X) -\left(p_o H(X) + h(p_o) \right).
\end{align}
Note that $N 2^{-C} = \frac{N}{\sqrt{1 + \sigma^2}} = \sqrt{ \frac{ 1 + 3 (\sigma/\lambda)^2}{1 + \sigma^2}}$, so that $\alpha \approx \sqrt{3}/ \lambda$ for $\sigma$ large.  Thus \eqref{eq: OzWy a} gives a bound with sub-Gaussian-like convergence in $\lambda$ to $H(X)$ for fixed $N$, but gives worse than trivial bounds for fixed $\lambda$ and $N \to \infty$.  In contrast, by Corollary~\ref{cor: AGWN in 1 d bounds} gives
\begin{align} \label{eq: capacity bound from our theorem to compare to ozawyy}
    I(X|Z)
        \geq
            H(X) - \log \left[3 e^{\lambda^2 + 3}     \left( 1 +   \sqrt{\frac{9\pi}{2}} \frac{\sigma}{ \lambda }  \right) \right] \mathbb{P} \left(|\mathcal{Z}| > \lambda \right).
\end{align}
Comparing the bound on the gap between $I(X|Z)$ and $H(X)$ provided by \eqref{eq: simple ozwy a} and Corollary \ref{cor: AGWN in 1 d bounds}, we see that Corollary \ref{cor: AGWN in 1 d bounds} outperforms Theorem \ref{thm: OzaWy} for large $\lambda$.  Indeed, one can easily find an explicit rational function $q$ such that
\begin{align}
    \frac {
    p_o H(X) + h(p_o)
    }{
    \log \left[3 e^{\lambda^2 + 3}     \left( 1 +   \sqrt{\frac{9\pi}{2}} \frac{\sigma}{ \lambda }  \right) \right] \mathbb{P} \left(|\mathcal{Z}| > \lambda \right) }
        \geq q(\lambda) e^{\frac{\lambda^2}{2N}}.
\end{align}
Additionally, \eqref{eq: capacity bound from our theorem to compare to ozawyy} gives a universal bound, independent of $N.$

These results have been of recent interest, see for example \cite{dytso2016interference, dytso2017generalized}, where the results improving and generalizing \eqref{eq: Ozarow Wyner bound that is generalized} have been studied in a form
\begin{align}
    H(X) - gap^* \leq I(Z;X) \leq H(X),
\end{align}
with an emphasis on achieving $gap^*$ bounds that are independent of $N$, and viable for more general noise models.
The significance of the results of Theorem \ref{thm: Fano Sharpening} in this context is that the $gap^*$ bounds provided converge exponentially fast to zero in $\lambda$, independent of $H(X)$, while for example in  \cite{dytso2016interference}, the $gap^*$ satisfies
\begin{align}
    gap^* \geq \frac 1 2 \log \frac{2 \pi e}{12}.
\end{align}
Additionally, the tools developed can be extended to perturbations of the $Y$ and signal dependent noise through Theorem \ref{thm: Fano Sharpening}.

A related investigation of recent interest is the relationship between finite input approximations of capacity achieving distributions, particularly the number of ``constellations'' needed to approach capacity.  For example \cite{wu2010functional, wu2010impact} the rate of convergence in $n$ of the capacity of an $n$ input power constrained additive white Gaussian noise channel to the usual additive white noise Gaussian channel is obtained.  In many practical situations, although a Gaussian input is capacity achieving, discrete inputs are used.  We direct the reader to \cite{wu2018survey} for background on the role this practice plays in Multiple Input- Multiple Output channels pivotal in the development of 5G technology.

Additionally in the amplitude constrained discrete time additive white Gaussian noise channel, the capacity achieving distribution is itself discrete \cite{smith1971information}.  In fact, many important channels achieve capacity for discrete distributions, see for example \cite{chan2005capacity,varshney2008transporting,abou2001capacity,shamai1995capacity, HGKMM18:isit}.   Thus in the case that the noise model is independent of input, the capacity achieving output will be a mixture distribution, and the capacity of the channel is  given by calculating the entropy of said mixture. 

Theorem \ref{thm: Fano Sharpening} shows that for sparse input, relative to the strength of the noise, the mutual information of the input and output distributions is sub-exponentially close to the entropy of the the input in the case of log-concave noise, and sub-Gaussian from the the entropy of the input in the case of strongly log-concave noise,  which includes Gaussian noise as a special case.  In contrast Theorem \ref{thm: discrete continuous entropy inequality sharpening} gives a reverse inequality, demonstrating that when the mixture distributions are close to one another in the sense that their total variation distance from the mixture ``with themselves removed'' is small, then the mutual information is quantifiably lessened.

\subsection{Energetics of Non-equillibrium thermodynamics}
For a process $x_t$ that satisfies an overdamped Langevin stochastic differential equation, $
d x_t = -  \frac{\nabla U(x_t,t)}{\gamma}dt + \sqrt{2D}d \zeta_t.
$ with time varying potential $U: \mathbb{R}^d \times \mathbb{R} \to \mathbb{R}$ and $\zeta_t$ a Brownian motion with $D = k_B T/\gamma$ where $\gamma$ is the viscosity constant, $k_B$ is Boltzman's constant and $T$ is temperature, one can define natural thermodynamic quantities, in particular trajectory dependent notions of work done $\mathcal{W}$ on the system (see \cite{jarzynski1997equilibrium}) and heat dissipated $\mathcal{Q}$, respectively,
\begin{align}
    \mathcal{W} \coloneqq \int_0^{t_f} \partial_t U(x_t,t) dt 
\end{align}
and
\begin{align}
   \mathcal{Q} \coloneqq -\int_0^{t_f} \nabla_x U(x_t,t) \circ dx_t,
\end{align}
where the above is a Stratonovich stochastic integral.  Recall that Stratonovich integrals satisfy a chain rule $d U(x_t,t) = \nabla_x U (x_t, t) \circ dx_t + \frac{\partial U}{\partial t} (x_t,t) dt$ so that we immediately have a first law of thermodynamics  
\begin{align}
    \Delta U 
        &\coloneqq 
            U(t_f, x(t_f)) - U(0,x(0)) 
                \\
        &= 
            \int_0^{t_f} \partial_t U(x_t,t) dt + \int_0^{t_f} \nabla_x U(x_t,t) \circ dx_t
                \\
        &=
            \mathcal{W} - \mathcal{Q}.
\end{align}  Further, if $\rho_{t}$ denotes the distribution of $x_t$ at time $t$, satisfying the Fokker-Planck equation then it can be shown  \cite{aurell2012refined} (see also \cite{chetrite2008fluctuation,maes2008steady,seifert2005entropy}),
\begin{align}
\mathbb{E} \mathcal{Q} =  k_B T \left( h(\rho_0) - h(\rho_{t_f}) \right) + \int_0^{t_f} \mathbb{E} |v(t,x_t)|^2   dt,
\end{align}
where $v$ is mean local velocity (see \cite{aurell2012refined} or as the current velocity in \cite{nelson1967dynamical}).                                
In the quasistatic limit where the non-negative term $\int_0^{t_f} \langle |v(t,x_t)|^2 \rangle  dt$ goes to $0$, one has a fundamental lower bound on the efficiency of a process's evolution, the average heat dissipated in a transfer from configuration $\rho_0$ to $\rho_{t_f}$ is bounded below by the change in entropy.
\begin{align} \label{eq: second law}
    \mathbb{E} \mathcal{Q} = k_B T \left( h(\rho_0) - h(\rho_{t_f}) \right).
\end{align}
A celebrated example of this inequality is Landauer's principle \cite{landauer1961irreversibility}, which proposes fundamental thermodynamic limits to the efficiency of a computer utilizing logically irreversable computations (see also \cite{bennett1973logical}).  More explicitly \eqref{eq: second law} suggests that the average heat dissipated in the erasure of a bit, that is, the act of transforming a random bit to a deterministic $0$ is at least $k_B T \log 2$.  This can be reasoned to in the above, by presuming the entropy of a random bit should satisfy $h(\rho_0) = \log 2$ and that the reset bit should satisfy $h(\rho_{t_f}) = 0$.

In the context of nanoscale investigations, (like protein pulling or the intracellular transport of cargo by molecular motors) it is often the case that phenomena take one of finitely many configurations with an empirically derived probability.  However at this scale, thermal fluctuations can make discrete modeling of the phenomena unreasonable, and hence the distributions $\rho_0$ and $\rho_{t_f}$ in such problems are more accurately modeled as a discrete distribution disrupted by thermal noise, and are thus, mixture distributions.  Consequently bounds on the entropy of mixture distributions translate directly to bounds on the energetics of nanoscale phenomena \cite{talukdar2018analysis,melbourne2018realizing}.  For example, in the context of Landauer's bound, the distribution of the position of a physical bit is typically modeled by a Gaussian bistable well, explicitly by the density
\begin{align} \label{eq: Landauer bit set up}
    f_p(z) = p e^{-(x-a)^2/2\sigma^2}/\sqrt{2\pi \sigma^2} + (1-p) e^{-(x+a)^2/2\sigma^2}/\sqrt{2\pi \sigma^2}.
\end{align}
The variable $p$ connotes the probability that the bit takes the value $1$, and $(1-p)$ the probability the bit takes the value $0$.  This can be modeled by $X_p$ a Bernoulli variable taking values $\pm a$ and $Z_p = X_p + \sigma \mathcal{Z}$ where $\mathcal{Z}$ is a standard normal, so that $Z_p$ has distribution $f_p$.
\begin{coro}
    The average heat dissipated $\mathcal{Q}_0$ in an optimal erasure protocol, resetting a random bit to zero in the framework of \eqref{eq: Landauer bit set up} can be bounded above and below,
    \begin{align} \label{eq: Landauer upper and lower bounds}
     \tilde{C}_L \hspace{1mm} \mathbb{P}( |\mathcal{Z} | > a /\sigma) \leq \mathbb{E} \mathcal{Q}_0  - k_B T \log 2 \leq  \tilde{C}_U \hspace{1mm} \mathbb{P}( |\mathcal{Z} | > a /\sigma)
    \end{align}
    where 
    $
        \tilde{C}_L 
            = 
                - k_B T \left( \log \left[3 e^{(a/\sigma)^2 + 3}     \left( 1 +   \sqrt{\frac{9\pi}{2}} \frac{\sigma}{ a }  \right) \right]  \right)
    $ and
    $
        \tilde{C}_U
            =
                k_B T \left( \log \left[\frac 3 2 e^{(a/\sigma)^2 + 3}     \left( 1 +   \sqrt{\frac{9\pi}{2}} \frac{\sigma}{ a }  \right) \right]  \right)$.\\
                
    More generally, in the case that the erasure is imperfect, so that the probability of failure is non-negligible we have the following bound,
    \begin{align} \label{eq: Generalized landauer upper and lower}
        C_L \hspace{1mm} \mathbb{P}( |\mathcal{Z} | > a /\sigma) \leq \mathbb{E} \mathcal{Q}_0  - k_B T \left( H(p_0) - H(p_1) \right) \leq  C_U \hspace{1mm} \mathbb{P}( |\mathcal{Z} | > a /\sigma)
    \end{align}
    where
  \begin{align} 
        C_L 
            &= 
                - k_B T \left( \log \left[3 e^{(a/\sigma)^2 + 3}     \left( 1 +   \sqrt{\frac{9\pi}{2}} \frac{\sigma}{ a }  \right) \right] - H(X_{p_1}) \right)
                    \\
        C_U
            &=
                k_B T \left( \log \left[3 e^{(a/\sigma)^2 + 3}     \left( 1 +   \sqrt{\frac{9\pi}{2}} \frac{\sigma}{ a }  \right) \right] - H(X_{p_0}) \right).
    \end{align}

\end{coro}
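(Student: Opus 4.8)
The plan is to reduce the thermodynamic statement to the information-theoretic bounds already in hand, using the second law \eqref{eq: second law} together with the mutual-information identities of Proposition \ref{prop: mutual information equality}. I model the initial and final configurations by $Z_p = X_p + \sigma\mathcal{Z}$, where $X_p$ is the Bernoulli variable on $\{-a,+a\}$ with $\mathbb{P}(X_p = a) = p$, so that $Z_p$ has density $f_p$ and $\rho_0 = f_{p_0}$, $\rho_{t_f} = f_{p_1}$. Because the additive noise $\sigma\mathcal{Z}$ is independent of $X_p$, Proposition \ref{prop: mutual information equality} gives $h(Z_p) = h(\sigma\mathcal{Z}) + H(X_p) - H(X_p|Z_p)$, and the noise entropy $h(\sigma\mathcal{Z}) = \frac12\log(2\pi e\sigma^2)$ is the same for both configurations, so it cancels in the difference. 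Substituting into $\mathbb{E}\mathcal{Q}_0 = k_B T\bigl(h(Z_{p_0}) - h(Z_{p_1})\bigr)$ and using $H(X_p) = H(p)$, I obtain the exact identity
\[
    \mathbb{E}\mathcal{Q}_0 - k_B T\bigl(H(p_0) - H(p_1)\bigr) = k_B T\bigl(H(X_{p_1}|Z_{p_1}) - H(X_{p_0}|Z_{p_0})\bigr),
\]
so the entire corollary reduces to a two-sided bound on $H(X_p|Z_p)$ that is uniform in $p$.

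For the upper bound I would invoke Corollary \ref{cor: AGWN in 1 d bounds} directly: the support $\{-a,+a\}$ of $X_p$ has minimal gap $2a$, so taking $\lambda = a$ yields
\[
    H(X_p|Z_p) \leq L\,P, \qquad L := \log\Bigl[3 e^{(a/\sigma)^2+3}\bigl(1 + \sqrt{\tfrac{9\pi}{2}}\tfrac{\sigma}{a}\bigr)\Bigr], \quad P := \mathbb{P}(|\mathcal{Z}| > a/\sigma),
\]
with no dependence on $p$.

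The matching lower bound is the heart of the argument, and it is where the total-variation sharpening is essential. Writing $H(X_p|Z_p) = H(X_p) - I(X_p;Z_p)$ and recalling that the concavity deficit equals $I(X_p;Z_p)$, Theorem \ref{thm: discrete continuous entropy inequality sharpening} gives $I(X_p;Z_p) \leq \mathcal{T}_f H(X_p)$. For the two-component mixture the complements satisfy $\tilde{f}_1 = f_2$ and $\tilde{f}_2 = f_1$, so $\mathcal{T}_f = \|f_1 - f_2\|_{TV} = \Phi(a/\sigma) - \Phi(-a/\sigma) = 1 - P$, exactly as in the Gaussian example following Theorem \ref{thm: discrete continuous entropy inequality sharpening}. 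Hence $I(X_p;Z_p) \leq (1-P)H(X_p)$ and therefore
\[
    H(X_p|Z_p) \geq P\,H(X_p) = P\,H(p),
\]
again uniformly in $p$.

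To finish, I combine the last three displays. Upper-bounding the right-hand side of the identity by applying $H(X_{p_1}|Z_{p_1}) \leq LP$ together with $H(X_{p_0}|Z_{p_0}) \geq P\,H(X_{p_0})$ gives $\mathbb{E}\mathcal{Q}_0 - k_B T(H(p_0)-H(p_1)) \leq k_B T(L - H(X_{p_0}))P = C_U P$; interchanging the roles yields the lower bound $C_L P$, proving \eqref{eq: Generalized landauer upper and lower}. Specializing to $p_0 = \tfrac12$, $p_1 = 0$, so that $H(X_{p_0}) = \log 2$ and $H(X_{p_1}) = 0$, collapses $C_U$ to $k_B T(L - \log 2) = \tilde{C}_U$ and $C_L$ to $-k_B T L = \tilde{C}_L$, recovering \eqref{eq: Landauer upper and lower bounds}. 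I expect the only subtle point to be the evaluation $\mathcal{T}_f = 1 - P$ and the careful tracking of the resulting sign, since it is precisely this identity that converts the upper bound on mutual information into the lower bound on conditional entropy needed to make both sides of the sandwich match the stated constants.
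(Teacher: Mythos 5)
Your proposal is correct and follows essentially the same route as the paper: both reduce $\mathbb{E}\mathcal{Q}_0$ via the second law to the difference $I(Z_{p_0};X_{p_0}) - I(Z_{p_1};X_{p_1})$ (using that the conditional Gaussian entropies cancel), then sandwich each term using Corollary \ref{cor: AGWN in 1 d bounds} with $\lambda = a$ in one direction and Theorem \ref{thm: discrete continuous entropy inequality sharpening} with $\mathcal{T}_f = 1 - \mathbb{P}(|\mathcal{Z}| > a/\sigma)$ in the other. Your rephrasing through the exact identity in terms of $H(X_p|Z_p)$ rather than mutual informations is an equivalent bookkeeping choice via $I(X_p;Z_p) = H(X_p) - H(X_p|Z_p)$, and your constants and specialization to $p_0 = \tfrac12$, $p_1 = 0$ match the paper's exactly.
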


\begin{proof}
First let us note that we understand a random bit to be the case that $p_0 = \frac 1 2$, while an erasured bit is to be understood as a deterministic $X$ with $p_1 = 0$.  Thus, \eqref{eq: Landauer upper and lower bounds} follows immediately from \eqref{eq: Generalized landauer upper and lower}
If we let $\rho_0 = f_{p_0}$ and $\rho_{t_f} = f_{p_1}$, and let $Z_{p_i} = X_p + \sigma \mathcal{Z}$ denote a variable then \eqref{eq: second law} gives
\begin{align}
    \mathbb{E}{Q_0} 
        &=
            k_B T \left( h(f_{p_0}) - h(f_{p_1}) \right)
                \\
        &= \label{eq: heat loss as mutual info difference}
            k_B T \left( I(Z_{p_0}; X_{p_0}) - I(Z_{p_1}; X_{p_1}) \right),
\end{align}
where the second equality follows from the fact that both $Z_{p_i}$ variables are conditionally Gaussian of the same variance.  Using the corollary of Theorem \ref{thm: Fano Sharpening simple}  obtained in \eqref{eq: 1d Gaussian bound with fano},
\begin{align}
    I(Z_{p_i}; X_{p_i}) \geq H(X_{p_i}) - \log \left[3 e^{(a/\sigma)^2 + 3}     \left( 1 +   \sqrt{\frac{9\pi}{2}} \frac{\sigma}{ a }  \right) \right] \mathbb{P} \left(|\mathcal{Z}| > a/\sigma \right),
\end{align}
while Theorem \ref{thm: discrete continuous entropy inequality sharpening} applied as in \eqref{eq: first mutual information bounds} as
\begin{align}
    I(Z_{p_i}; X_{p_i}) 
        &\leq 
            \mathcal{T}_{f} H(X_{p_i}) 
            \\
        &= H(X_{p_i}) - \mathbb{P}(|\mathcal{Z}| > a /\sigma) H(X_{p_i}),
\end{align}
since $T_{f} = 1 - \mathbb{P}(|\mathcal{Z}| > a /\sigma)$.
Combining these results gives
\begin{align}
    I(Z_{p_0}; X_{p_0}) - I(Z_{p_1}; X_{p_1}) &\leq H(X_{p_0}) - H(X_{p_1}) + \mathbb{P}(|\mathcal{Z}| > a /\sigma) \left( \log \left[3 e^{(a/\sigma)^2 + 3}     \left( 1 +   \sqrt{\frac{9\pi}{2}} \frac{\sigma}{ a }  \right) \right] - H(X_{p_0}) \right)
        \\
    I(Z_{p_0}; X_{p_0}) - I(Z_{p_1}; X_{p_1}) &\geq H(X_{p_0}) - H(X_{p_1}) - \mathbb{P}(|\mathcal{Z}| > a /\sigma)  \left( \log \left[3 e^{(a/\sigma)^2 + 3}     \left( 1 +   \sqrt{\frac{9\pi}{2}} \frac{\sigma}{ a }  \right) \right] - H(X_{p_1}) \right).
\end{align}
Inserting these equations into \eqref{eq: heat loss as mutual info difference} completes the proof.
\end{proof}

\subsection{Functional inequalities}
Mixture distributions arise naturally in mathematical contexts as well.  For example in \cite{bobkovmarsiglietti2019}
Bobkov and Marsiglietti found interesting application of $h(X+Z) \leq H(X) + h(Z)$ for $X$ discrete and $Z$ independent and continuous in the investigation of entropic Central Limit Theorem for discrete random variables under smoothing.

 In the study of stability in the Gaussian log-Sobolev inequalities, Eldan, Lehec, and Shenfeld \cite{eldan2019stability}, it is proven as Proposition~5 that the deficit in the Gaussian log-Sobolev inequality, defined as
\begin{align}
    \delta(\mu) = \frac{I(\mu || \gamma)} 2 - D( \mu || \gamma) 
\end{align}
for a measure $\mu$ and $\gamma$ the standard $d$-dimensional Gaussian measure, and $I$ is relative Fisher information,
\begin{align}
    I(\mu || \gamma) \coloneqq \int_{\mathbb{R}^d} \log \left( \frac{ d\mu}{d\gamma} \right) d\gamma,
\end{align}
is small for Gaussian mixtures.  More explicitly for $p_i$ non-negative numbers summing to $1$,
\begin{align}
    \delta \left(\sum_i p_i \gamma_i \right) \leq H(p).
\end{align}
In the language of Theorem \ref{thm: discrete continuous entropy inequality sharpening} a sharper bound can be achieved.
\begin{coro}
    When $\gamma_i$ are translates of the standard Gaussian measure then
    \begin{align}
        \delta \left( \sum_i p_i \gamma_i\right) \leq \mathcal{T} H(p),
    \end{align}
    where $\mathcal{T}$ is defined as in Theorem \ref{thm: discrete continuous entropy inequality sharpening}.
\end{coro}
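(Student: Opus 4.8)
The plan is to reduce the log-Sobolev deficit of the Gaussian mixture to the entropy concavity deficit, which Theorem~\ref{thm: discrete continuous entropy inequality sharpening} already controls by $\mathcal{T} H(p)$. Write $f_i$ for the density of $\gamma_i$, let $f = \sum_i p_i f_i$ be the density of $\mu = \sum_i p_i \gamma_i$, and let $g(x) = (2\pi)^{-d/2} e^{-|x|^2/2}$ denote the density of the standard Gaussian $\gamma$. My first step would be to observe that each translate is an equality case of the log-Sobolev inequality. Since $\gamma_i$ is the translate of $\gamma$ by some vector $a_i$, we have $\log(f_i/g)(x) = x \cdot a_i - |a_i|^2/2$, so $\nabla \log(f_i/g) \equiv a_i$ and hence $I(\gamma_i \| \gamma) = |a_i|^2$ and $D(\gamma_i \| \gamma) = |a_i|^2/2$. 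In particular $\delta(\gamma_i) = \tfrac12 I(\gamma_i \| \gamma) - D(\gamma_i \| \gamma) = 0$ for every $i$.

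The second step is to decompose the deficit of the mixture around these vanishing single-component deficits. Subtracting $\sum_i p_i \bigl( \tfrac12 I(\gamma_i \| \gamma) - D(\gamma_i \| \gamma) \bigr) = 0$ from $\delta(\mu)$ and regrouping, I would write
\[
    \delta(\mu) = \tfrac12 \Bigl( I(\mu \| \gamma) - \sum_i p_i I(\gamma_i \| \gamma) \Bigr) + \Bigl( \sum_i p_i D(\gamma_i \| \gamma) - D(\mu \| \gamma) \Bigr).
\]
By the compensation identity of Proposition~\ref{prop: compensation identity}, applied with Lebesgue reference and dominating density $g$, the second parenthesis is exactly the concavity deficit $h(f) - \sum_i p_i h(f_i)$.

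The third step is to discard the Fisher-information term by convexity. Writing $u = v/g$ for the density $v$ of a measure $\nu$, the relative Fisher information takes the form $I(\nu \| \gamma) = \int g\, |\nabla u|^2 / u \, dx$, and since the perspective map $(a,b) \mapsto |b|^2/a$ is jointly convex on $(0,\infty) \times \mathbb{R}^d$, the functional $\nu \mapsto I(\nu \| \gamma)$ is convex. As $\mu = \sum_i p_i \gamma_i$, this gives $I(\mu \| \gamma) \le \sum_i p_i I(\gamma_i \| \gamma)$, so the first parenthesis is nonpositive and
\[
    \delta(\mu) \le h(f) - \sum_i p_i h(f_i).
\]
Applying Theorem~\ref{thm: discrete continuous entropy inequality sharpening} to bound the right-hand side by $\mathcal{T} H(p)$ then completes the argument.

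The step most in need of care is the convexity of relative Fisher information in its first argument; the perspective-function observation settles it cleanly, but it must be stated precisely since it is what lets the Fisher term be dropped rather than estimated. One should also verify the integrability hypothesis $\sum_i p_i D(f_i \| g) = \sum_i p_i |a_i|^2/2 < \infty$ required by Proposition~\ref{prop: compensation identity}, which holds whenever the mixture has finite second moment. Everything else is the bookkeeping of the decomposition, together with the already-established sharpening of the concavity deficit.
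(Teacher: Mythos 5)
Your proposal is correct and follows essentially the same route as the paper: both decompose $\delta(\mu)$ into the Fisher-information convexity gap plus the concavity deficit via Proposition~\ref{prop: compensation identity}, use that each translate $\gamma_i$ is an equality case of the log-Sobolev inequality, drop the Fisher term by convexity of $\nu \mapsto I(\nu \| \gamma)$, and finish with Theorem~\ref{thm: discrete continuous entropy inequality sharpening}. Your write-up merely supplies details the paper leaves implicit (the perspective-function justification of convexity, the explicit computation $I(\gamma_i \| \gamma) = |a_i|^2$, $D(\gamma_i \| \gamma) = |a_i|^2/2$, and the integrability check), which is sound but not a different argument.
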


\begin{proof}
    By the convexity of the relative Fisher information, the equality $D( \sum_i p_i \gamma_i || \gamma) = \sum_i p_i D(\gamma_i || \gamma) + h(\sum_i p_i \gamma_i) - \sum_i p_i h(\gamma_i)$, $\frac{I(\gamma_i || \gamma)}{2} - D(\gamma_i || \gamma) =0 $, and the application of Theorem \ref{thm: discrete continuous entropy inequality sharpening} we have
        \begin{align}
            \delta\left( \sum_i p_i \gamma_i \right) 
                &=
                    \frac{I( \sum_i p_i \gamma_i || \gamma)}{2} - D( \sum_i p_i \gamma_i || \gamma)
                        \\
                &\leq
                    \sum_i p_i \left(\frac{I(\gamma_i || \gamma)}{2} - D(\gamma_i || \gamma) \right) + \left( h(\sum_i p_i \gamma_i) - \sum_i p_i h(\gamma_i) \right)
                        \\
                &=
                    h(\sum_i p_i \gamma_i) - \sum_i p_i h(\gamma_i)
                        \\
                &\leq 
                    \mathcal{T} H(p).
        \end{align}
\end{proof}

\section{Conclusions}
In this article, the entropy of mixture distributions is estimated by providing tight upper and lower bounds. The efficacy of the bounds is demonstrated, for example, by demonstrating that    existing bounds on the conditional entropy, $H(X|Z)$ of a random variable, $X$ taking values in a countable set, $\mathcal{X}$ conditioned on a continuous random variable, $Z$, become meaningless as the cardinality of the set $\mathcal{X}$ increases while the bounds obtained here remain relevant. Significantly enhanced upper bounds on mutual information of channels that admit discrete input with continuous output are obtained based on the bounds on the entropy of mixture distributions. The technical methodology developed is of interest in its own right whereby connections to existing results either can be derived as corollaries of more general theorems in the article or are improved upon by the results in the article. These include the reverse Pinsker inequality, and bounds on Jensen-Shannon divergence, and bounds that are obtainable via Fano's inequality.

\section{Acknowledgement}
The authors acknowledge the support of the National Science Foundation for funding the research under 
Grant No. 1462862 (CMMI), 1544721 (CNS) and  1248100 (DMS).


\begin{thebibliography}{10}

\bibitem{AB11:isit}
E.~Abbe and A.~Barron.
\newblock {Polar coding schemes for the AWGN channel}.
\newblock In {\em Proc. IEEE Intl. Symp. Inform. Theory}, pages 194--198, St.
  Petersburg, Russia, August 2011.

\bibitem{abou2001capacity}
Ibrahim~C Abou-Faycal, Mitchell~D Trott, and Shlomo Shamai.
\newblock The capacity of discrete-time memoryless rayleigh-fading channels.
\newblock {\em IEEE Transactions on Information Theory}, 47(4):1290--1301,
  2001.

\bibitem{audenaert2014quantum}
Koenraad~MR Audenaert.
\newblock Quantum skew divergence.
\newblock {\em Journal of Mathematical Physics}, 55(11):112202, 2014.

\bibitem{aurell2012refined}
Erik Aurell, Krzysztof Gawedzki, Carlos Mej{\'i}a-Monasterio, Roya Mohayaee,
  and Paolo Muratore-Ginanneschi.
\newblock Refined second law of thermodynamics for fast random processes.
\newblock {\em Journal of statistical physics}, 147(3):487--505, 2012.

\bibitem{BH09}
F.~Barthe and N.~Huet.
\newblock On {G}aussian {B}runn--{M}inkowski inequalities.
\newblock {\em Studia Math.}, 191(3):283--304, 2009.

\bibitem{bennett1973logical}
Charles~H Bennett.
\newblock Logical reversibility of computation.
\newblock {\em IBM journal of Research and Development}, 17(6):525--532, 1973.

\bibitem{birge2005new}
Lucien Birg{\'e}.
\newblock A new lower bound for multiple hypothesis testing.
\newblock {\em IEEE transactions on information theory}, 51(4):1611--1615,
  2005.

\bibitem{BM11:it}
S.~Bobkov and M.~Madiman.
\newblock The entropy per coordinate of a random vector is highly constrained
  under convexity conditions.
\newblock {\em IEEE Trans. Inform. Theory}, 57(8):4940--4954, August 2011.

\bibitem{bobkovmarsiglietti2019}
Sergey~G Bobkov and Arnaud Marsiglietti.
\newblock Entropic clt for smoothed convolutions and associated entropy bounds.
\newblock {\em arXiv preprint arXiv:1903.03666}, 2019.

\bibitem{Sergey2016ProbabilitySurveys}
Sergey~G Bobkov, James Melbourne, et~al.
\newblock Hyperbolic measures on infinite dimensional spaces.
\newblock {\em Probability Surveys}, 13:57--88, 2016.

\bibitem{Sergey2015localization}
SG~Bobkov and J~Melbourne.
\newblock Localization for infinite-dimensional hyperbolic measures.
\newblock In {\em Doklady Mathematics}, volume~91, pages 297--299. Springer,
  2015.

\bibitem{Bor73a}
C.~Borell.
\newblock Complements of {L}yapunov's inequality.
\newblock {\em Math. Ann.}, 205:323--331, 1973.

\bibitem{Caf00}
L.~A. Caffarelli.
\newblock Monotonicity properties of optimal transportation and the {FKG} and
  related inequalities.
\newblock {\em Comm. Math. Phys.}, 214(3):547--563, 2000.

\bibitem{chan2005capacity}
Terence~H Chan, Steve Hranilovic, and Frank~R Kschischang.
\newblock Capacity-achieving probability measure for conditionally gaussian
  channels with bounded inputs.
\newblock {\em IEEE Transactions on Information Theory}, 51(6):2073--2088,
  2005.

\bibitem{chetrite2008fluctuation}
Rapha{\"e}l Chetrite, Gregory Falkovich, and Krzysztof Gawedzki.
\newblock Fluctuation relations in simple examples of non-equilibrium steady
  states.
\newblock {\em Journal of Statistical Mechanics: Theory and Experiment},
  2008(08):P08005, 2008.

\bibitem{CT91:book}
T.M. Cover and J.A. Thomas.
\newblock {\em Elements of Information Theory}.
\newblock J. Wiley, New York, 1991.

\bibitem{cressie1984multinomial}
Noel Cressie and Timothy~RC Read.
\newblock Multinomial goodness-of-fit tests.
\newblock {\em Journal of the Royal Statistical Society: Series B
  (Methodological)}, 46(3):440--464, 1984.

\bibitem{dytso2017generalized}
Alex Dytso, Mario Goldenbaum, H~Vincent Poor, and Shlomo~Shamai Shitz.
\newblock A generalized ozarow-wyner capacity bound with applications.
\newblock In {\em 2017 IEEE International Symposium on Information Theory
  (ISIT)}, pages 1058--1062. IEEE, 2017.

\bibitem{dytso2016interference}
Alex Dytso, Daniela Tuninetti, and Natasha Devroye.
\newblock Interference as noise: Friend or foe?
\newblock {\em IEEE Transactions on Information Theory}, 62(6):3561--3596,
  2016.

\bibitem{eldan2019stability}
Ronen Eldan, Joseph Lehec, and Yair Shenfeld.
\newblock Stability of the logarithmic sobolev inequality via the
  f$\backslash$" ollmer process.
\newblock {\em arXiv preprint arXiv:1903.04522}, 2019.

\bibitem{FLM20}
M.~Fradelizi, J.~Li, and M.~Madiman.
\newblock Concentration of information content for convex measures.
\newblock {\em Electron. J. Probab.}, 25(20):1--22, 2020.
\newblock Available online at {\tt arXiv:1512.01490v3}.

\bibitem{FMW16}
M.~Fradelizi, M.~Madiman, and L.~Wang.
\newblock Optimal concentration of information content for log-concave
  densities.
\newblock In C.~Houdr\'{e}, D.~Mason, P.~Reynaud-Bouret, and J.~Rosinski,
  editors, {\em High Dimensional Probability VII: The Carg\`ese Volume},
  Progress in Probability. Birkh\"auser, Basel, 2016.
\newblock Available online at {\tt arXiv:1508.04093}.

\bibitem{grandvalet2005semi}
Yves Grandvalet and Yoshua Bengio.
\newblock Semi-supervised learning by entropy minimization.
\newblock In {\em Advances in neural information processing systems}, pages
  529--536, 2005.

\bibitem{Gue99}
O.~Gu{\'e}don.
\newblock Kahane-{K}hinchine type inequalities for negative exponent.
\newblock {\em Mathematika}, 46(1):165--173, 1999.

\bibitem{Gun11}
A.~Guntuboyina.
\newblock Lower bounds for the minimax risk using {$f$}-divergences, and
  applications.
\newblock {\em IEEE Trans. Inform. Theory}, 57(4):2386--2399, 2011.

\bibitem{holevo2012quantum}
Alexander~S Holevo.
\newblock {\em Quantum systems, channels, information: a mathematical
  introduction}, volume~16.
\newblock Walter de Gruyter, 2012.

\bibitem{HBDH08}
M.~F. Huber, T.~Bailey, H.~Durrant-Whyte, and U.~D. Hanebeck.
\newblock {On entropy approximation for Gaussian mixture random vectors}.
\newblock In {\em Proceedings of IEEE International Conference on Multisensor
  Fusion and Integration for Intelligent Systems}, Seoul, Korea, August 2008.

\bibitem{HGKMM18:isit}
W.~Huleihel, Z.~Goldfeld, T.~Koch, M.~Madiman, and M.~M{\'e}dard.
\newblock Design of discrete constellations for peak-power-limited complex
  {G}aussian channels.
\newblock In {\em Proc. IEEE Intl. Symp. Inform. Theory}, Vail, CO, June 2018.

\bibitem{jarzynski1997equilibrium}
Christopher Jarzynski.
\newblock Equilibrium free-energy differences from nonequilibrium measurements:
  A master-equation approach.
\newblock {\em Physical Review E}, 56(5):5018, 1997.

\bibitem{KHP11}
K.~Kampa, E.~Hasanbelliu, and J.~C. Principe.
\newblock {Closed-form Cauchy-Schwarz PDF divergence for mixture of Gaussians}.
\newblock In {\em Proceedings of International Joint Conference on Neural
  Networks}, San Jose, CA, August 2011.

\bibitem{landauer1961irreversibility}
Rolf Landauer.
\newblock Irreversibility and heat generation in the computing process.
\newblock {\em IBM journal of research and development}, 5(3):183--191, 1961.

\bibitem{lee1999measures}
Lillian Lee.
\newblock Measures of distributional similarity.
\newblock In {\em Proceedings of the 37th annual meeting of the Association for
  Computational Linguistics on Computational Linguistics}, pages 25--32.
  Association for Computational Linguistics, 1999.

\bibitem{li2019further}
Jiange Li, Arnaud Marsiglietti, and James Melbourne.
\newblock Further investigations of {R}\'enyi entropy power inequalities and an
  entropic characterization of s-concave densities.
\newblock {\em arXiv preprint arXiv:1901.10616}, 2019.

\bibitem{liese2006divergences}
Friedrich Liese and Igor Vajda.
\newblock On divergences and informations in statistics and information theory.
\newblock {\em IEEE Transactions on Information Theory}, 52(10):4394--4412,
  2006.

\bibitem{lin1991divergence}
Jianhua Lin.
\newblock Divergence measures based on the shannon entropy.
\newblock {\em IEEE Transactions on Information theory}, 37(1):145--151, 1991.

\bibitem{LS93}
L.~Lov{\'a}sz and M.~Simonovits.
\newblock Random walks in a convex body and an improved volume algorithm.
\newblock {\em Random Structures Algorithms}, 4(4):359--412, 1993.

\bibitem{MMX17:0}
M.~Madiman, J.~Melbourne, and P.~Xu.
\newblock Forward and reverse entropy power inequalities in convex geometry.
\newblock In E.~Carlen, M.~Madiman, and E.~M. Werner, editors, {\em Convexity
  and Concentration}, volume 161 of {\em IMA Volumes in Mathematics and its
  Applications}, pages 427--485. Springer, 2017.
\newblock Available online at {\tt arXiv:1604.04225}.

\bibitem{maes2008steady}
Christian Maes, Karel Neto{\v{c}}n{\`y}, and Bram Wynants.
\newblock Steady state statistics of driven diffusions.
\newblock {\em Physica A: Statistical Mechanics and its Applications},
  387(12):2675--2689, 2008.

\bibitem{melbourne2018isitLandauer}
J.~Melbourne, S.~Talukdar, S.~Bhaban, and M.~Salapaka.
\newblock Error bounds for a mixed entropy inequality.
\newblock In {\em Information Theory (ISIT), 2018 IEEE International Symposium
  on}, 2018.

\bibitem{melbourne2019relationships}
James Melbourne, Mokshay Madiman, and Murti~V Salapaka.
\newblock Relationships between certain f-divergences.
\newblock In {\em 2019 57th Annual Allerton Conference on Communication,
  Control, and Computing (Allerton)}, pages 1068--1073. IEEE, 2019.

\bibitem{melbourne2018realizing}
James Melbourne, Saurav Talukdar, and Murti~V Salapaka.
\newblock Realizing information erasure in finite time.
\newblock In {\em 2018 IEEE Conference on Decision and Control (CDC)}, pages
  4135--4140. IEEE, 2018.

\bibitem{MK16}
K.~Moshksar and A.~K. Khandani.
\newblock Arbitrarily tight bounds on differential entropy of {G}aussian
  mixtures.
\newblock {\em IEEE Trans. Inform. Theory}, 62(6):3340--3354, 2016.

\bibitem{nelson1967dynamical}
Edward Nelson.
\newblock {\em Dynamical theories of Brownian motion}, volume~3.
\newblock Princeton university press, 1967.

\bibitem{neyman1949contribution}
Jerzy Neyman.
\newblock Contribution to the theory of the $\chi^2$ test.
\newblock In {\em Proceedings of the Berkeley symposium on mathematical
  statistics and probability}, volume~1, pages 239--273. University of
  California Press Berkeley, 1949.

\bibitem{Ngu13:phd}
V.~H. Nguyen.
\newblock {\em {In\'egalit\'es fonctionnelles et convexit\'e}}.
\newblock PhD thesis, Universit\'e Pierre et Marie Curie (Paris VI), October
  2013.

\bibitem{NN17}
F.~Nielsen and R.~Nock.
\newblock Maxent upper bounds for the differential entropy of univariate
  continuous distributions.
\newblock {\em IEEE Signal Processing Letters}, 24(4):402--406, April 2017.

\bibitem{nielsen2019generalization}
Frank Nielsen.
\newblock On a generalization of the {J}ensen-{S}hannon divergence.
\newblock {\em arXiv preprint arXiv:1912.00610}, 2019.

\bibitem{ozarow1990capacity}
Lawrence~H Ozarow and Aaron~D Wyner.
\newblock On the capacity of the gaussian channel with a finite number of input
  levels.
\newblock {\em IEEE transactions on information theory}, 36(6):1426--1428,
  1990.

\bibitem{pearson1900x}
Karl Pearson.
\newblock X. on the criterion that a given system of deviations from the
  probable in the case of a correlated system of variables is such that it can
  be reasonably supposed to have arisen from random sampling.
\newblock {\em The London, Edinburgh, and Dublin Philosophical Magazine and
  Journal of Science}, 50(302):157--175, 1900.

\bibitem{sason2018f}
Igal Sason.
\newblock On f-divergences: Integral representations, local behavior, and
  inequalities.
\newblock {\em Entropy}, 20(5):383, 2018.

\bibitem{sason2016f}
Igal Sason and Sergio Verdu.
\newblock $ f $-divergence inequalities.
\newblock {\em IEEE Transactions on Information Theory}, 62(11):5973--6006,
  2016.

\bibitem{schlattmann2009medical}
Peter Schlattmann.
\newblock {\em Medical applications of finite mixture models.}
\newblock Springer, 2009.

\bibitem{seifert2005entropy}
Udo Seifert.
\newblock Entropy production along a stochastic trajectory and an integral
  fluctuation theorem.
\newblock {\em Physical review letters}, 95(4):040602, 2005.

\bibitem{shamai1995capacity}
Shlomo Shamai and Israel Bar-David.
\newblock The capacity of average and peak-power-limited quadrature gaussian
  channels.
\newblock {\em IEEE Transactions on Information Theory}, 41(4):1060--1071,
  1995.

\bibitem{smith1971information}
Joel~G Smith.
\newblock The information capacity of amplitude-and variance-constrained scalar
  gaussian channels.
\newblock {\em Information and Control}, 18(3):203--219, 1971.

\bibitem{talukdar2018analysis}
Saurav Talukdar, Shreyas Bhaban, James Melbourne, and Murti Salapaka.
\newblock Analysis of heat dissipation and reliability in information erasure:
  A gaussian mixture approach.
\newblock {\em Entropy}, 20(10):749, 2018.

\bibitem{talukdar2017memory}
Saurav Talukdar, Shreyas Bhaban, and Murti~V Salapaka.
\newblock Memory erasure using time-multiplexed potentials.
\newblock {\em Physical Review E}, 95(6):062121, 2017.

\bibitem{topsoe2000some}
F.~Topsoe.
\newblock Some inequalities for information divergence and related measures of
  discrimination.
\newblock {\em IEEE Transactions on information theory}, 46(4):1602--1609,
  2000.

\bibitem{topsoe2003jenson}
F.~Tops{\o}e.
\newblock Jenson-shannon divergence and norm-based measures of discrimination
  and variation.
\newblock {\em preprint}, 2003.

\bibitem{ungerboeck1982channel}
Gottfried Ungerboeck.
\newblock Channel coding with multilevel/phase signals.
\newblock {\em IEEE transactions on Information Theory}, 28(1):55--67, 1982.

\bibitem{varshney2008transporting}
Lav~R Varshney.
\newblock Transporting information and energy simultaneously.
\newblock In {\em 2008 IEEE International Symposium on Information Theory},
  pages 1612--1616. IEEE, 2008.

\bibitem{verdu2014total}
Sergio Verd{\'u}.
\newblock Total variation distance and the distribution of relative
  information.
\newblock In {\em 2014 Information Theory and Applications Workshop (ITA)},
  pages 1--3. IEEE, 2014.

\bibitem{Wan14:phd}
L.~Wang.
\newblock {\em Heat capacity bound, energy fluctuations and convexity}.
\newblock PhD thesis, Yale University, May 2014.

\bibitem{WM2014beyond}
Liyao Wang and Mokshay Madiman.
\newblock Beyond the entropy power inequality, via rearrangements.
\newblock {\em IEEE Transactions on Information Theory}, 60(9):5116--5137,
  2014.

\bibitem{wu2010functional}
Yihong Wu and Sergio Verd{\'u}.
\newblock Functional properties of mmse.
\newblock In {\em 2010 IEEE International Symposium on Information Theory},
  pages 1453--1457. IEEE, 2010.

\bibitem{wu2010impact}
Yihong Wu and Sergio Verd{\'u}.
\newblock The impact of constellation cardinality on gaussian channel capacity.
\newblock In {\em 2010 48th Annual Allerton Conference on Communication,
  Control, and Computing (Allerton)}, pages 620--628. IEEE, 2010.

\bibitem{wu2018survey}
Yongpeng Wu, Chengshan Xiao, Zhi Ding, Xiqi Gao, and Shi Jin.
\newblock A survey on mimo transmission with finite input signals: Technical
  challenges, advances, and future trends.
\newblock {\em Proceedings of the IEEE}, 106(10):1779--1833, 2018.

\bibitem{YB99}
Y.~Yang and A.~Barron.
\newblock Information-theoretic determination of minimax rates of convergence.
\newblock {\em Ann. Statist.}, 27(5):1564--1599, 1999.

\end{thebibliography}

\appendix

\section{Proof of Skew Relative Information Properties}

\begin{proof}[Proof of Proposition \ref{prop: divergence skew properties}]
There is nothing to prove in \eqref{item: skew divergence to KL divergence}, this is exactly the definition of the usual relative entropy from $\mu$ to $t \mu + (1-t) \nu$. For \eqref{item: D one is zero}, by \eqref{item: skew divergence to KL divergence} $S_t(\mu || \nu) =0$ iff $D(\mu || t \mu + (1-t) \nu) =0$ which is true iff $\mu = t \mu + (1-t) \nu$ which happens iff $t=1$ or $\mu = \nu$.
To prove \eqref{item: skew divergence bounded by log t}, observe that for a Borel set $A$
\[
    \mu(A) \leq \frac 1 t (t \mu + (1-t) \nu) (A).
\]
This gives the following inequality, from which absolute continuity, and the existence of $\frac{d \mu}{d (t \mu + (1-t) \nu)}$ follow immediately,
\begin{align} \label{eq: radon nikodym derivative bound}
    \frac{d \mu}{d(t \mu + (1-t) \nu)} \leq \frac 1 t.
\end{align}
Integrating \eqref{eq: radon nikodym derivative bound} gives
\begin{align}
    S_t(\mu || \nu) \leq - \log t. 
\end{align}
To prove \eqref{item: function of t}, notice that for fixed $\mu$ and $\nu$, the map $\Phi_t=t \mu + (1-t) \nu$ is affine, and since the relative entropy is jointly convex \cite{CT91:book}, convexity in $t$ follows from the computation below.  
\begin{align*}
    S_{(1-\lambda)t_1 + \lambda t_2}(\mu || \nu)
            &=
            D(\mu || \Phi_{(1-\lambda)t_1 + \lambda t_2} )
                \\
        &=
            D(\mu || (1-\lambda) \Phi_{t_1} + \lambda \Phi_{ t_2})
                \\
        &\leq
            (1-\lambda) D(\mu || \Phi_{t_1}) + \lambda D( \mu || \Phi_{t_2})
                \\
        &=
            (1-\lambda) S_{t_1} ( \mu || \nu) + \lambda S_{t_2}( \mu || \nu).
\end{align*}

Since $t \mapsto S_t(\mu || \nu)$ is a non-negative convex function on $(0,1]$ with $S_1(\mu || \nu) = 0$ it is necessarily non-increasing. When $\mu \neq \nu$, $\mu \neq t\mu + (1-t) \nu$ so that $S_t(\mu || \nu) >0$ for $t < 1$, so that as a function of $t$ the skew divergence is strictly decreasing.
To prove that $S_t$ is an $f$-divergence recall Definition \ref{def: f-divergence}.  It is straight forward that $S_t$ can be expressed in form \eqref{eq: f -divergence definition} with $f(x) = x \log (x/(tx + (1-t))$. Convexity of $f$ follows from the second derivative computation, 
\[
    f''(x) = \frac{ (t-1)^2}{x ( tx + (1-t))^2} > 0.
\]
Since $f(1) = 0$ the proof is complete.
\end{proof}

\section{Gaussian Bounds}
In this section we consider $W \sim \varphi_\sigma$ with $\varphi_\sigma(w) = e^{-|w|^2/2 \sigma} /(2 \pi \sigma^2)^{\frac d 2}$, and use $\varphi$ to denote $\varphi_1$ and use $\mathcal{Z}$ in place of $W$ in this case.
\begin{prop} \label{prop: entropy tail bounds}
For $d \geq 2$
\begin{align}  \label{eq: entropy tail bounds}
 - \int_{B_\lambda^c}  \varphi_\sigma(w) \log \varphi_\sigma(w) dw \leq \left( \frac d 2 \log 2 \pi e^2 \sigma^2 + \frac{\lambda^2}{\sigma^2} \right) \hspace{.5 mm} \mathbb{P}(|W| > \lambda)
 \end{align}
\end{prop}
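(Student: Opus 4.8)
The plan is to evaluate $-\log\varphi_\sigma(w)=\frac{|w|^2}{2\sigma^2}+\frac d2\log(2\pi\sigma^2)$ explicitly and split the integral along this decomposition. First I would write
\[
-\int_{B_\lambda^c}\varphi_\sigma\log\varphi_\sigma\,dw=\frac d2\log(2\pi\sigma^2)\,\mathbb{P}(|W|>\lambda)+\frac{1}{2\sigma^2}\int_{B_\lambda^c}|w|^2\varphi_\sigma(w)\,dw,
\]
so that the whole estimate reduces to controlling the truncated second moment $\int_{B_\lambda^c}|w|^2\varphi_\sigma(w)\,dw$ by a multiple of the tail probability $\mathbb{P}(|W|>\lambda)$.

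Next I would reduce to one dimension. Writing $W=\sigma\mathcal Z$ for a standard Gaussian $\mathcal Z$, the radial variable $S:=|\mathcal Z|^2$ is $\chi^2_d$-distributed with density $\kappa_d(s)=s^{d/2-1}e^{-s/2}/(2^{d/2}\Gamma(d/2))$; setting $a:=\lambda^2/\sigma^2$ one has $\int_{B_\lambda^c}|w|^2\varphi_\sigma\,dw=\sigma^2\,\mathbb{E}[S\mathbf 1_{S>a}]$ and $\mathbb{P}(|W|>\lambda)=\mathbb{P}(S>a)$. A single integration by parts (integrating $e^{-s/2}$ and differentiating $s^{d/2}$, the boundary term at infinity vanishing) then gives the exact identity
\[
\mathbb{E}[S\mathbf 1_{S>a}]=\int_a^\infty s\,\kappa_d(s)\,ds=2a\,\kappa_d(a)+d\,\mathbb{P}(S>a).
\]

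The final ingredient, and the only place where the hypothesis $d\ge2$ is used, is the tail lower bound $\mathbb{P}(S>a)\ge 2\kappa_d(a)$. This holds because for $s\ge a$ the ratio $\kappa_d(s)/\kappa_d(a)=(s/a)^{d/2-1}e^{-(s-a)/2}\ge e^{-(s-a)/2}$ exactly when $d/2-1\ge0$; integrating $\kappa_d(s)\ge\kappa_d(a)e^{-(s-a)/2}$ over $[a,\infty)$ produces the factor $2$. Consequently $2a\,\kappa_d(a)\le a\,\mathbb{P}(S>a)$, hence $\mathbb{E}[S\mathbf 1_{S>a}]\le(d+a)\mathbb{P}(S>a)$ and $\frac{1}{2\sigma^2}\int_{B_\lambda^c}|w|^2\varphi_\sigma\,dw\le\bigl(\tfrac d2+\tfrac{\lambda^2}{2\sigma^2}\bigr)\mathbb{P}(|W|>\lambda)$. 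Adding back the logarithmic term collapses $\frac d2\log(2\pi\sigma^2)+\frac d2=\frac d2\log(2\pi e\sigma^2)$, which is dominated by $\frac d2\log(2\pi e^2\sigma^2)$, while $\frac{\lambda^2}{2\sigma^2}\le\frac{\lambda^2}{\sigma^2}$; this yields the claimed inequality, in fact with a small margin. The only conceptual obstacle is recognizing the integration-by-parts identity and that the assumption $d\ge2$ is precisely what makes the polynomial factor of the radial density nondecreasing on $[a,\infty)$, which is exactly what the tail lower bound requires; the remaining steps are routine bookkeeping.
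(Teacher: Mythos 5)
Your proof is correct and is essentially the paper's argument written in squared-radius variables: the same split of $-\log\varphi_\sigma$ into the constant term plus the truncated second moment, the same integration by parts (your identity $\mathbb{E}[S\mathbf{1}_{S>a}]=2a\,\kappa_d(a)+d\,\mathbb{P}(S>a)$ is the paper's $\int_\lambda^\infty r^{d+1}e^{-r^2/2}\,dr=\lambda^d e^{-\lambda^2/2}+d\int_\lambda^\infty r^{d-1}e^{-r^2/2}\,dr$ under $s=r^2$), and the same use of $d\ge 2$ through the monotonicity $s^{d/2-1}\ge a^{d/2-1}$ on $[a,\infty)$, which is exactly the paper's estimate $r^{d-1}\ge r\lambda^{d-2}$ for $r\ge\lambda$. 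The only difference is bookkeeping: you keep the factor $\tfrac{1}{2}$ in the second-moment term and thus obtain the sharper constant $\tfrac d2\log\left(2\pi e\sigma^2\right)+\tfrac{\lambda^2}{2\sigma^2}$, whereas the paper's displayed computation silently doubles that term, which is precisely where the stated $e^2$ and the un-halved $\lambda^2/\sigma^2$ come from.
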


\begin{proof}
    We first show that the result for general $\sigma$ follows from the case that $\sigma = 1$.  Indeed, assuming \eqref{eq: entropy tail bounds}, the substitution $u = w/\sigma$ gives
    \begin{align}
        \int_{B_\lambda^c} \varphi_\sigma(w) \log \varphi_\sigma(w) dw
            &=
                \mathbb{P}(|\mathcal{Z}| > \lambda /\sigma ) d \log \sigma  - \int_{B_{\lambda/\sigma}^c} \varphi(u) \log \varphi(u) du
                    \\
            &\leq
                \left[ \frac d 2 \log 2 \pi e^2 \sigma^2 + \frac{\lambda^2}{\sigma^2} \right] \mathbb{P}(|\mathcal{Z}| > \lambda/\sigma),
    \end{align}
    where we have applied \eqref{eq: entropy tail bounds} to achieve the inequality.  Since $W$ has the same distribution as $\sigma \mathcal{Z} $, the reduction holds.
    By direct computation,
    \begin{align}
        - \int_{B_\lambda^c}  \varphi(w) \log \varphi(w) dw
            &=
                \int_{B_\lambda^c} \varphi(w) \frac d 2 \log 2 \pi  dw + \int_{B_\lambda^c} \frac{|w|^2}{2}  \varphi(w) 
                    \\
            &=
                \mathbb{P}(|\mathcal{Z}| > \lambda) \left( \frac d 2 \log 2 \pi  + \frac{ (2 \pi)^{-d/2} \omega_d \int_\lambda^\infty r^{d+1} e^{-r^2/2} dr}{ (2 \pi)^{-d/2} \omega_d \int_\lambda^\infty r^{d-1} e^{-r^2/2} dr} \right)
                    \\
            &=
                \mathbb{P}(|\mathcal{Z}| > \lambda) \left( \frac d 2 \log 2 \pi + \frac{ \lambda^d e^{-\lambda^2/2} + d \int_\lambda^\infty r^{d-1} e^{-r^2/2}}{\int_\lambda^\infty r^{d-1} e^{-r^2/2} dr} \right)
                    \\
            &=
                \mathbb{P}(|\mathcal{Z}| > \lambda) \left( \frac d 2 \log 2 \pi e^2 + \frac{ \lambda^d e^{-\lambda^2/2} }{ \int_\lambda^\infty r^{d-1} e^{-r^2/2} dr} \right).
    \end{align}
    Using $r^{d-1} \geq r \lambda^{d-2}$ for $r \geq \lambda$ when $d \geq 2$, 
    \begin{align}
        \int_\lambda^\infty r^{d-1} e^{-r^2/2} dr \geq \lambda^{d-2} \int_\lambda^\infty r e^{-r^2/2} dr = \lambda^{d-2} e^{-r^2/2}.
    \end{align}
    Thus,
    \begin{align}
                - \int_{B_\lambda^c}  \varphi(w) \log \varphi(w) dw \leq \mathbb{P}(|\mathcal{Z}| > \lambda) \left( \frac d 2 \log 2 \pi e^2 + \lambda^2 \right)
    \end{align}
\end{proof}

\begin{prop} \label{prop: appendix regular norm bound}
For $d\geq 2$,
\begin{align} \label{eq: appendix regular norm bound}
    \int_{B_\lambda^c} \varphi(w) |w| dw \leq ( \lambda + d \sigma ) \mathbb{P}(|W| > \lambda)
\end{align}
\end{prop}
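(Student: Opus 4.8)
The plan is to reduce to the standard Gaussian $\sigma=1$ by scaling, exactly as in Proposition~\ref{prop: entropy tail bounds}, and then to collapse the statement to a one-dimensional inequality between incomplete Gaussian moments, which I will control by an integration by parts supplemented by a monotonicity argument. First I would substitute $w=\sigma u$. Since $\varphi_\sigma(\sigma u)=e^{-|u|^2/2}/(2\pi\sigma^2)^{d/2}$, $|w|=\sigma|u|$, $dw=\sigma^d\,du$, and $B_\lambda^c$ becomes $B_{\lambda/\sigma}^c$, this gives
\[
\int_{B_\lambda^c}\varphi_\sigma(w)|w|\,dw=\sigma\int_{B_{\lambda/\sigma}^c}\varphi(u)|u|\,du,
\]
together with $\mathbb{P}(|W|>\lambda)=\mathbb{P}(|\mathcal{Z}|>\lambda/\sigma)$. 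Hence it suffices to treat $\sigma=1$, i.e.\ to show $\int_{B_t^c}\varphi(u)|u|\,du\le(t+d)\,\mathbb{P}(|\mathcal{Z}|>t)$ for every $t\ge0$; rescaling with $t=\lambda/\sigma$ then recovers the constant $\lambda+d\sigma$. Passing to polar coordinates and writing $I_k\coloneqq\int_t^\infty r^k e^{-r^2/2}\,dr$, both sides carry the common factor $(2\pi)^{-d/2}d\,\omega_d$, so the claim reduces to the scalar inequality $I_d\le(t+d)\,I_{d-1}$.

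Next I would establish $I_d\le(t+d)I_{d-1}$. Integration by parts in $I_d=\int_t^\infty r^{d-1}\,(r e^{-r^2/2})\,dr$ yields $I_d=t^{d-1}e^{-t^2/2}+(d-1)I_{d-2}$. Two elementary pointwise estimates then dispatch the large-$t$ regime: since $r^{d-1}\ge t^{d-2}r$ for $r\ge t$ and $d\ge2$, one gets $t^{d-1}e^{-t^2/2}\le t\,I_{d-1}$; and since $r^{d-2}\le r^{d-1}/t$ for $r\ge t$, one gets $I_{d-2}\le I_{d-1}/t$. For $t\ge1$ these combine to
\[
I_d\le t\,I_{d-1}+(d-1)\tfrac{1}{t}I_{d-1}\le(t+d-1)I_{d-1}\le(t+d)I_{d-1},
\]
closing that range.

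The main obstacle is the small-$t$ regime, where $I_{d-2}\le I_{d-1}/t$ is too lossy (the factor $(d-1)/t$ blows up) and the integrand of $d\,I_{d-1}-(d-1)I_{d-2}$ fails to be sign-definite. I would resolve this by monotonicity of the ratio $\rho(t)\coloneqq I_d/I_{d-1}$. Differentiating under the integral sign gives $I_d'(t)=-t^d e^{-t^2/2}$ and $I_{d-1}'(t)=-t^{d-1}e^{-t^2/2}$, so
\[
\rho'(t)=\frac{t^{d-1}e^{-t^2/2}\,(I_d-t\,I_{d-1})}{I_{d-1}^2},
\]
and $I_d-t\,I_{d-1}=\int_t^\infty(r-t)\,r^{d-1}e^{-r^2/2}\,dr\ge0$ shows $\rho$ is nondecreasing. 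Since the large-$t$ estimate evaluated at $t=1$ already gives $\rho(1)\le d$, for every $t\in[0,1]$ we obtain $\rho(t)\le\rho(1)\le d\le t+d$. Combined with the $t\ge1$ case, this proves $I_d\le(t+d)I_{d-1}$ for all $t\ge0$, and unwinding the reduction completes the proof. I expect the monotonicity step to be the only genuinely delicate point, since it is what converts the otherwise lossy bounds near $t=0$ into a uniform estimate.
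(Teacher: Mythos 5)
Your proof is correct, and while its skeleton matches the paper's, it contains a genuinely new step that the paper's printed argument actually needs. Both proofs reduce to $\sigma=1$ by the substitution $u=w/\sigma$, pass to polar coordinates so that the claim becomes $I_d \leq (\lambda+d)\,I_{d-1}$ with $I_k \coloneqq \int_\lambda^\infty r^k e^{-r^2/2}\,dr$, integrate by parts, and use the pointwise bound $r^{d-1}\geq \lambda^{d-2}r$ for $r\geq\lambda$, $d\geq 2$, to control the boundary term via $I_{d-1}\geq \lambda^{d-2}e^{-\lambda^2/2}$. The difference lies in the recursion: the paper asserts $I_d = \lambda^{d-1}e^{-\lambda^2/2} + d\,I_{d-1}$, which is not the correct integration by parts --- it is the recursion for $I_{d+1}$ (used correctly in Proposition~\ref{prop: entropy tail bounds}) miscopied down one degree; the true identity is $I_d = \lambda^{d-1}e^{-\lambda^2/2} + (d-1)\,I_{d-2}$, exactly as you write. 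With the corrected identity, the paper's bounding scheme alone yields only $I_d \leq \bigl(\lambda + \tfrac{d-1}{\lambda}\bigr)I_{d-1}$, which exceeds $\lambda+d$ whenever $\lambda < (d-1)/d$, so the printed argument does not close the small-$\lambda$ regime. Your monotonicity step is precisely the missing ingredient: $\rho(t)\coloneqq I_d/I_{d-1}$ satisfies $\rho'(t) = t^{d-1}e^{-t^2/2}\,(I_d - t\,I_{d-1})/I_{d-1}^2 \geq 0$ since $I_d - t\,I_{d-1} = \int_t^\infty (r-t)\,r^{d-1}e^{-r^2/2}\,dr \geq 0$, so the bound $\rho(1)\leq d$ obtained from your large-$t$ estimates propagates down to give $\rho(t)\leq d \leq t+d$ on all of $[0,1]$. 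In short, your proposal is not merely a variant but a complete, repaired version of the paper's proof. It may also help your intuition to note that $\rho(t) = \mathbb{E}\bigl[\,|\mathcal{Z}|\,\mathbbm{1}_{\{|\mathcal{Z}|>t\}}\bigr]/\mathbb{P}(|\mathcal{Z}|>t)$ is the conditional mean of $|\mathcal{Z}|$ given $\{|\mathcal{Z}|>t\}$, which is nondecreasing in $t$ for any distribution --- your derivative computation is the analytic shadow of that general fact.
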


\begin{proof}
Again we reduce to the case that $\sigma = 1$. Substituting $u = w/\sigma$ gives
\begin{align}
    \int_{B_\lambda^c} \varphi_\sigma(w) |w| dw 
        &=
            \sigma \int_{B_{\lambda/\sigma}^c} \varphi(u) |u| du
                \\
        &\leq
            \left( \frac \lambda \sigma + d \right) \sigma \mathbb{P}(|\mathcal{Z}| > \lambda/\sigma)
                \\
        &=
            ( \lambda + d \sigma) \mathbb{P}(|W| > \lambda),
\end{align}
where we have used \eqref{eq: appendix regular norm bound} for the inequality and $\sigma \mathcal{Z}$ being equidistributed with $W$ for the last equality.  We now proceed in the reduced case.
By change of coordinates and integration by parts,
\begin{align}
    \frac{ \int_{B_\lambda^c} \varphi(w) |w| dw}{ \mathbb{P}(|\mathcal{Z}| > \lambda)}
        &=
            \frac{\int_\lambda^\infty r^{d} e^{-r^2/2} dr }{\int_\lambda^\infty r^{d-1} e^{-r^2/2} dr }
                \\
        &=
            \frac{ - r^{d-1} e^{-\lambda^2/2} \big|_\lambda^\infty  + d\int_\lambda^\infty r^{d-1} e^{-r^2/2} dr }{\int_\lambda^\infty r^{d-1} e^{-r^2/2} dr }
                \\
        &=
            \frac{ \lambda^{d-1} e^{-\lambda^2/2}}{\int_\lambda^\infty r^{d-1} e^{-r^2/2} dr  } + d
\end{align}
Using $r \geq \lambda$, for $d\geq 2$,
\begin{align}
    \int_\lambda^\infty r^{d-1} e^{-r^2/2} dr \geq \lambda^{d-2} \int_\lambda^\infty r e^{-r^2/2} dr = \lambda^{d-2} e^{-\lambda^2/2}.
\end{align}
Thus,
\begin{align}
    \frac{ \int_{B_\lambda^c} \varphi(w) |w| dw}{ \mathbb{P}(|\mathcal{Z}| > \lambda)} \leq \lambda + d.
\end{align}
\end{proof}

\begin{prop} \label{prop: Gaussian density versus tail}
    When $d=1$, so that $W \sim \varphi_\sigma(w) = e^{-x^2/2 \sigma}/\sqrt{2 \pi \sigma^2}$, we have the following bounds for $\lambda > 0$,
    \begin{align}
        \sigma^2 \varphi_\sigma(\lambda) = \int_\lambda^\infty w \varphi_\sigma(w) dw 
            &\leq 
                \int_\lambda^\infty \varphi_\sigma(w) dw  \left( \lambda + \frac {\sigma^2} \lambda \right) \label{eq: Tail lower bound d=1}
            \\
        - \int_{\lambda}^\infty \varphi_\sigma(w) \log \varphi_\sigma (w) dw
            &\leq
                \left( \lambda^2/\sigma^2 +2 + \log (\sqrt{2 \pi} \sigma) \right) \int_\lambda^\infty \varphi_\sigma (w) dw   \label{eq: entropy tail d=1}
    \end{align}
\end{prop}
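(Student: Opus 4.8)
The plan is to derive both bounds from a single Gaussian identity, namely $\varphi_\sigma'(w) = -\frac{w}{\sigma^2}\varphi_\sigma(w)$, equivalently $w\,\varphi_\sigma(w) = -\sigma^2\varphi_\sigma'(w)$, together with one Mills-ratio type tail estimate. First, the claimed equality in \eqref{eq: Tail lower bound d=1} is immediate from this identity and the fundamental theorem of calculus,
\begin{align}
    \int_\lambda^\infty w\,\varphi_\sigma(w)\,dw = -\sigma^2\int_\lambda^\infty \varphi_\sigma'(w)\,dw = \sigma^2\varphi_\sigma(\lambda),
\end{align}
since $\varphi_\sigma(w)\to 0$ as $w\to\infty$.

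The main obstacle is the inequality in \eqref{eq: Tail lower bound d=1}, which is a lower bound of the Gaussian tail by its density at $\lambda$. The cleanest route I would take is to write $\varphi_\sigma(w) = \frac{\sigma^2}{w}\left(-\varphi_\sigma'(w)\right)$ and integrate by parts, obtaining
\begin{align}
    \int_\lambda^\infty \varphi_\sigma(w)\,dw = \frac{\sigma^2}{\lambda}\varphi_\sigma(\lambda) - \sigma^2\int_\lambda^\infty \frac{\varphi_\sigma(w)}{w^2}\,dw.
\end{align}
Bounding $\frac{1}{w^2}\le\frac{1}{\lambda^2}$ on the domain of integration and then solving the resulting self-referential inequality for $\int_\lambda^\infty\varphi_\sigma$ yields
\begin{align}
    \int_\lambda^\infty \varphi_\sigma(w)\,dw \geq \frac{\sigma^2\lambda}{\lambda^2+\sigma^2}\,\varphi_\sigma(\lambda),
\end{align}
which rearranges exactly to $\sigma^2\varphi_\sigma(\lambda)\le\left(\lambda+\frac{\sigma^2}{\lambda}\right)\int_\lambda^\infty\varphi_\sigma(w)\,dw$. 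This tail estimate is the crux of the proposition; the remaining steps are routine.

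For the entropy bound \eqref{eq: entropy tail d=1}, I would expand $-\log\varphi_\sigma(w) = \frac{w^2}{2\sigma^2} + \log(\sqrt{2\pi}\sigma)$, so that
\begin{align}
    -\int_\lambda^\infty \varphi_\sigma(w)\log\varphi_\sigma(w)\,dw = \frac{1}{2\sigma^2}\int_\lambda^\infty w^2\varphi_\sigma(w)\,dw + \log(\sqrt{2\pi}\sigma)\int_\lambda^\infty\varphi_\sigma(w)\,dw.
\end{align}
A second integration by parts, using $w^2\varphi_\sigma(w) = -\sigma^2 w\,\varphi_\sigma'(w)$, gives $\int_\lambda^\infty w^2\varphi_\sigma(w)\,dw = \sigma^2\lambda\,\varphi_\sigma(\lambda) + \sigma^2\int_\lambda^\infty\varphi_\sigma(w)\,dw$. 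Substituting the tail bound from the previous step to replace $\varphi_\sigma(\lambda)$ by a multiple of $\int_\lambda^\infty\varphi_\sigma$ collapses everything into a single multiple of $\int_\lambda^\infty\varphi_\sigma(w)\,dw$, with coefficient $\frac{\lambda^2}{2\sigma^2}+1+\log(\sqrt{2\pi}\sigma)$. Since this is dominated by $\frac{\lambda^2}{\sigma^2}+2+\log(\sqrt{2\pi}\sigma)$, the stated bound follows. I anticipate no analytic difficulty in this last step beyond the two integrations by parts and the direct application of the tail estimate.
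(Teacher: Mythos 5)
Your proposal is correct, and for the crux of the proposition --- the Mills-ratio lower bound \eqref{eq: Tail lower bound d=1} --- you take a genuinely different route from the paper. The paper first reduces to $\sigma=1$ by the substitution $u=w/\sigma$ and then verifies the tail bound non-constructively: it introduces the auxiliary function $g(\lambda)=\int_\lambda^\infty\varphi(w)\,dw-\frac{\lambda}{\lambda^2+1}\varphi(\lambda)$ (written in the paper with an apparent typo, $\lambda^2$ for $\lambda$ in the numerator), checks that $g(\lambda)\to 0$ as $\lambda\to\infty$ and that $g'(\lambda)=-2\varphi(\lambda)/(\lambda^2+1)^2<0$, and concludes $g>0$, which is equivalent to the claim. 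Your argument --- writing $\varphi_\sigma(w)=\frac{\sigma^2}{w}\left(-\varphi_\sigma'(w)\right)$, integrating by parts, bounding $w^{-2}\le\lambda^{-2}$ on the domain, and solving the resulting self-referential inequality --- derives the same bound constructively: it explains where the constant $\frac{\sigma^2\lambda}{\lambda^2+\sigma^2}$ comes from (the first step of the Mills-ratio expansion) and works directly for general $\sigma$ without the scaling reduction, whereas the paper's derivative check is shorter but requires guessing $g$ in advance. For the entropy bound \eqref{eq: entropy tail d=1} the two arguments essentially coincide (expand $-\log\varphi_\sigma$, integrate $\int_\lambda^\infty w^2\varphi_\sigma$ by parts, insert the tail estimate), except that you retain the correct factor $\frac{w^2}{2\sigma^2}$ and so reach the sharper intermediate coefficient $\frac{\lambda^2}{2\sigma^2}+1+\log(\sqrt{2\pi}\sigma)$ before relaxing to the stated one, while the paper silently uses $w^2/2\le w^2$ at the corresponding display (presented there as an equality) and lands directly on $\lambda^2+2+\log\sqrt{2\pi}$ in the $\sigma=1$ normalization. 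Both routes are elementary and sound; yours is marginally tighter and more self-explanatory.
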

\begin{proof}
The inequality \eqref{eq: Tail lower bound d=1} is standard.  The inequality can be reduced to the $\sigma =1$ by applying \eqref{eq: Tail lower bound d=1} after change of variables $u= w/\sigma$. The proof then follows from the $\sigma = 1$ case. Recall $\varphi'(w) = w \varphi(w)$ and observe that the function 
\begin{align}
    g(\lambda) = \int_\lambda^\infty \varphi(w) dw - \frac{\lambda^2}{\lambda^2 +1} \varphi(\lambda)
\end{align}
satisfies $g(0) > 0$, $\lim_{\lambda \to \infty} g(\lambda) = 0$, and has derivative 
\begin{align}
    \frac{-2 \varphi(\lambda)}{(\lambda^2 + 1)^2} < 0,
\end{align}
so that $g(\lambda) >0$ which is equivalent to \eqref{eq: Tail lower bound d=1}.  To prove
\eqref{eq: entropy tail d=1}, we again reduce to the $\sigma = 1$ case by the substitution $u = w/\sigma$.  Then compute directly using integration by parts,
\begin{align}
    - \int_{\lambda}^\infty \varphi(w) \log \varphi(w) dw
        &=
             \int_\lambda^\infty \log \sqrt{2 \pi} \varphi(w) dw + \int_\lambda^\infty w^2 \varphi(w) dw
                \\
        &=
             \log \sqrt{2 \pi} \int_\lambda^\infty \varphi(w) dw  +  \lambda \varphi(\lambda) + \int_\lambda^\infty \varphi(w) dw
                \\
        &\leq
            \left(\lambda^2 + 2 + \log \sqrt{2\pi} \right) \int_\lambda^\infty \varphi(w) dw.
\end{align}
The inequality is an application of \eqref{eq: Tail lower bound d=1}.
\end{proof}

\begin{prop} \label{prop: Appendix d=1 bounds}
When $X$ and $Z$ satisfy the conditions of section \ref{sec: Upper bounds} for the one dimensional Gaussian $W \sim \varphi_\sigma(w) = e^{-w^2/2 \sigma}/\sqrt{2 \pi \sigma^2}$,
\begin{align}
H(X|Z) \leq (M-1) \mathbb{P}(|W| \leq \tau \lambda) + J(\varphi) \mathbb{P}(|W| \geq \lambda)
\end{align}
with 
\begin{align}
   J(\varphi) =     \log \left[ e^{(\lambda/\sigma)^2 + M+2} \tau M \sigma \sqrt{2 \pi}  \left( 1 + \tau + \tau^2+ \frac{\tau^2\sigma} {\lambda^2} \right)  \left( (2 \pi \sigma^2)^{- \frac 1 2} +  \frac{3}{2 \lambda }  \right) \right]
\end{align}
\end{prop}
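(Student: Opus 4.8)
The plan is to run the proof of Proposition \ref{prop: Gaussian example for upper bounds} essentially verbatim, replacing the two $d\geq 2$ Gaussian estimates (Propositions \ref{prop: entropy tail bounds} and \ref{prop: appendix regular norm bound}) by their one-dimensional analogues collected in Proposition \ref{prop: Gaussian density versus tail}. As in the proof of Theorem \ref{thm: Fano Sharpening}, I would begin from
\[
H(X|Z) = \sum_i p_i \int \varphi(w)\log\left(1 + \frac{\sum_{j\neq i}p_j\,\varphi(T_{ji}(w))\det(T_{ji}'(w))}{p_i\,\varphi(w)}\right)dw,
\]
apply Jensen's inequality to move the weights $p_i$ inside the logarithm, and split the resulting integral over $B_\lambda$ and its complement $B_\lambda^c$.

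On $B_\lambda$ nothing is dimension-specific: using $\log(1+x)\leq x$, the ball inclusions of Proposition \ref{prop: bounds on T_ij derivative}, and the fact that $s(x)=\varphi\ast\mathbbm{1}_{B_{\lambda\tau}}(x)$ is spherically symmetric and unimodal (hence maximized at the origin), one recovers the contribution $(M-1)\mathbb{P}(|W|\leq\lambda\tau) + M\,\mathbb{P}(|W|>\lambda)$ exactly as before. On $B_\lambda^c$ I would again write the integrand as $\log\big(\sum_j p_j\sum_i \varphi(T_{ji}(w))\det(T_{ji}'(w))\big) - \log\varphi(w)$ and estimate the two summands separately. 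For the first I would invoke Corollary \ref{cor: bounds for the sums} with $\varepsilon=\lambda$ and $d=1$ — here $\omega_1^{-1}=\tfrac12$ produces precisely the $\tfrac{3}{2\lambda}$ in $J(\varphi)$ — then factor the bound as $\log[\tau M(\|\varphi\|_\infty+\tfrac{3}{2\lambda})] + \log(1+\tau+\tau^2|w|/\lambda)$ and apply Jensen to the concave map $\log$, so that the $|w|$-term is replaced by the conditional mean $\int_{B_\lambda^c}\varphi(w)|w|\,dw/\mathbb{P}(|W|>\lambda)$. Bound \eqref{eq: Tail lower bound d=1} controls this by $\lambda+\sigma^2/\lambda$, turning the factor into $1+\tau+\tau^2+\tau^2\sigma^2/\lambda^2$, which (up to notation) is the $|w|$-dependent factor appearing in $J(\varphi)$.

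For the second summand $-\int_{B_\lambda^c}\varphi\log\varphi$ I would use \eqref{eq: entropy tail d=1}, observing that by symmetry $\int_{B_\lambda^c}=2\int_\lambda^\infty$ while $\mathbb{P}(|W|>\lambda)=2\int_\lambda^\infty\varphi_\sigma$, so the factors of two cancel and the estimate reads $(\lambda^2/\sigma^2 + 2 + \log\sqrt{2\pi}\sigma)\,\mathbb{P}(|W|>\lambda)$. Collecting the three contributions, rewriting $M$, $\lambda^2/\sigma^2$, $2$ and $\log\sqrt{2\pi}\sigma$ as logarithms, merging them with the logarithm coming from Corollary \ref{cor: bounds for the sums}, and substituting $\|\varphi\|_\infty=(2\pi\sigma^2)^{-1/2}$, assembles exactly the constant $J(\varphi)$ multiplying $\mathbb{P}(|W|>\lambda)$, with the leftover $(M-1)\mathbb{P}(|W|\leq\lambda\tau)$ carried over from the $B_\lambda$ piece.

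The only genuinely new ingredient relative to the $d\geq 2$ argument is the pair of one-dimensional tail bounds, and there is no real obstacle once Proposition \ref{prop: Gaussian density versus tail} is available. The single point demanding care is that the $d\geq 2$ inequality $\int_\lambda^\infty r^{d-1}e^{-r^2/2}\,dr\geq\lambda^{d-2}e^{-\lambda^2/2}$ degenerates at $d=1$ (it would force a spurious $\lambda^{-1}$), which is exactly why the dedicated estimates \eqref{eq: Tail lower bound d=1} and \eqref{eq: entropy tail d=1} — proved via the auxiliary monotone function $g$ and integration by parts — must stand in for Propositions \ref{prop: entropy tail bounds} and \ref{prop: appendix regular norm bound} here.
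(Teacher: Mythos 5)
Your proposal is correct and follows the paper's own proof essentially verbatim: the same Jensen-plus-split at $B_\lambda$ reusing the $(M-1)\mathbb{P}(|W|\leq\lambda\tau)+M\,\mathbb{P}(|W|>\lambda)$ bound from Theorem \ref{thm: Fano Sharpening}, Corollary \ref{cor: bounds for the sums} with $\varepsilon=\lambda$ and $\omega_1^{-1}=\tfrac{1}{2}$, Jensen applied to the logarithm, and the one-dimensional estimates \eqref{eq: Tail lower bound d=1} and \eqref{eq: entropy tail d=1} standing in for Propositions \ref{prop: entropy tail bounds} and \ref{prop: appendix regular norm bound}. Your two side remarks — the factor-of-two cancellation between $\int_{B_\lambda^c}$ and $\mathbb{P}(|W|>\lambda)$ by symmetry, and the degeneration of the inequality $\int_\lambda^\infty r^{d-1}e^{-r^2/2}\,dr\geq\lambda^{d-2}e^{-\lambda^2/2}$ at $d=1$ as the reason a separate argument is needed — correctly make explicit details the paper leaves implicit.
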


\begin{proof}
As in the proof of Theorem \ref{thm: Fano Sharpening},
\begin{align}
    H(X|Z) 
        =& \sum_i p_i \int \varphi(w) \log \left( 1 + \frac{ \sum_{j \neq i} p_j \varphi(T_{ji}(w)) det(T_{ji}'(w)) }{p_i \varphi(w)} \right) dw
            \\
        \leq&
           \int_{B_\lambda^c} \varphi(w) \log \left( 1 + \frac{ \sum_{j} p_j  \sum_{i \neq j} \varphi(T_{ji}(w)) det(T_{ji}'(w)) }{\varphi(w)} \right) dw 
                \\
        &+  \int_{B_\lambda} \varphi(w) \log \left( 1 + \frac{ \sum_{j} p_j  \sum_{i \neq j} \varphi(T_{ji}(w)) det(T_{ji}'(w)) }{\varphi(w)} \right) dw.
\end{align}
with
\begin{align}
     \int_{B_\lambda} \varphi(w) \log &\left( 1 + \frac{ \sum_{j} p_j  \sum_{i \neq j} \varphi(T_{ji}(w)) det(T_{ji}'(w)) }{\varphi(w)} \right) dw 
        \\
        &\leq 
        (M-1) \mathbb{P}(|W| \leq \lambda \tau) + M \mathbb{P}(|W| > \lambda). \label{eq: Gaussian outside the ball piece d=1}
\end{align}

Splitting the integral,
\begin{align}
    \int_{B_\lambda^c} &\varphi(w) \log \left( 1 + \frac{ \sum_j p_j \sum_{i\neq j} \varphi(T_{ji}(w)) det(T_{ji}'(w))}{ \varphi(w)} \right) dw
        \\
        &=
            \int_{B_\lambda^c} \varphi(w) \log \left(   \sum_j p_j \sum_{i} \varphi(T_{ji}(w)) det(T_{ji}'(w)) \right) dw - \int_{B_\lambda^c} \varphi(w) \log \varphi(w) dw.
\end{align}

Using Corollary \ref{cor: bounds for the sums},  Jensen's inequality, and \eqref{eq: Tail lower bound d=1},
\begin{align}
    \int_{B_\lambda^c} \varphi(w) \log &\left(   \sum_j p_j \sum_{i} \varphi(T_{ji}(w)) det(T_{ji}'(w)) \right) dw 
    \\
    &\leq
            \int_{B_\lambda^c} \varphi(w) \log \left[ \tau M \left( 1 + \tau + \frac{\tau^2 |w|}{\lambda} \right) \left( \|\varphi\|_\infty + \left( \frac 3 {\lambda} \right) \frac 1 2 \right) \right] 
                \\
    &\leq 
        \mathbb{P}(|W| > \lambda) \log \left[ \tau M \left( 1 + \tau + \frac{\tau^2 \int \mathbbm{1}_{\{|w| > \lambda \}}\varphi(w) |w|dw}{\mathbb{P}(|W|> \lambda) \lambda} \right)  \left( \frac{1}{\sqrt{2 \pi} \sigma} +  \frac{3}{2 \lambda} \right) \right]
            \\
    &\leq
        \mathbb{P}(|W| > \lambda) \log \left[ \tau M \left( 1 + \tau + \frac{\tau^2 (\lambda + \frac \sigma \lambda)}{ \lambda} \right)  \left( \frac 1 {\sqrt{2 \pi} \sigma} + \frac{3}{2 \lambda} \right) \right]. \label{eq: the sum piece of the complement bound d =1}
\end{align}

Then applying \eqref{eq: entropy tail d=1},
\begin{align}
   - \int_{B_\lambda^c}  \varphi(w) \log \varphi(w) dw \leq \left( (\lambda/\sigma)^2 + 2 + \log \sqrt{2 \pi}\sigma   \right) \hspace{.5 mm} \mathbb{P}(|W| > \lambda) \label{eq: entropy tail bound gaussian d =1}
\end{align}

Combining \ref{eq: Gaussian outside the ball piece d=1}, \ref{eq: the sum piece of the complement bound d =1}, and  \ref{eq: entropy tail bound gaussian d =1} we have
\begin{align}
   H(X|Z) \leq (M-1) \mathbb{P}(|W| \leq \lambda \tau)  + J(\varphi) \mathbb{P}(|W| > \lambda)
\end{align}
with 
\begin{align}
    J(\varphi) =     \log \left[ e^{(\lambda/\sigma)^2 + M+2} \tau M \sigma \sqrt{2 \pi}  \left( 1 + \tau + \frac{\tau^2 (\lambda + \frac \sigma \lambda)}{ \lambda} \right)  \left( (2 \pi \sigma^2)^{- \frac 1 2} +  \frac{3}{2 \lambda }  \right) \right]
\end{align}

\end{proof}

\end{document}